\pgfplotsset{compat=1.5}
\newcommand{\eat}[1]{}					
\newcommand{\smallsection}[1]{\vspace{2mm}\noindent\textbf{#1.}}	
\newtheorem{theorem}{Theorem}          	
\newaliascnt{lemma}{theorem}				
\newaliascnt{conjecture}{theorem}			
\newaliascnt{remark}{theorem}				
\newaliascnt{corollary}{theorem}			
\newtheorem{corollary}[corollary]{Corollary}      
\newaliascnt{definition}{theorem}			
\newtheorem{definition}[definition]{Definition}    
\newaliascnt{proposition}{theorem}			
\newtheorem{proposition}[proposition]{Proposition}  
\newaliascnt{example}{theorem}			
\newtheorem{example}[example]{Example}  	
\newaliascnt{problem}{theorem}			
\newtheorem{problem}[problem]{Problem}  	
\let\oldnl\nl
\newcommand{\nonl}{\renewcommand{\nl}{\let\nl\oldnl}}
\definecolor{green1}{RGB}{0, 102, 0}
\definecolor{green2}{RGB}{0, 204, 0}
\definecolor{red1}{RGB}{255, 99, 71}
\definecolor{blue1}{RGB}{0, 0, 204}
\def\Autoref#1{%
  \begingroup
  \edef\reserved@a{\cpttrimspaces{#1}}%
  \ifcsndefTF{r@#1}{%
    \xaftercsname{\expandafter\testreftype\@fourthoffive}
      {r@\reserved@a}.\\{#1}%
  }{%
    \ref{#1}%
  }%
  \endgroup
}
\def\testreftype#1.#2\\#3{%
  \ifcsndefTF{#1autorefname}{%
    \def\reserved@a##1##2\@nil{%
      \uppercase{\def\ref@name{##1}}%
      \csn@edef{#1autorefname}{\ref@name##2}%
      \autoref{#3}%
    }%
    \reserved@a#1\@nil
  }{%
    \autoref{#3}%
  }%
}
\begin{document}

\newcommand{\anja}[1]{\textcolor{ForestGreen}{ANJA: #1}}
\newcommand{\besa}[1]{\textcolor{orange}{BESA: #1}}
\newcommand{\tim}[1]{\textcolor{blue}{TIM: #1}}

\title{Fault-Tolerant Entity Resolution with the Crowd}

\numberofauthors{1}
\author{
\centering
\begin{tabular}{ccc}
Anja Gruenheid, Besmira Nushi & Wolfgang Gatterbauer & Tim Kraska\\
Donald Kossmann & {\large Tepper School of Business} & {\large Brown University} \\
{\large Systems Group} & {\large Carnegie Mellon University} & tim\_kraska@brown.edu\\
{\large Dep. of Computer Science} & gatt@andrew.cmu.edu & \\
{\large ETH Zurich} & & \\
\{agruen, nushib, donaldk\}@inf.ethz.ch &  & \\
\end{tabular}
}

\maketitle

\begin{abstract}
In recent years, crowdsourcing is increasingly applied as a means to enhance data quality.
Although the crowd generates insightful information especially for complex problems such as entity resolution (ER), the output quality of crowd workers is often noisy.
That is, workers may unintentionally generate false or contradicting data even for simple tasks.
The challenge that we address in this paper is how to minimize the cost for task requesters while maximizing ER result quality under the assumption of unreliable input from the crowd.
For that purpose, we first establish how to deduce a consistent ER solution from noisy worker answers as part of the {\it data interpretation} problem.
We then focus on the {\it next-crowdsource} problem which is to find the next task that maximizes the information gain of the ER result for the minimal additional cost.
We compare our robust data interpretation strategies to alternative state-of-the-art approaches that do not incorporate the notion of {\em fault-tolerance}, i.e.,~the robustness to noise.
In our experimental evaluation we show that our approaches yield a quality improvement of at least 20\% for two real-world datasets.
Furthermore, we examine task-to-worker assignment strategies as well as task parallelization techniques in terms of their cost and quality trade-offs in this paper.
Based on both synthetic and crowdsourced datasets, we then draw conclusions on how to minimize cost while maintaining high quality ER results.
\end{abstract}



\section{Introduction}\label{sec:introduction}

Data cleaning and data integration are integral techniques for analytical and personalized data systems.
Many efficient automated mechanisms addressing both of these problems have been integrated into large-scale systems over the last decades.
Recently, several studies have shown that crowdsourcing can produce higher quality solutions for a subset of the data integration tasks \cite{DBLP:conf/nips/GomesWKP11,DBLP:conf/sigmod/WangLKFF13}.
For complex problems such as entity resolution (ER) or picture classification crowdsourcing has been established as an alternative to automated techniques.
In fact, approaches that prune the search space with automated ER mechanisms and then enhance data quality through crowdsourcing are common for a large number of high profile ER systems such as the Google Knowledge Graph \cite{singhal2012introducing} or the Facebook Entities Graph \cite{fbentitiesgraph}.
Even though the overall result quality of ER solutions generally benefits from human input, it can also be observed that crowd workers may make mistakes when executing tasks.
These mistakes may be the result of carelessness, ambiguities in the task description, or even malicious behavior.
More specifically, it is common to have crowd error rates as high as 30\% \cite{Ipeirotis:2010:QMA:1837885.1837906} on well-established crowdsourcing platforms.
\Autoref{fig:animals} shows an example for a task which may mislead humans and how these mistakes could be avoided intuitively. 

\begin{example}[Animal Classification]\label{ex:animals}
Crowd workers are given the task to determine which animals belong to the same breed, i.e.,~$r_1$ and $r_2$ are seals, $r_3$ and $r_4$ show sea lions.
However, we observe that distinguishing the full-grown animals (records $r_2$ and $r_4$) is easier than telling the baby animals (records $r_1$ and $r_3$) apart because they have similar appearance characteristics.
\end{example}

\begin{figure}
	\centering
	\includegraphics[width=0.65\columnwidth]{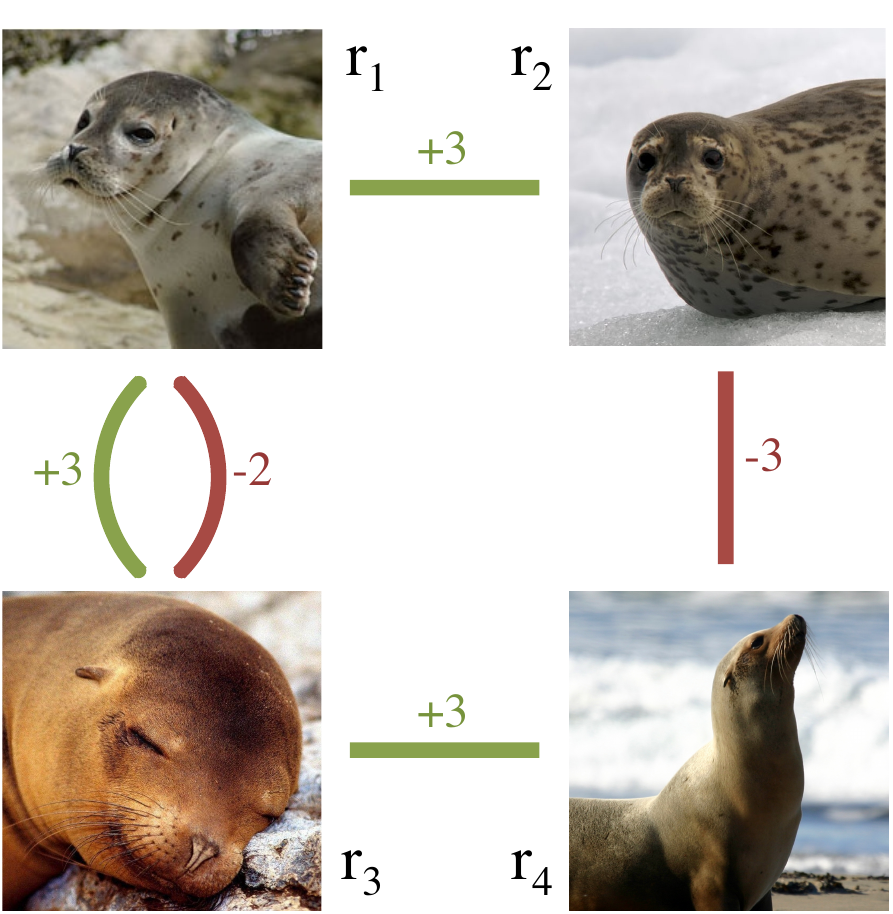}
	\caption{Animal classification problem.}\label{fig:animals}
\end{figure}

In this paper, we study the problem of crowdsourced entity resolution with potentially erroneous input by crowd workers which we refer to as {\it fault-tolerant} entity resolution.
We choose fault-tolerant as naming convention because crowd workers make faulty decisions that have to be tolerated by the ER engine.
The terminology here indicates that we take all available information from the crowd into consideration when making robust decisions about the ER solution.

As worker input is usually not for free, our goal is to minimize the monetary cost while maximizing the overall ER result quality. 
Our solution consists of two components: 
First, we address {\em data interpretation}, i.e.,~how the answers provided by crowd workers lead to an ER solution, and discuss how it can be efficiently implemented for ER computation with potentially erroneous crowd answers. 
Second, we focus on minimizing the cost that crowdsourcing incurs which is also called the {\em next-crowdsource} problem.

\smallsection{Data Interpretation}
Prior work in crowdsourced entity resolution \cite{DBLP:journals/pvldb/VesdapuntBD14,DBLP:conf/sigmod/WangLKFF13,DBLP:journals/pvldb/WhangLG13} has focused on handling data interpretation and crowd worker quality as separate problems.
For instance, this line of work proposed to use qualification tasks \cite{Kazai:2011:WTP:2063576.2063860} to filter out malicious or low-performing workers or to use replicated tasks with quorum votes (e.g.,~majority of answers) to determine the correct answers in ambiguous cases \cite{Nowak:2010:RAV:1743384.1743478}. 
If data quality is thus ensured, the ER algorithm can be designed under the assumption that there are no conflicts in the dataset.
In contrast, we argue in our work that there are numerous sources for erroneous information from the crowd that a good ER algorithm needs to interpret correctly.
Consider the following example:
If workers misclassify the baby sea lion and seal ($r_1$ and $r_3$) with three positive and two negative votes and correctly identify the sea lion ($r_1$ and $r_2$) and seal pairs ($r_3$ and $r_4$) then all of these animals would be wrongly classified as the same species.
Under the assumption of a majority-based decision scheme, even close decisions such as [$r_1$,~$r_3$] are not questioned although the indecision of the crowd clearly indicates that this relationship is uncertain and should be explored further.

To overcome these mistakes, we propose a graph-based path ER technique to identify and appropriately interpret noisy data.
Our technique considers both positive and negative (indirect) crowd answers between two records and provides provably better quality than majority-based approaches that only value the dominant decision.
In addition, we discuss how to integrate available worker input efficiently into the decision making process, allowing for a consistent ER solution at any point in time.

\smallsection{Next-Crowdsource Problem}
In addition to interpreting noisy data, we look at ways to minimize the cost that ER with erroneous crowd information incurs.
In that context, we focus on (a) crowd task ordering and (b) task parallelization strategies.
Generally, crowdsourced ER is executed on platforms such as Amazon Mechanical Turk \cite{amt} which allow task requesters to employ crowd workers for monetary compensation.
These task providers therefore need to devise effective task assignment strategies that minimize the overall monetary cost by maximizing the information gain per cost unit.
To address the challenge of task ordering, we develop three different ordering strategies that can be employed in the context of crowdsourced ER and examine them with synthetic and real-world datasets.
We furthermore explore task parallelization in the second part of this work.
It is a mechanism that tries to minimize the end-to-end runtime of crowdsourced ER by publishing multiple tasks at the same time on the platform.
In our discussion of parallelization strategies, we show the trade-off between the runtime acceleration of crowdsourced ER and its monetary cost and output quality.
We make the following contributions:
\begin{itemize}
  \item {\bf Fault-tolerant entity resolution}. We formally introduce fault-tolerant decision functions for ER that decide whether two records belong to the same entity or not. To handle unreliable information, we devise a path-based graph interpretation mechanism and define a clustering algorithm that computes the ER solution based on noisy pair-wise decisions to address the data interpretation problem.
  \item {\bf Cost and quality optimization}. Given that the crowd is expensive to employ on a large scale, we discuss task ordering and parallelization strategies and the impact of these mechanisms on data quality and incurred cost. These mechanisms are essential to solve the next-crowdsource problem.
\end{itemize}

This paper is structured as follows:
We first give an overview of the problems that this work addresses and outline how our work can be classified in the context of already existing work in \Autoref{sec:problem}.
We then discuss our solution to the data interpretation problem (\Autoref{sec:faulttolerant} and \Autoref{sec:er}) and the next-crowdsource problem (\Autoref{sec:next}).
Our solutions are evaluated through experiments with datasets obtained from real-world as well as synthetic crowdsourcing setups in \Autoref{sec:experiments}.

\section{Overview}\label{sec:problem}
In this work, we discuss mechanisms to enable entity resolution with imperfect answers from crowd workers.
In order to accurately capture the information provided by them, we need an efficient data structure that allows us to encode both their positive and negative signals. 
For this purpose, we introduce the notion of a {\em votes graph} that stores this information and will later allow us to efficiently interpret the crowd worker's answers.
An example for a votes graph can be seen in \Autoref{fig:animals}.

\begin{definition}[Votes Graph]
A votes graph $G = (R,E)$ is a weighted undirected graph that consists of records $R$ as its nodes and a set of edges $E$ which determine the direct relationship between any records $r_i$ and $r_j$, with $r_i$,$r_j$ $\in$ $R$.
For each record pair [$r_i$,$r_j$] there exists at most one positive edge $p_{ij}$ and one negative edge $n_{ij}$.
They correspond to the positive and negative crowd answers for this record pair.
That is, while $p_{ij}$ corresponds to the total number of votes that say that $r_i$ and $r_j$ belong to the same real-world entity, $n_{ij}$ are the votes against it.
\end{definition}
For simplicity, we assume that each vote from each data source (i.e.,~a specific worker) has the same weight.
However, note that weighting schemes for (un-)reliable crowd workers can be easily integrated into the votes graph:
For example if the system assumes a crowd worker to have complete knowledge, it can transform that worker's votes to the maximal edge weight.
Other transformations can be computed analogously.

\subsection{Framework}
There are two steps that are integral to automated ER systems on a (votes) graph $G$ which are essential to understanding the general framework for crowdsourced ER.
First, they establish all pair-wise similarities for any two records $r_i$ and $r_j$.
Second, they determine a clustering $C$ that optimizes the record-to-entity assignment according to some objective function, for example transitive closure or penalty optimization.
We will go into detail on clustering alternatives later in this work, please refer to \Autoref{sec:er} for details.
This well-established pipeline for automated ER is similar but not equivalent to the pipeline for crowdsourced ER.
Do understand the core differences, remember that the comparison of records in a crowdsourced environment raises two questions:
\begin{enumerate} 
  \item Given that each additional edge incurs a monetary cost, which edges do I really need to know about?
  \item And if $G$ is incomplete and there exist no direct votes between $r_i$ and $r_j$, how can we estimate the similarity between them?
\end{enumerate}
The first question is raised due to budgetary limitations that are inherent to crowdsourcing applications.
Through these limitations and the choice of observed edges, other record pairs may only be known indirectly or not at all.
To predict an accurate record-to-entity assignment, it is therefore necessary to estimate their relationship.
ER solutions that are designed for the crowd therefore focus on two slightly different core problems than traditional ER:
The first problem is how to understand, interpret, and enrich the data that has been retrieved from the crowd which we refer to in the following as the {\em data interpretation} problem.
The second problem is to determine the {\em next-crowdsource} order in which record pair information is requested from the crowd.

To address both of these problems, crowdsourced ER algorithms \cite{DBLP:journals/pvldb/VesdapuntBD14,DBLP:conf/sigmod/WangLKFF13} generally follow the same incremental process outlined in Algorithm~\ref{alg:generalER}.
Initially, they generate a set of record pairs ({\em candidate pairs}) $[r_i,r_j]$.
This could be a complete set of record pairs or subset of those (for example using automatic similarity comparison to remove unlikely record pairs).
These pairs are then sorted with regard to a predefined priority metric and added to a priority queue $Q$ (Line~\ref{alg:gen:cluster:add}).
An example for such a priority queue is to order them according to their similarity in an automatic similarity computation.
Iteratively, the top record pair is now retrieved and published as a task on a platform like Amazon Mechanical Turk to obtain information whether $r_i$ and $r_j$ in fact belong to the same entity (Line~\ref{alg:gen:cluster:iterate}).
Whenever a crowd worker responds, the answer is integrated into the votes graph $G$.
Based on the new information in $G$, a clustering algorithm then determines the current entity resolution solution $C$ (Line~\ref{alg:gen:cluster:datainterp}).
For example a simple clustering mechanism would be to merge all records $r_i$ and $r_j$ into the same real-world entity if $p_{ij} > n_{ij}$.
Finally, this new information from the crowd may influence other record pairs, which can lead to an adjustment of the priority queue for record pairs (Line~\ref{alg:gen:cluster:next}).
For example, if $r_i$ and $r_j$ as well as $r_j$ and $r_k$ are assigned to the same cluster, then asking for record pair [$r_i$,$r_k$] is superfluous if the algorithm exploits transitivity.

\begin{algorithm2e}[t]
\SetKwInput{KwInput}{Input}
\SetKwInput{KwOutput}{Output}
\small
\output{Clustering $C$}
\BlankLine
{
$Q, C \leftarrow \emptyset$\; 
\hspace{.02in}\textcolor{gray}{// add elements to the queue}\\
\lForEach{$[r_i,r_j] \in R^2; i\neq j$} {
	$Q$.priorityAdd$([r_i,r_j])$\;\label{alg:gen:cluster:add}
}
\hspace{.02in}\textcolor{gray}{// iteratively crowdsource and adjust clustering}\\
\ForEach{$[r_i,r_j] \in Q$\label{alg:gen:cluster:iterate}} {
	$v_{ij} \leftarrow$ crowdsource$([r_i,r_j])$\;
	\hspace{.02in}\textcolor{Mahogany}{// data interpretation problem}\\
	\textcolor{Mahogany}{C.update($v_{ij}$)\;}\label{alg:gen:cluster:datainterp}
	\hspace{.02in}\textcolor{CornflowerBlue}{// next-crowdsource problem}\\
	\textcolor{CornflowerBlue}{Q.adjust($C$)\;}\label{alg:gen:cluster:next}
}
}
{\bf return} $C$
\caption{General ER algorithm for crowdsourced on-demand input.}\label{alg:generalER}
\end{algorithm2e}

\subsubsection{Problem Definition}\label{sec:sub:problem}
To describe the data interpretation and next-crowdsource problems formally, we first define what an optimal solution entails in crowdsourced ER.
For the data interpretation problem, the goal is to find a clustering $C^*$ that represents the correct entity resolution solution, i.e.,~the record-to-entity mapping is equivalent to some ground truth.
Finding such a clustering is straightforward if all pair-wise similarity estimates are correct.
In other words, if there exists an oracle that knows whether any two records $r_i, r_j \in R$ belong to the same real-world entity then finding $C^*$ with any of the well-established ER algorithms is trivial.
The reason is that there are no contradictions in the edge set (i.e.,~there are no $i,j,k$ s.t.~$r_i = r_j$ and $r_i = r_k$ but $r_j \neq r_k$).
That means that there exists one unique clustering $C^*$ that is consistent with the specified pair-wise relations.

Under the assumption of incomplete or incorrect edges, i.e.,~erroneous votes from the crowd workers, finding $C^*$ becomes an optimization problem.
Specifically, we want to find the clustering $C$ that is either equivalent to $C^*$ or the best approximation thereof based on the current state of the votes graph $G$.
To compute $C$, our algorithms estimate the distance to the optimal solution through a distance measure $d$ which represents the similarity of $C$ to $C^*$.
It is essentially an objective function that estimates the correctness of a solution based on the clustering mechanics that define $C^*$.
Examples for $d$ include minimizing the number of negative similarity scores within the same and positive scores across clusters, or to maximize the number of positive edges within a cluster.
The choice of optimization metric is dependent on the applied ER algorithm.
However, assume in the following that the goal for all candidate mechanisms is to minimize $d(C,C^*)$ where $d(C,C^*)$ is the distance of $C$ to $C^*$ for simplicity.
We can then formulate the data interpretation problem as follows.
\begin{problem}[Data Interpretation Problem]
Given are all record pairs $[r_i,r_j] \in G$ and a metric $d(C,C^*)$ that assigns a distance for any given clustering $C$ from ground truth $C^*$. 
The data interpretation problem is to find the best clustering $C$ such that $d(C,C^*) < d(C',C^*)$ for any alternative clustering $C'$ of $G$.
\end{problem}
The second problem of ER with a crowd is that crowdsourcing platforms incur cost for the task requester.
The goal when issuing tasks, i.e.,~obtaining more information for specified edges in $G$, therefore becomes finding those tasks that provide maximal information for minimal cost.
Minimizing the task space through intelligent vote requests is part of the next-crowdsource problem.
In that context, task ordering is essential for efficient crowdsourced ER because it is highly correlated to the output quality in addition to the required (monetary) cost.
To understand why, remember that the system needs to estimate pair-wise decisions in the data interpretation problem.
Thus, it provides better solutions if it knows which questions enable fast convergence to the best possible clustering solution.
The next-crowdsource problem can therefore be formalized as follows.
\begin{problem}[Next-Crowdsource Problem]
Given a votes graph $G$ and a distance function $d(C,C^*)$. The next-crowdsource problem is to choose a record pair [$r_i$, $r_j$]$\in G$ for a crowd worker to vote on, such that adding the outcome of the vote to $G$ minimizes the expected distance $d(C',C^*)$ of the updated clustering $C'$.
\end{problem}

\subsubsection{Worker Quality}
The quality of the entity resolution result is based on the quality of the answers the crowd workers provide.
Formally, we define a correct answer for a record pair $[r_i,r_j]$ as the true positive or true negative answers, i.e.,~if $r_i$ and $r_j$ belong to the same entity, a correct answer is `Yes'.
Similarly, a faulty answer for $[r_i,r_j]$ encompasses false positive and false negative answers of crowd workers.
In our experimental evaluation, we show that faulty crowd answers have drastic impact on the ER quality as observed in pair-wise precision and recall (see \Autoref{subsec:crowderror} for details).
Incorrect answers are often the result of incomplete knowledge in a certain domain or lack of attention during task execution.
In fact, we observe an evenly distributed error rate for both false positive and false negative crowd answers across all workers.
For example, we employed 545 different workers to examine our landmarks dataset (see \Autoref{sec:experiments} for details) out of which only 19 workers (i.e.,~3.5\% of all workeres) deviated significantly from the measured average worker quality because of errors in their answers.
Out of these, only 9 did more than 20 tasks and thus had more significant impact on the results.

The algorithms presented in this paper do not differentiate between crowd workers but rather handle every worker equally.
The reasons for that are two-fold.
First, as mentioned above, we observe little variation in worker quality in our real-world experiments.
Second, the same crowd is often not available for the same task more than once.
As a result, available workers may not have a quality profile which the algorithm can fall back on.
For example, in our landmarks dataset, we observed that only 51.3\% of the workers executed more than 20 tasks.
Obviously, it is only possible to build accurate error statistics per worker if these workers provide sufficient sample work.
To compensate for missing error profiles, the general decision strategies that we explore in this work employ robust decision mechanisms that make them independent of worker error statistics.

\subsection{Classification of Crowdsourced ER}\label{subsec:strategies}
Generally, this work discusses classes of crowdsourced ER captured by the framework shown in \Autoref{fig:ERapproaches}.
They can be differentiated by how they address (a) the next-crowdsource problem and (b) the data interpretation problem.
In the next-crowdsource problem, task ordering can process in a {\it monotonic} manner or alternatively already resolved pairs may be reconsidered if evidence points to a mistake in the previous decisions.
This {\bf non-monotonic} task ordering is based on the assumption that input information is unreliable.
Going back to the introductory example (\Autoref{fig:animals}), imagine that a crowd worker first classifies the baby animals to be of the same kind, identifies the relationship of each baby and respective grown-up animal correctly, and then determines the difference in grown-up animals.
To improve quality, it should be possible to re-evaluate the ER solution to correct the initial mistake of the workers.
In our work we evaluate both non-monotonic and monotonic task ordering in our general execution framework (Line \ref{alg:gen:cluster:datainterp} in Algorithm~\ref{alg:generalER}).
The second difference between different classes of crowdsourced ER is that they either leverage {\bf complete} (positive and negative even if contradictory) votes from the crowd or request a {\bf consensus} decision.
As a result, there exist three different strategies in the solution space for crowdsourced ER:
\begin{enumerate*}[label=\alph*]
\item {\bf fault-tolerant exhaustive},
\item {\bf fault-tolerant}, and
\item {\bf consensus-based}
\end{enumerate*}
strategies.
Notice that combining non-monotonic and consensus-based mechanisms is not possible.
The reason is that consensus mechanisms can never lead to contradictions in the ER solution, thus violating the non-monotonicity property.

In the following, we examine how these categories of crowdsourced differ from each other and show the most prominent examples of existing work in either category.
A more general overview of related work encompassing topics that are also outside of crowdsourced ER can be found in \Autoref{sec:related}.

\begin{figure}[t]
\centering
\includegraphics[width=\columnwidth]{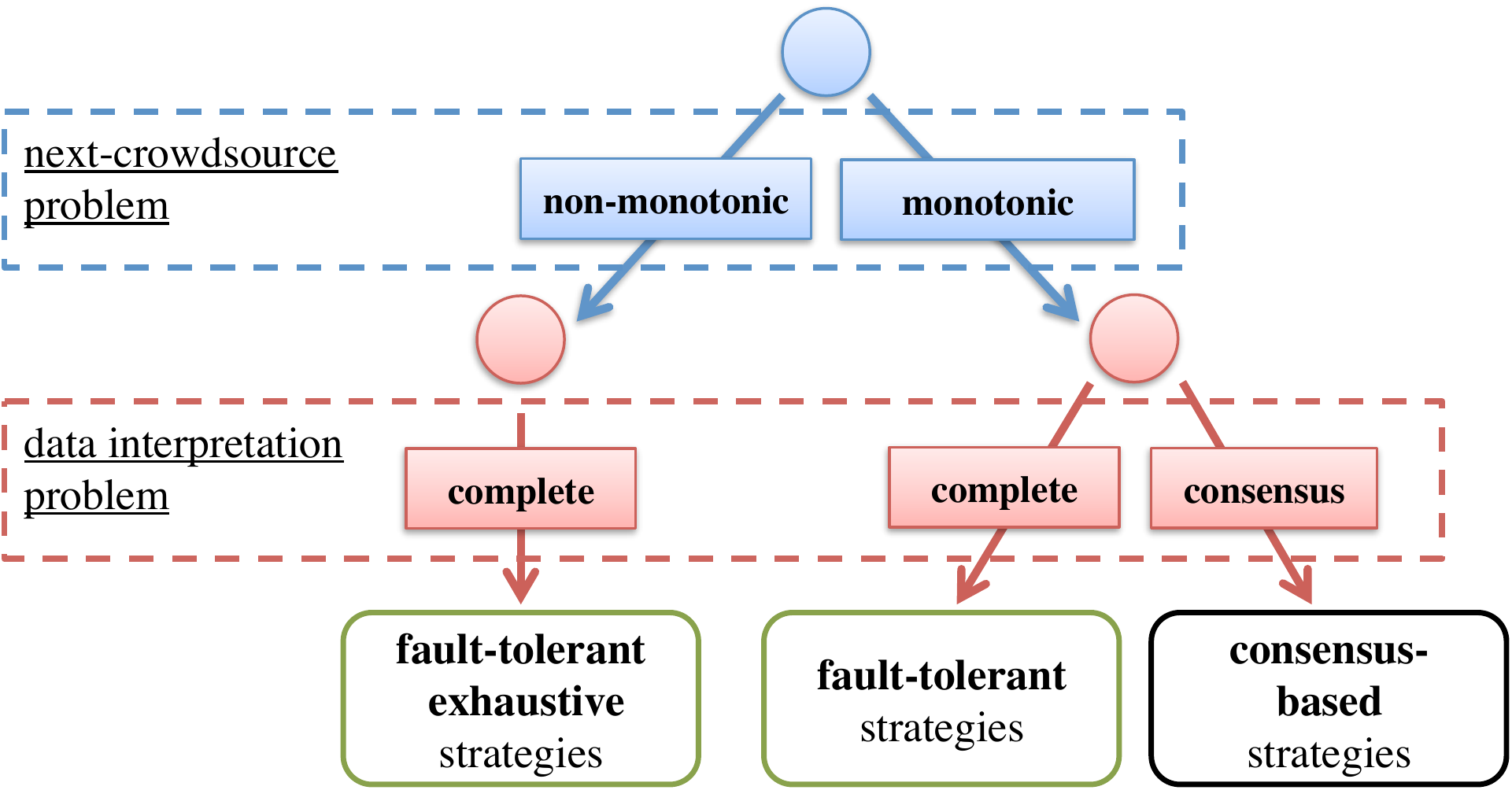}
\caption{Overview of different ER strategies.}
\label{fig:ERapproaches}
\end{figure}

\smallsection{Consensus-Based Entity Resolution}
Entity resolution strategies that are based on consensus decision usually allocate a fixed repetition budget for each crowdsourcing task.
The crowd worker's answers are then aggregated by task and consolidated according to some previously defined ER algorithm such as transitive closure or sorted neighborhood as shown by CrowdER \cite{Wang:2012} or Whang et al \cite{DBLP:journals/pvldb/WhangLG13}.
Errors made by the crowd workers are thus masked in the hope that a sufficient number of repetitions will result in the correct answer per task.
In fact, these strategies optimize for crowdsourcing cost under the assumption of a perfect crowd.
Thus, consensus is always reached and there are no contradictions in the record pairs that need to be resolved.

\smallsection{Fault-Tolerant Entity Resolution}
We term fault-tolerant entity resolution those ER mechanisms that take as input all information made available by crowd workers and build an ER solution on top of that.
In contrast to consensus-based ER, these ER mechanisms do not reject any of the crowd answers which introduces noise into the votes graph.
The challenge is then to find a clustering on top of these possibly contradicting bits of information that maximizes the agreement between crowd answers.
The work that we present in this paper builds upon preliminary work in \cite{gruenheidER2012tech}.
Related work in this category \cite{ilprints1097} has introduced a theoretic crowdsourced ER solution that uses maximum likelihood methodology to find the optimal ER solution.
This methodology is equivalent to using correlation clustering and is shown to be NP-complete.
For clustering, the authors therefore fall back onto spectral clustering and transitive closure as alternative ER mechanisms.
Their work furthermore addresses the next-crowdsourcing problem by finding those tasks that have the highest projected impact on the entity resolution solution.
This estimate extends prior work because it is not only based on the candidate positive crowd answers but also possibly negative responses.
In that respect, it is similar to our work on the next-crowdsource problem (\Autoref{sec:next}) although we examine not only uncertainty reduction strategies but explore alternative means of optimizing for error reduction in the clustering solution.

\smallsection{Fault-Tolerant Exhaustive Entity Resolution}
Exhaustive (non-monotonic) entity resolution differs from monotonic exploration of the ER space because record pairs that have been examined before can be re-evaluated at a later point in time.
Thus, it is possible to reverse a decision once made during the clustering process.
Prior work has not yet considered non-monotonic task execution for two reasons.
First, it is more expensive especially if budget is invested on `hard' tasks, i.e.,~tasks that crowd workers often disagree on.
Second, existing work commonly reasons on a task level, i.e.,~the next question to ask is not one question for task but a set of questions for one task.
As a result, once these questions have been asked, there exists sufficient signal from the crowd to determine an ER solution.
To the best of our knowledge, our work is the first to explore non-monotonic task ordering in an extensive evaluation.
In addition to sequential task execution, we also report on the trade-off between single-question execution and batch processing in \Autoref{subsec:parallel} which impacts the output quality of non-monotonic as well as monotonic crowdsourced ER.\\

\section{Data Interpretation Problem}\label{sec:faulttolerant}
A {\em decision function} is a function that determines the relationship of $r_i$ and $r_j$ based on $G$.
To correctly interpret the potentially noisy votes in $G$, it needs to fulfill several properties which we define next.
We then introduce a novel decision function called {\sc MinMax} that provides fault-tolerant data interpretation and adheres to these properties.
To contrast {\sc MinMax} with alternative pair-wise decision functions, we discuss its (dis-)advantages in \Autoref{subsec:faulttolerant:discussion}.

\subsection{Properties of Pair-Wise Decision Functions}\label{sec:pairwise}
Given a votes graph $G=(R,E)$, a desirable pair-wise decision function forms a decision about the relationship of two records $r_i, r_j \in R$ by evaluating the information contained in the edge set $E$ of the votes graph.
\begin{definition}[Decision Function]
A decision function $f$ evaluates the relationship of two records $r_i$ and $r_j$ $\in$ $R$ by first finding all distinct acyclic paths $H = (r_i, \ldots, r_j)$ connecting $r_i$ and $r_j$.
It forms its decision based on the $p_{kl}$ and $n_{kl}$ votes that are part of these paths, for each [$r_k$,$r_l$] $\in$ $H$.
The result of $f(r_i,r_j)$ is then either of three decision `yes', `no', or `unknown', which describes whether $r_i$ and $r_j$ belong to the same entity.
\end{definition}
Any decision function should obey all mathematical properties of an equivalence relation such as the `same-entity-as' relation. 
More formally, we expect the following properties from such a decision function $f$:

\smallsection{\bf Reflexivity} For any record $r_i$ and any votes graph:

\noindent$f(r_i,r_i) =$ `yes'
 
\smallsection{\bf Symmetry} For records $r_i$ and $r_j$ and any votes graph:

\noindent$f(r_i,r_j) = f(r_j,r_i)$ 

\smallsection{\bf Consistency} If the decision function decides that two records $r_i$ and $r_j$ point to the same entity, then there exists positive information between $r_i$ and $r_j$. 
Likewise, there has to exist negative information for a `no' decision.

\smallsection{\bf Convergence} For every connected record pair $[r_i,r_j]$, there has to exist an acyclic path $H = (r_i, \ldots, r_j)$ connecting $r_i$ and $r_j$ which is computable.
If $r_i$ and $r_j$ are unconnected, the default decision of the function is `unknown'.

\smallsection{\bf Transitivity} For the three records $r_i$, $r_k$, and $r_j$ and any votes graph:

\noindent$f(r_i,r_k) =$ `yes' $\wedge$ $f(r_k,r_j) =$ `yes'$\implies f(r_i, r_j) =$ `yes'
 
\smallsection{\bf Anti-transitivity} For the three records $r_i$, $r_k$, and $r_j$ and any votes graph: 

\noindent$f(r_i,r_k) =$ `yes' $\wedge$ $f(r_k,r_j) =$ `no' $\implies$ $f(r_i, r_j) =$ `no'

\noindent$f(r_i,r_k) =$ `no' $\wedge$ $f(r_k,r_j) =$ `yes' $\implies$ $f(r_i, r_j) =$ `no'\\
 
Consistency guarantees that the decision function forms appropriate decisions.
Convergence of a decision is furthermore required because it guarantees that the decision function will always make a decision.
Specifically, even if there exist cycles in the votes graph, the acyclic path generation will always generate a path between records $r_i$ and $r_j$.
(Anti-)transitivity is an essential tool for cost-conscious environments such as ER in a crowdsourcing setup.
Decision functions apply it because in contrast to traditional ER, each information request incurs additional monetary cost which is to be avoided.
As a result, both transitivity and anti-transitivity are core concepts of decision functions in state-of-the-art ER solutions \cite{DBLP:conf/sigmod/WangLKFF13}.

\subsection{MinMax Similarity Measure}
The {\sc MinMax} pair-wise decision function $f_{M}$ uses both positive and negative information in the crowd workers' input and establishes a similarity measurement for every record pair through path-based inference.
It is a novel fault-tolerant technique that is inspired by work on preference functions \cite{Fagin} and voting schemes that discuss ranking pair-wise decisions \cite{schulze11}.
While these mechanisms are used for decision-making in a space where crowd signals are one-dimensional (i.e.,~$r_i$ and $r_j$ belong together or $r_i$ is better than $r_j$), obtaining information from the crowd for ER problems enables both positive and negative decision signals (i.e.,~$r_i$ and $r_j$ belong or do not belong together). 
{\sc MinMax} uses the notion of {\it positive} and {\it negative acyclic paths} in the votes graph to evaluate whether two records belong to the same entity.
The score of a positive path between records $r_i$ and $r_j$ is denoted as $p^*_{ij}$ while the negative path scores are referred to as $n^*_{ij}$.
\begin{definition}[MinMax Decision Function]
Given two records $r_i$ and $r_j$ decides whether $r_i$ and $r_j$ belong to the same entity given the positive votes $p^*_{ij}$ and the negative votes $n^*_{ij}$ along the path(s) connecting these two records.
$$
f_M (r_i,r_j) =
\begin{cases}
\mbox{Yes,} \hspace{40pt} \> \mbox{if } p^*_{ij} - n^*_{ij} \geq q_p \\
\mbox{No,} \hspace{42pt} \> \mbox{if } n^*_{ij} - p^*_{ij} \geq q_n \\
\mbox{Do-not-know,} \hspace{10pt} \> otherwise
\end{cases}
$$
\end{definition}
{\sc MinMax} decides that $r_i$ and $r_j$ belong to the same entity if there is sufficient evidence for such a decision represented through a quorum $q_p$ ($q_n$ for negative decisions).
The higher $q_p$ and $q_n$, the more crowd workers need to support a positive (negative) decision which ensures better result quality.
In practice, we show that a quorum as low as 3 is sufficient for accurate decision making as shown in \Autoref{sec:experiments}.

\begin{definition}[Path Definition]
A positive path in a votes graph is a sequence of records $H_p = (r_i, \ldots, r_j)$ connecting records $r_i$ and $r_j$ such that all consecutive record pairs [$r_k$,$r_l$]$\in H$ have only positive weights, i.e.,~$p_{kl}>0$.
A negative path $H_n = (r_i, \ldots, r_j)$ contains exactly one negative record pair [$r_k$,$r_l$] such that $n_{kl}>0$.
All other record pairs on $H \backslash [r_k,r_l]$ have positive weights with $p_{kl}>0$.
\end{definition}
The notion of paths allows for effective transitivity:
If two records $r_i$ and $r_j$ are connected through a positive path with high weights and no negative paths, it is likely that they belong to the same entity.
Anti-transitivity on the other hand can be leveraged with exactly one negative edge in the path only:
If $r_i \neq r_k$ and $r_k \neq r_j$, there is no way to automatically infer the relationship of [$r_i$,$r_j$].
\begin{example}[Paths]\label{ex:paths}
Going back to the introductory example (\Autoref{ex:animals}), imagine that we want to determine the positive and negative paths connecting $r_2$ and $r_3$.
There exist negative paths $H_{n_1} = (r_2,r_4,r_3)$ and $H_{n_2} = (r_2,r_1,r_3)$.
There also exists exactly on positive path $H_p = (r_2,r_1,r_3)$ that has only positive path scores.
\end{example}
The core idea of the {\sc MinMax} decision mechanism is that a path is only as strong as its weakest link, i.e.,~the edge in the path that has the lowest weight.
Both positive and negative paths are thus assigned a score according to the {\it minimum} weight of all absolute edge weights.
If there exist multiple paths that connect two records, we choose the path with the {\it maximal} score as it signifies a higher confidence from the crowd.
We call the path scores for a record pair $[r_i,r_j]$ the {\it positive} score $p^*_{ij}$ resp. the {\it negative} score $n^*_{ij}$ of a path.
\begin{definition}[Path Scores]\label{def:PathScore}
Given records $r_i$ and $r_j$ $\in$ $R$, let $H^p_{ij}$ denote the set of positive paths connecting $r_i$ and $r_j$.
For every path $H_{\mu} \in H^p_{ij}$, its score $h(H_{\mu})$ is computed as the minimum of all direct scores for any consecutive record pair [$r_k$,$r_l$], i.e.,~$h(H_{\mu})$=min($p_{kl}$), for all [$r_k$,$r_l$] $\in$ $H$.
The {\sc MinMax} path score $p^*_{ij}$ is then computed as max($h(H_{\mu})$ $\forall$ $H_{\mu} \in H^p_{ij}$).
Negative path scores $n^*_{ij}$ are computed analogously.
\end{definition}
Positive and negative {\sc MinMax} scores can be represented as a two-dimensional matrix of record pairs:
Positive path scores correspond to the upper triangular part of the matrix while negative paths to the lower triangular.
If there exists no positive or negative path between two records, $p^*_{ij}$ resp. $n^*_{ij}$ is set to 0.
\begin{figure}[t]
\begin{subfigure}{\columnwidth}
\caption{Internal representation (Example \ref{ex:MMexample}).}\label{fig:MMexample}
\end{subfigure}
\centering

\begin{minipage}{.4\columnwidth}
\centering
\includegraphics[width=.62\columnwidth]{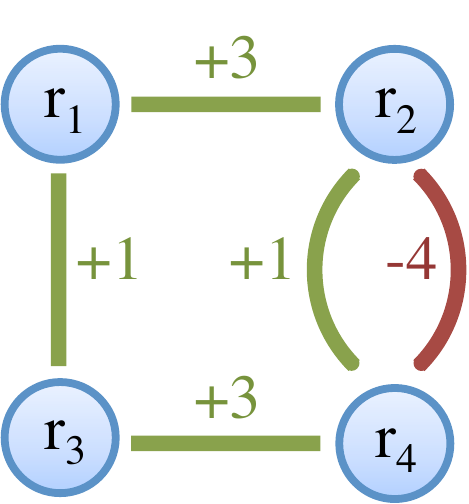}
\end{minipage}
\begin{minipage}[t]{.48\columnwidth}
\small
\begin{tabular}{c|c c c c}
\multicolumn{1}{c|}{}& {$r_1$} & {$r_2$} & {$r_3$} & {$r_4$}\\
\hline
$r_1$ & - & \color{ForestGreen}{+3} & \color{ForestGreen}{+1} & \color{ForestGreen}{+1}\\
$r_2$ & \color{BrickRed}{-1} & - & \color{ForestGreen}{+1}1 & \color{ForestGreen}{+1}\\
$r_3$ & \color{BrickRed}{-3} & \color{BrickRed}{-3} & - & \color{ForestGreen}{+3}\\
$r_4$ & \color{BrickRed}{-3} & \color{BrickRed}{-4} & \color{BrickRed}{-1} & -\\
\end{tabular}
\end{minipage}

\vspace{.05in}
\small{Votes Graph \hspace{.75in} {\sc MinMax} Matrix}
\vspace{.05in}

\centering
\begin{minipage}{.4\columnwidth}
	\centering
	\includegraphics[width=.62\columnwidth]{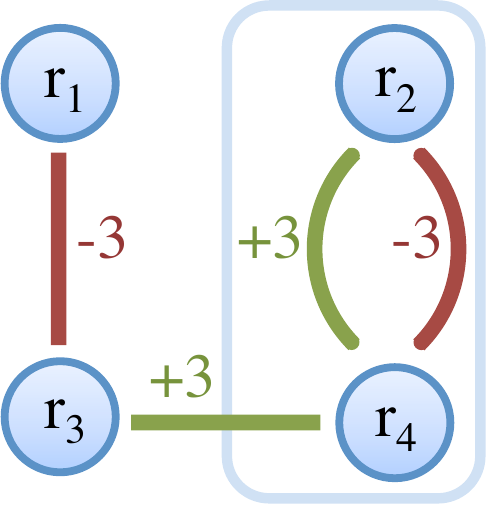}
\end{minipage}
\begin{minipage}[t]{.48\columnwidth}
\small
\begin{tabular}{c|c c c c}
\multicolumn{1}{c|}{}& {$r_1$} & {$r_2$} & {$r_3$} & {$r_4$}\\
\hline
$r_1$ & - & {\bf 0} & 0 & 0\\
$r_2$ & \color{BrickRed}{-\bf 3} & - & \color{ForestGreen}{+3} & \color{ForestGreen}{+3}\\
$r_3$ & \color{BrickRed}{-3} & \color{BrickRed}{-3} & - & \color{ForestGreen}{+3}\\
$r_4$ & \color{BrickRed}{-3} & \color{BrickRed}{-3} & 0 & -\\
\end{tabular}
\end{minipage}
\begin{subfigure}{\columnwidth}
\vspace{.03in}
\caption{Noise detection (Example \ref{ex:MMproblem}).}\label{fig:MMproblem}
\end{subfigure}

\caption{{\sc MinMax} computation.}
\end{figure}
\begin{example}[MinMax Computation]\label{ex:MMexample}
To see how the {\sc MinMax} scores are computed, take a look at \Autoref{fig:MMexample} where green edges signify positive and red edges equal negative votes.
To make a decision for $[r_3,r_4]$, two positive paths are examined:
$(r_3,r_4)$ with a score of 3 and $(r_3,r_1,r_2,r_4)$ with min(1,3,1)=1. 
The score for $p^*_{(3,4)}$ is then computed as max(3,1)=3.
To compute the negative score $n^*_{(3,4)}$, we observe that there exists only one negative path, $(r_3,r_1,r_2,r_4)$. 
The negative path score $n^*_{(3,4)}$ is thus min(1,3,4)=1.
\end{example}
Computing both, negative and positive, paths ensures fault-tolerant data interpretation.
At the same time, it also gives the {\sc MinMax} decision mechanism the power to adapt to inconsistencies found in crowd responses:
Low confidence decisions can be outweighed through stronger evidence for the opposite decision.
\begin{example}[Fault-Tolerance]\label{ex:MMfault}
Imagine that in \Autoref{fig:MMexample}, the edges between $r_2$ and $r_4$ do not yet exist.
The positive path connecting all records then indicates that they all belong to the same entity albeit with low confidence, i.e.,~$p^*_{24}$=min(3,1,3)=1.
Introducing the new edge evidence between $r_2$ and $r_4$ weakens the previously made decision and causes the entity to be split into two separate entities.
The negative edge between $r_2$ and $r_4$ now propagates to $r_1$ ($n^*_{14}$=min(4,3)=3) and $r_3$ ($n^*_{23}$=min(4,3)=3) respectively.
\end{example}
Choosing the weakest link of a path provides a mechanism to cautiously evaluate a path.
Nevertheless, using absolute values has one general drawback:
If a decision is based on a positive score outweighing a negative score or vice versa by a specified margin, {\it dominating} values can cause incorrect inference as a result of incomplete information.
Thus, an algorithm relying on such information may make a decision for [$r_i$,$r_j$] for which intutively a decision should not be made.
Imagine two records that are connected through a positive edge ($r_k$,$r_l$) which dominates the path score, i.e.,~it is the minimal edge in the path.
We say that ($r_k$,$r_l$) is {\it dominated} if its counterpart, here the corresponding negative edge between $r_k$ and $r_l$, has a higher weight.
Intuitively, this kind of inference should not be allowed as it propagates wrong beliefs:
The votes indicate that a negative decision should be made but instead, the positive information is used for the path computation and therefore propagated into the path scores.
This may lead to false conclusions as shown in the following example.
\begin{example}[Dominating Votes]\label{ex:MMproblem}
In \Autoref{fig:MMproblem}, $r_1$ is unequal to $r_3$ which is positively connected to $r_4$.
Additionally, the crowd is unsure whether $r_4$ is unequal to $r_2$.
Therefore, it should not be possible to infer whether $r_1$ and $r_2$ belong to the same or different entities.
Computing the path scores as previously shown, {\sc MinMax} proposes a negative score $n^*_{12}$=3 for the record pair ($r_1,r_2$) but as there exists no positive path between $r_1$ and $r_2$, $p^*_{12}$=0.
As a result, the relationship of ($r_1$,$r_2$) is falsely estimated.
\end{example}

\subsection{Properties of MinMax}\label{sec:propminmax}
To counter the effect described in \Autoref{ex:MMproblem}, we modify {\sc MinMax} to limit the propagation of dominated edges.
More specifically, we can only compute a path between two records $r_i$ and $r_j$ if any edge on the path is not dominated by its opposite edge.
\begin{corollary}[Edge Domination]\label{cor:edgedom}
A vote $p_{ij}$ ($n_{ij}$) for a record pair $[r_i,r_j]$ can be used for path computation only if it dominates $n_{ij}$ ($p_{ij}$), i.e.,~$p_{ij}>n_{ij}$ ($p_{ij}<n_{ij}$).
\end{corollary}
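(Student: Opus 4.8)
The plan is to obtain the edge-domination condition as a forced consequence of the Consistency and (anti-)transitivity properties of an admissible decision function, rather than as a standalone design rule; the content of the corollary is then that any edge violating domination cannot be used in a path without breaking one of those properties. The starting point is to read a positive path $H_p=(r_i,\ldots,r_j)$ as a transitive chain: by the Transitivity property the inferred decision $f_M(r_i,r_j)=$ `yes' is licensed exactly because every consecutive pair $[r_k,r_l]\in H_p$ is treated as standing in the `same-entity-as' relation, and each such single edge is itself a length-one positive path witnessing that $r_k$ and $r_l$ belong to the same entity. Hence ``using the vote $p_{kl}$ for path computation'' is, semantically, a commitment to a local `yes' on $[r_k,r_l]$.

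First I would apply the Consistency property at the granularity of a single edge. Consistency demands that a `yes' decision be backed by dominating positive information and a `no' decision by dominating negative information. If a positive edge is \emph{dominated}, i.e.\ $n_{kl}\ge p_{kl}$, then the direct votes on $[r_k,r_l]$ carry at least as much negative as positive evidence, so $f_M(r_k,r_l)$ can only be `no' or `do-not-know', never `yes'. Combining this with the previous step yields a contradiction: the global inference $f_M(r_i,r_j)=$ `yes' forces a same-entity commitment on the sub-pair $[r_k,r_l]$, while that pair's own evidence forbids a `yes'. To preserve Consistency together with Transitivity one must therefore forbid dominated edges from entering positive paths, which gives the positive half of the claim, $p_{kl}>n_{kl}$.

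The negative half follows by the dual argument applied to the unique negative edge of a negative path: that edge carries the lone `no' whose anti-transitive propagation produces $f_M(r_i,r_j)=$ `no', so by Anti-transitivity it must itself support a local `no', which by Consistency requires $n_{kl}>p_{kl}$; every remaining edge on the path is positive and is handled exactly as above. Example~\ref{ex:MMproblem} supplies the concrete contrapositive, where a dominated positive edge is pressed onto a path and the resulting inference on $[r_1,r_2]$ violates anti-transitivity, confirming that domination is not merely convenient but necessary.

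I expect the main obstacle to be formal rather than combinatorial: the delicate point is to pin down the semantics of ``can be used for path computation'' so that the reduction from the global path decision to the individual edge decisions is airtight, and in particular to justify the strict inequalities. The tie case $p_{kl}=n_{kl}$ must be argued separately --- there neither signal dominates, so the edge supports neither a clean `yes' nor a clean `no' and must be excluded from both path types --- and the asymmetric status of the single permitted negative edge on a negative path has to be threaded through the Anti-transitivity step without disturbing the positive edges that surround it.
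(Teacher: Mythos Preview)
The paper does not prove this statement at all. Despite the label ``Corollary,'' the edge-domination condition is introduced in the paper as a \emph{design modification} to {\sc MinMax}: after Example~\ref{ex:MMproblem} exposes a failure mode of the unrestricted score propagation, the authors write ``To counter the effect described in \Autoref{ex:MMproblem}, we modify {\sc MinMax} to limit the propagation of dominated edges,'' and then state the corollary as the rule being imposed. It is not derived from anything; rather, it is subsequently \emph{used} as a hypothesis in the proof of weak transitivity (Case~1.1 explicitly invokes it). So you are attempting to prove a statement that the paper treats as an axiom.

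Your approach---arguing that edge domination is forced by Consistency together with (anti-)transitivity---is a reasonable reading of why the rule is natural, but the argument as written has a genuine gap. You claim that ``Consistency demands that a `yes' decision be backed by dominating positive information,'' but the paper's Consistency property is weaker than this: it only requires that \emph{some} positive information exist for a `yes' decision, not that it dominate the negative information. With the paper's actual definition, a dominated edge with $p_{kl}>0$ still satisfies Consistency for a local `yes', so your contradiction in the first step does not go through. The real obstruction to using dominated edges is not Consistency but the failure of (weak) transitivity and anti-transitivity, which is exactly what the paper establishes \emph{after} imposing the rule, and what Example~\ref{ex:MMproblem} illustrates in the contrapositive direction. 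If you want to salvage the argument, you would need to work directly from the transitivity axioms rather than routing through Consistency, and even then you would be proving necessity of the rule for those axioms to hold---which is a legitimate but different statement from what the paper asserts.
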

Next to avoiding wrong propagation of values, this corollory also guarantess that the path computation of {\sc MinMax} adheres to a weaker form of (anti-)transitivity.
Remember that when working with erroneous data, conflicting information may lead to interpretation inconsistencies, i.e.,~there exist paths between records $r_i$ and $r_j$ that convey contradicting decisions, one positive and one negative.
Weak (anti-)transitivity guarantees that the decision function proceeds {\it cautiously} when encountering these noisy path values.
Instead of risking a faulty decision, the decision function will return `unknown' instead.

\begin{definition}[Weak Transitivity]
For records $r_i$, $r_k$, and $r_j$ and any votes graph:

\noindent$f(r_i,r_k) =$ `yes' $\wedge$ $f(r_k,r_j) =$ `yes'

\hspace{.24in} $\implies$ $f(r_i, r_j) =$ `yes' $\lor$ $f(r_i, r_j) =$ `unknown'
\end{definition}
\noindent Weak anti-transitivity is defined analogously.

\begin{proposition}[Weak Transitivity of {\sc MinMax}]

\noindent{\sc MinMax} is weakly transitive.
\end{proposition}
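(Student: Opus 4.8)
The plan is to prove weak transitivity by ruling out the only forbidden outcome: since the conclusion permits both `yes' and `unknown', it suffices to show that $f_M(r_i,r_j)$ is never `no' whenever $f_M(r_i,r_k) = f_M(r_k,r_j) = $ `yes'. First I would unpack the hypotheses through the {\sc MinMax} decision function. Because $n^*_{ik}, n^*_{kj} \geq 0$ by definition, the two `yes' verdicts give $p^*_{ik} \geq q_p + n^*_{ik} \geq q_p$ and $p^*_{kj} \geq q_p$; in particular a positive path realizing each of these scores exists. The whole argument then reduces to establishing the single inequality $n^*_{ij} - p^*_{ij} < q_n$.

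The core constructive step is to transfer positive connectivity across $r_k$. I would take optimal positive paths $H_1 = (r_i,\ldots,r_k)$ and $H_2 = (r_k,\ldots,r_j)$ attaining $p^*_{ik}$ and $p^*_{kj}$; by the Path Definition and Corollary~\ref{cor:edgedom} every edge on them is positive and undominated. Concatenating $H_1$ and $H_2$ at $r_k$ yields a positive walk from $r_i$ to $r_j$, from which I delete repeated vertices to extract an acyclic positive path $H$. Since the edge set of $H$ is contained in that of $H_1$ together with $H_2$, every edge of $H$ is still positive and undominated, and taking the minimum over a subset can only weakly increase the score, so $h(H) \geq \min(h(H_1),h(H_2)) = \min(p^*_{ik},p^*_{kj})$. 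By Definition~\ref{def:PathScore} this gives $p^*_{ij} \geq h(H) \geq \min(p^*_{ik},p^*_{kj}) \geq q_p > 0$, so $r_i$ and $r_j$ carry strong positive evidence.

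It remains to bound the negative score, and I would argue by contradiction. Suppose $f_M(r_i,r_j) = $ `no', that is, $n^*_{ij} \geq p^*_{ij} + q_n > 0$. Then there is an admissible negative path $N$ from $r_i$ to $r_j$ containing exactly one negative edge $[r_a,r_b]$ with, by Corollary~\ref{cor:edgedom}, $n_{ab} > p_{ab}$, while the two remaining segments $(r_i,\ldots,r_a)$ and $(r_b,\ldots,r_j)$ are undominated positive paths whose edge weights are all at least $n^*_{ij}$. Splicing these two segments with the positive path $H$ (reversing the segment orientations and traversing $H$ from $r_i$ to $r_j$) produces a positive walk from $r_a$ to $r_b$; extracting an acyclic positive path $Q$ exactly as above shows $p^*_{ab} \geq \min(n^*_{ij}, q_p)$, so $r_a$ and $r_b$ are themselves strongly positively connected. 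The aim is then to contradict the admissibility of the negative edge $[r_a,r_b]$ under the domination filter introduced in Section~\ref{sec:propminmax}.

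The \emph{main obstacle} is exactly this final collision between edge-level and path-level domination. Corollary~\ref{cor:edgedom} forbids a negative edge only when the \emph{direct} positive vote $p_{ab}$ exceeds $n_{ab}$, whereas the evidence I can manufacture is an \emph{indirect} positive path $Q$ of score $p^*_{ab}$. Closing the proof therefore requires a domination-transfer lemma showing that {\sc MinMax} treats a dominating positive path the way it treats a dominating direct edge, so that $[r_a,r_b]$ is rejected once $p^*_{ab}$ is large enough; equivalently, that no admissible negative path can coexist with the strong positive path $H$. The sharpest instance, which I would settle first, is the degenerate case where the candidate negative path is the single direct edge $[r_i,r_j]$ (so $r_a = r_i$, $r_b = r_j$ and the only positive connection between them is the indirect path through $r_k$): handling this case is precisely where one must rule out a strong direct negative vote overriding transitive positive evidence. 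Once the lemma is in place, $N$ cannot be admissible, contradicting the assumption $n^*_{ij} \geq p^*_{ij} + q_n$; hence $n^*_{ij} - p^*_{ij} < q_n$ and $f_M(r_i,r_j)$ must be `yes' or `unknown'.
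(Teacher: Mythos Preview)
Your first step matches the paper: concatenating optimal positive paths through $r_k$ gives $p^*_{ij}\ge\min(p^*_{ik},p^*_{kj})$, which is their Equation~\ref{eq:agg_positive} read as a worst-case lower bound. The gap you flag in the second half is real and does not close along the route you chose. The ``domination-transfer lemma'' is not available in this framework and is not the right fix: Corollary~\ref{cor:edgedom} filters an edge solely by comparing the \emph{direct} counts $p_{ab}$ and $n_{ab}$, so exhibiting a large indirect score $p^*_{ab}$ never disqualifies the negative edge $[r_a,r_b]$. Your own degenerate case already shows why an edge-level contradiction cannot be the mechanism: a strong direct negative edge $[r_i,r_j]$ with $p_{ij}=0$ remains perfectly admissible under Corollary~\ref{cor:edgedom} regardless of what indirect positive paths exist, and nothing else in the definitions promotes path-level evidence to edge-level rejection.

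The paper avoids the edge-versus-path mismatch by never descending to the edge. Instead it reroutes the negative evidence back to the \emph{hypotheses}: any negative path from $r_i$ to $r_j$, spliced with a positive path between $r_j$ and $r_k$ (or between $r_i$ and $r_k$), contributes to $n^*_{ik}$ (or $n^*_{kj}$). This gives the decomposition in Equation~\ref{eq:agg_negative}, after which a short case split---on which branch realises the maximum there and which of $p^*_{ik},p^*_{kj}$ realises the minimum in Equation~\ref{eq:agg_positive}---yields either $p^*_{ik}\le n^*_{ik}$, contradicting $f_M(r_i,r_k)=$ `yes', or $p^*_{ik}$ strictly below some positive $i$--$k$ path score, contradicting maximality of $p^*_{ik}$. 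The only surviving boundary case is $p^*_{ij}=n^*_{ij}$, which is precisely the `unknown' outcome that makes the transitivity weak. The contradiction is always with one of the two `yes' assumptions at the path-score level, never with admissibility of an individual edge; redo your second half by splicing $N$ toward $r_k$ rather than toward the endpoints $r_a,r_b$ of its negative edge.
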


\begin{proof}[Weak Transitivity of {\sc MinMax}]\label{proof:weaktransitivity}
\abovedisplayskip=1pt
\belowdisplayskip=1pt
Given three records $r_i$, $r_j$, and $r_k$ $\in$ $R$ transitivity is violated if $f(r_i, r_k)$ = `yes', $f(r_k, r_j)$ =
`yes' but $f(r_i, r_j)$ = `no'.
Path scores $p^*_{ij}$ and $n^*_{ij}$ are computed as follows:
\begin{align}
p^*_{ij} &= \min \{p^*_{ik}, p^*_{kj}\} \label{eq:agg_positive}\\
n^*_{ij} &= \max \Big\{ \min \{n^*_{ik}, H^p_{kj}\}, \min \{H^p_{ik},n^*_{kj}\} \Big \} \label{eq:agg_negative}
\end{align}
Equation~\ref{eq:agg_positive} follows from the definition of positive path scores. 
Note that if there is any direct edge between $r_i$ and $r_j$, it can only improve the positive score $p^*_{ij}$ due to the maximum computation of {\sc MinMax}, which makes Equation~\ref{eq:agg_positive} the worst case scenario for positive path computation.
Computing $n^*_{ij}$ means evaluating each negative subpath going through $r_k$ with all possible positive paths $H^p$ that complete the path from $r_i$ to $r_j$.
Any direct edge between $r_i$ and $r_j$ is already considered in this computation because it is reflected in the subpath scores. For example, the
computation of $n^*_{ik}$  already considers all paths between $r_i$ and $r_k$ including those that that contain a direct edge 
between $r_i$ and $r_j$.

If transitivity is violated then $p^*_{ij} < n^*_{ij}$ must hold. However, this contradicts {\sc MinMax} as described in the following case analysis designed according to Equation~\ref{eq:agg_positive} and \ref{eq:agg_negative}:

\noindent{{\bf Case 1}: $p^*_{ij} = p^*_{ik}$}
\begin{itemize}
\item Case 1.1: $n^*_{ij} = \min \{n^*_{ik}, H^p_{kj}\}$. Under the transitivity violation assumption, this means that
$p^*_{ik} \leq n^*_{ik}$ which violates \Autoref{cor:edgedom} and the initial assumption that $f(r_i, r_k)$ = `yes'.
	\item Case 1.2: $n^*_{ij} = \min \{H^p_{ik}, n^*_{kj}\}$. This means that $p^*_{ik} \leq  H^p_{ik}$ for any positive path $H^p_{kj}$.
	A strict inequality contradicts {\sc MinMax} and the positive path score definition as $p^*_{ik}$ is the strongest path between $r_i$ and $r_j$.
	An equality relationship is achieved for example if there is only one path or all paths have the same weight.
	As a result, it is possible to obtain $p^*_{ij} = n^*_{ij}$, thus $f(r_i, r_j)$ = `unknown'.
	This behavior is the reason, why {\sc MinMax} only guarantees {\it weak} transitivity.
\end{itemize} 
Therefore, weak transitivity cannot be violated if path scores are computed as in Definition~\ref{def:PathScore} and if {\sc MinMax} is used as a decision function.

\noindent{{\bf Case 2}: $p^*_{ij} = p^*_{kj}$} The proof for this condition is analogous to Case 1.
\end{proof}

\smallsection{\bf Weak Anti-transitivity} For records $r_i$, $r_k$, and $r_j$ and any votes graph:

\noindent$f(r_i,r_k) =$ `yes' $\wedge$ $f(r_k,r_j) =$ `no'

\hspace{.24in} $\implies$ $f(r_i, r_j) =$ `no' $\lor$ $f(r_i, r_j) =$ `unknown'\\

\noindent Weak anti-transitivity can be proven in a similar fashion. Violating weak anti-transitivity
(\emph{i.e.} $f(r_i, r_j) =$ `yes' and therefore $p^*_{ij} > n^*_{ij}$) contradicts the {\sc MinMax} definition as well
as the assumption that $f(r_k,r_j) =$ `no'.
Even though {\sc MinMax} adheres to a weaker variant of transitivity, we show experimentally that transitivity is in practice leveraged efficiently in \Autoref{sec:experiments}.

Furthermore, compared to a simple majority-based decision function, i.e.,~a function that decides in favor of the majority for every record pair $[r_i,r_j]$ ({\sc MA}), {\sc MinMax} obtains higher result quality due to this property.
Quality in this context is defined as the pair-wise accuracy of records within the entity resolution solution.
For example, if $r_i$ and $r_j$ are in the same cluster according to the ground truth but are put into different clusters by the ER algorithm then the accuracy of the result ER solution decreases.
To understand why we claim that {\sc MinMax} will eventually lead to the same or a better quality than a majority-based approach that builds upon dominating edges, let us revisit the running example in \Autoref{fig:animals}.
Here, we have observed one false decision between the baby animals $r_1$ and $r_3$.
A majority-based algorithm would at this point form the decision `no' and move on to the next record pair.
However, the {\sc MinMax} path scores in this example are $p^*_{(1,3)}=3$ and $n^*_{(1,3)}=2$ and if quorum $q_n$ is bigger than 1, then this record pair is marked as unresolved.
As a result, the system would require more information from the crowd workers to form a decision.
In the worst case, this decision would not change making the final decision equivalent to that of the majority approach.
In the best case, new workers would distinguish the species and increase the number of positive votes.

\begin{corollary}[MinMax Result Quality]
The result quality of {\sc MinMax} is at least as good as the result quality of a majority-based decision function.
\end{corollary}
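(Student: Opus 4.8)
The plan is to reduce the global comparison to a per-pair one and then argue by a case split on how {\sc MinMax} classifies a given pair relative to {\sc MA}. Since pair-wise accuracy is additive over record pairs, it suffices to show that for every pair $[r_i,r_j]$ the (eventual) {\sc MinMax} decision is correct whenever the {\sc MA} decision is correct; summing the per-pair correctness indicators then yields the claimed inequality on total quality. The coupling of {\sc MinMax} decisions across pairs (shared path edges) affects \emph{which} decisions are formed but not the additivity of the metric, so the per-pair reduction is legitimate.

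First I would establish the key structural fact that {\sc MA}'s verdict is always \emph{available} to {\sc MinMax}. Because {\sc MA} decides by the dominating direct edge, say $p_{ij}>n_{ij}$, and because \Autoref{cor:edgedom} lets only this dominating edge enter the path computation, the length-one path of weight $p_{ij}$ is itself a positive path between $r_i$ and $r_j$. By the max-over-paths rule this forces $p^*_{ij}\geq p_{ij}$ while the dominated negative direct edge is excluded; symmetrically for a dominating negative edge. Hence the direction {\sc MA} would pick is never suppressed by {\sc MinMax}: it can only be \emph{refined} or \emph{overridden} by indirect path evidence.

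Then I would carry out the case analysis on $[r_i,r_j]$. (i) If {\sc MinMax} returns the same verdict as {\sc MA}, the two contribute equally to accuracy. (ii) If {\sc MinMax} returns \emph{Do-not-know} — which by the thresholds $q_p,q_n$ happens exactly when $|p^*_{ij}-n^*_{ij}|$ is below quorum, i.e.\ when the crowd evidence is genuinely ambiguous — then Algorithm~\ref{alg:generalER} re-crowdsources the pair. In the worst case the additional votes leave the dominant direction unchanged and {\sc MinMax} eventually settles on {\sc MA}'s verdict (equal quality); in the best case they disambiguate toward the ground truth (strictly better). Either way the contribution is $\geq$ that of {\sc MA}. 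This is precisely the behaviour illustrated for the $[r_1,r_3]$ pair of \Autoref{fig:animals}, where {\sc MA}'s premature `no' is replaced by an `unknown' that invites correction.

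The main obstacle is case (iii): {\sc MinMax} issues a \emph{definitive} verdict opposite to {\sc MA}, which must be shown not to decrease accuracy. The lever is the quorum together with the min/max path semantics and the fault-tolerance effect of \Autoref{ex:MMfault}: an override requires an opposing path whose score clears $q_n$ (resp.\ $q_p$), so every edge along it is corroborated by at least that many crowd votes, meaning the reversal rests on strictly more crowd evidence than the single dominated direct edge {\sc MA} relied on. Under the paper's assumption of an evenly distributed, sub-majority error rate, such a corroborated path is more trustworthy than one dominated edge, so the override replaces a likely-wrong {\sc MA} decision by a likely-correct one. Pinning this last step down cleanly is the delicate part, since it is the only place the bound is not purely combinatorial but rests on the worst-case/best-case evidence argument rather than a deterministic guarantee.
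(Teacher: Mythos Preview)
Your argument follows the same skeleton as the paper's: both work pair by pair, both invoke the non-malicious worker assumption $p>0.5$, and both argue that an `unknown' verdict triggers re-crowdsourcing which in the worst case converges to {\sc MA}'s decision and in the best case corrects it. The paper's justification (the paragraph immediately following the corollary) is terser than yours and does not spell out the per-pair reduction or the structural point that the dominating direct edge is always a length-one path available to {\sc MinMax}; you make both explicit.

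The one substantive divergence is your case~(iii). The paper does not treat it at all: it simply asserts that at time $t_i$ ``either decision made by {\sc MinMax} is then the same as made by {\sc MA} or `unknown' due to the weak transitivity of {\sc MinMax}'', and proceeds as if a definitive opposite verdict cannot occur. You instead keep~(iii) on the table and try to bound it via a quorum-and-corroboration heuristic. Neither treatment fully closes this gap --- the paper does not explain how weak transitivity rules out~(iii) (and on its face, indirect paths can yield $n^*_{ij}-p^*_{ij}\ge q_n$ even when $p_{ij}>n_{ij}$), while you yourself flag that your handling of~(iii) is not a deterministic guarantee. In short, you have identified and named a case the paper's informal argument sidesteps; the paper's route is shorter precisely because it asserts that case away.
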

More formally, under the assumption of non-malicious task workers, i.e.,~worker answers correctly with probability $p$~$>$~0.5, the following observation holds:
Given a vote set $E_i$ at timestamp $t_i$ and an enriched vote set $E_{i+1}$ at $t_{i+1}$ with $|E_i|$ $<$ $|E_{i+1}|$, the decisions made by {\sc MinMax} at $t_{i+1}$ are more accurate than at $t_i$.
Imagine that at $t_i$, a majority-based approach {\sc MA} and {\sc MinMax} compute their ER solution.
Either decision made by {\sc MinMax} is then the same as made by {\sc MA} or `unknown' due to the weak transitivity of {\sc MinMax}.
To resolve unknown decisions, {\sc MinMax} enriches the edge set to $E_{i+1}$.
The ER solution of {\sc MinMax} based on $E_{i+1}$ has at least the same quality as for $E_{i}$ because $p$ $>$ 0.5.
As a result, the decisions made by {\sc MinMax} at $t_{i+1}$ are at least as good as the decisions made by {\sc MA} at $t_i$, i.e.,~{\sc MinMax} might incur higher cost but will not have worse quality than {\sc MA}.


\subsection{Discussion}\label{subsec:faulttolerant:discussion}
{\sc MinMax} is a decision function computed on absolute values, associating record pairs with at most two numerical values that describe the relationship of the records in this pair.
In contrast to working with absolute values, relative decision functions, i.e.,~computing the score of a record pair based on their distance and connectivity, can be used.
The drawback of these functions is that they require to compute and maintain {\it all} paths between record pairs which is inefficient to execute in practice, especially in highly connected graphs.
Furthermore, they have to be selected carefully as they need to fulfill the graph properties described in \Autoref{sec:pairwise}.
This cannot be guaranteed for functions such as {\it sum} and {\it count} for example, \cite{gruenheidER2012tech}.
Probabilistic decision functions are another set of functions that have been explored in \cite{DBLP:journals/pvldb/WhangLG13} amongst others where the authors also show that they are infeasible to compute even with a perfect crowd.
For the remainder of this paper, we will thus focus on {\sc MinMax} as a representative of a group of decision functions that adhere to the presented graph properties and provide minimal computational overhead.

\section{Clustering Algorithm}\label{sec:er}
Traditional ER algorithms use the pair-wise information in a graph to construct a clustering that represents the final ER solution.
This methodology is defined in the framework aglrotihm (Algorithm~\ref{alg:generalER}) as the data interpretation problem.
In the last section, we have discussed how we can find pair-wise information.
Now, we examine how we can efficiently cluster records into entities.
Example algorithms herefore are cut or correlation clustering, \cite{DBLP:journals/ml/BansalBC04}, which we adapt in the following to suit the incomplete resolution space that is inherent to incremental ER.
Before going into detail on the applied ER algorithm, we first establish how the votes graph and the {\sc MinMax} matrix can be adapted when new votes are collected from the crowd.
The observations that we make are specific to {\sc MinMax} but can be easily adapted to any decision function that has the properties defined in \Autoref{sec:faulttolerant}.

\begin{figure}
\begin{subfigure}{.3\columnwidth}
\includegraphics[width=\columnwidth]{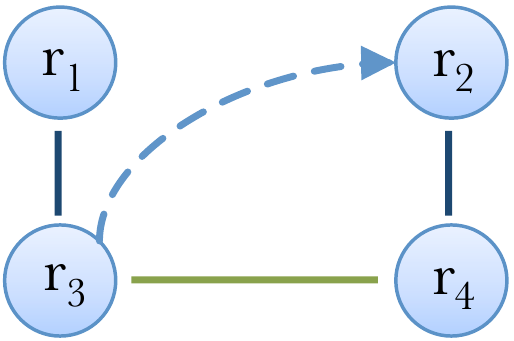}
\caption{$r_3 \rightarrow r_2$}
\end{subfigure}
\hfill
\begin{subfigure}{.3\columnwidth}
\includegraphics[width=\columnwidth]{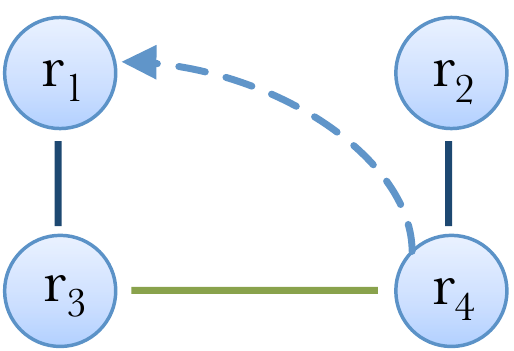}
\caption{$r_4 \rightarrow r_1$}
\end{subfigure}
\hfill
\begin{subfigure}{.3\columnwidth}
\includegraphics[width=\columnwidth]{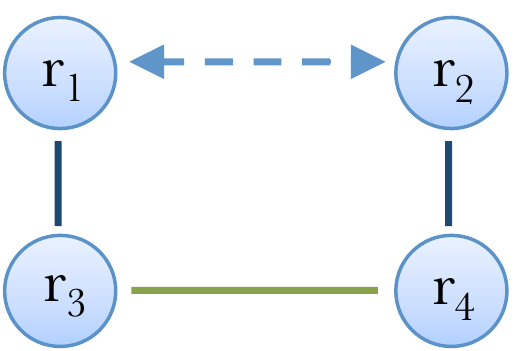}
\caption{$r_1 \leftrightarrow r_2$}
\end{subfigure}
\caption{Incremental update propagation.}\label{fig:updatepropagation}
\end{figure}

\subsection{Updating the Votes Graph}\label{subsec:update}
As explained in \Autoref{sec:problem}, incremental ER requires an adjustment of its queue and its ER solution whenever new information becomes available.
To understand the ER update process, i.e.,~the integration of a new positive or negative vote, first imagine how an update is propagated:
Obviously, an update affects the records that are directly modified.
From there on, it may affect connected records, which then further propagate the changes through the votes graph.
This notion is visualized in Figure \ref{fig:updatepropagation} for the running example where an edge between records $r_3$ and $r_4$ is modified while the edges between $r_1$ and $r_3$ resp. $r_2$ and $r_4$ remain untouched.
As a result, the following three update operations need to be explored:
\begin{enumerate}
  \item All paths connecting $r_3$ to any other record through $r_4$ may be modified.
  \item All paths connecting $r_4$ to any other record through $r_3$ may be modified.
  \item All paths that are connected through the new edge between $r_3$ and $r_4$ may be modified.
\end{enumerate}
The records that are (indirectly) modified by the changes are called the {\it update component} of the update.
They are those records for which either the positive or the negative path connecting them to any other record has been modified directly or indirectly through the update of an edge.
\begin{theorem}[Update Component]
The changes due to a new or modified edge $e_i \in E$ need to be propagated only along those edges $e_j \in E$ for which either $p^*_{ij}$ or $n^*_{ij}$ changes. 
\end{theorem}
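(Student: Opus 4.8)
The plan is to treat $p^*_{ij}$ and $n^*_{ij}$ as bottleneck (widest-path) quantities and to prove a \emph{locality of change} property: after a single edge is modified, the only record pairs whose scores can change are reachable from the modified edge through a contiguous region of changed scores, and propagation may halt at the boundary where scores stop changing. Concretely, I would split the claim into two directions. Necessity is immediate: any pair whose $p^*$ or $n^*$ changes must be recomputed, so those edges must be visited. The substance is sufficiency: a pair whose scores are \emph{unchanged} need neither be recomputed nor propagated through, so the update component---the set of pairs with changed scores---is exactly what a correct update must touch.

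The key step is a monotone-composition lemma built on Equations~\ref{eq:agg_positive} and~\ref{eq:agg_negative}. Both $p^*_{ab}$ and $n^*_{ab}$ are obtained from the scores of the sub-pairs into which a connecting path is split at an intermediate record, combined only through the monotone operators $\min$ and $\max$; for example $p^*_{ab}=\min\{p^*_{ak},p^*_{kb}\}$ and $n^*_{ab}=\max\{\min\{n^*_{ak},H^p_{kb}\},\min\{H^p_{ak},n^*_{kb}\}\}$. Hence a pair's score can change only if at least one sub-pair score entering its recurrence changes. Taking the contrapositive, if every sub-pair feeding $[r_a,r_b]$ keeps its old value then $[r_a,r_b]$ keeps its old score, so a pair outside the update component is fully determined by unchanged quantities and cannot be dragged into it. This is precisely what licenses a flood-fill that starts at the endpoints of the modified edge, recomputes each visited pair from its neighbours via Definition~\ref{def:PathScore}, and stops expanding along any direction as soon as the recomputed $p^*$ and $n^*$ match their previous values.

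I expect the main obstacle to be the case in which an edge weight \emph{decreases} or a negative edge is introduced (as in \Autoref{ex:MMfault}), so that a pair's former witness path---the one realizing its max-of-min score---is destroyed and the pair must fall back on an alternate path. Here I must show that the recurrence still recovers the correct new value while touching only pairs already inside, or on the frontier of, the update component, i.e.~that no alternate witness is overlooked by stopping at unchanged pairs. The argument rests on monotonicity of the bottleneck composition and on \Autoref{cor:edgedom}, which forbids dominated edges from contributing and thereby keeps the positive and negative score systems from interfering: a new optimal witness for any changed pair can be assembled from sub-pairs that are themselves changed or whose old (still-correct) values suffice, so the unchanged pairs form a valid stopping boundary that loses no path. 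Combining necessity with this sufficiency argument yields the theorem: changes need be propagated only along the edges whose $p^*$ or $n^*$ actually change. This mirrors the correctness of incremental widest-/shortest-path maintenance, where the affected set coincides exactly with the set of vertices whose distances change.
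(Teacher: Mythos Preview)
Your proposal is correct and follows essentially the same approach as the paper: both argue that because path scores are composed monotonically from subpath scores (via $\min$/$\max$), an unchanged subpath cannot transmit change to any path that uses it, so propagation may safely stop there. The paper's own proof is in fact only a two-sentence sketch of precisely this subpath-locality idea together with a pointer to prior work on incremental update propagation; your version supplies the necessity/sufficiency split, the explicit monotone-composition argument, and the decrease-case discussion that the paper leaves implicit.
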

\begin{proof}
The validity of this theorem derives from the fact that an update is a local change that affects the global computation of paths by modifying their subpaths.
Therefore, if a path is not changed, then the scores of other paths that use this subpath will not change.
The propagation of updates along edges is analogous to the propagation of change described in \cite{DBLP:journals/pvldb/GruenheidDS14}.
\end{proof}
The update functionality used to propagate changes for {\sc MinMax} is described in Algorithm~\ref{alg:recUpdate}.
Here, the increment computation is split into two parts:
Algorithm {\sc Update} describes how an increment can be realized, analogous to what is visualized in Figure~\ref{fig:updatepropagation}; Function {\sc Compute} then propagates the update along all affected paths recursively.
The actual update computation uses a boolean path marker $\gamma$ for declaring negative and positive paths (Lines \ref{alg:line:computestart} - \ref{alg:line:computeend}) as follows:
\begin{itemize}[leftmargin=.05\columnwidth]
  \item {\bf If $\gamma = false$}, the algorithm knows that the current path is negative. It therefore evaluates only positive outgoing edges for nodes that have not been traversed on this path (Line \ref{alg:line:validneg}).
For each of these candidate paths, {\sc Compute} is called again to propagate the update further.
\item {\bf If $\gamma = true$}, the algorithm follows the same procedure as above but has to consider all outgoing edges, independent of whether they are positive or negative.
As a result, it is recursively called for both scenarios (Lines \ref{alg:line:candidateallpos}-\ref{alg:line:candidateallneg}).
\end{itemize}
The {\sc Compute} function is used by Algorithm {\sc Update} for all three update scenarios described previously.
The first two scenarios (Lines \ref{alg:line:compute1} and \ref{alg:line:compute2}) explore all modified paths in the respective update direction.
Solving the third scenario (Line \ref{alg:line:compute3}) is more complex as all path alternatives that provide the {\sc MinMax} score for a record pair ($r_i,r_j$) need to be stored.
As this is infeasible to compute, only the path between $r_i$ and $r_j$ that has the highest path score, referred to as $path_{(r_i,r_j)}$, is stored.
In case there exist multiple paths with equal scores, the shortest path is chosen.
Obviously, this approximation allows for impreciseness in the result but as we show in our experimental evaluation, this simplification has no discernible impact on the result quality.
After the changes to the {\sc MinMax} matrix have been computed, we adjust the current ER solution by running a clustering algorithm on the set of modified records and their corresponding clusters (Line \ref{alg:line:cluster}).
\begin{algorithm2e}[t]
\caption{MinMax incremental update algorithm {\sc update} and supporting function {\sc compute}.}\label{alg:recUpdate}
\small
{\bf Algorithm: }{\sc update($G = (R,E)$, ($r_i$,$r_j$))} : void

compute\_new\_paths($r_i$,$r_j$)\;
\nonl\hspace{.02in}\textcolor{gray}{// compute paths of form $r_i$ -- $r_j$ -- $r_k$}\\
$R_i \leftarrow$ {\sc compute}($r_j$,\{$r_i$\},$w_{ij}>0$)\;\label{alg:line:compute1}
\nonl\hspace{.02in}\textcolor{gray}{// compute paths of form $r_j$ -- $r_i$ -- $r_k$}\\
$R_j \leftarrow$ {\sc compute}($r_i$,\{$r_j$\},$w_{ij}>0$)\;\label{alg:line:compute2}
\nonl\hspace{.02in}\textcolor{gray}{// compute paths of form $r_k$ -- $r_i$ -- $r_j$ -- $r_l$}\\
\lForEach{$r_k \in R_i$} {
	$R_k \leftarrow$ {\sc compute}($r_j$,\{$path_{r_k,r_j}$\},$w_{kj}>0$)\;\label{alg:line:compute3}
}
{\sc resolve}($R_i \cup R_j \cup R_k$)\;\label{alg:line:cluster}

\nonl\hrulefill

{\bf Function: }{\sc compute}$(r_i,R_i,\gamma)$ : void\label{alg:line:computestart}

\uIf{$\gamma$ = false} {
	\nonl\hspace{.02in}\textcolor{gray}{// compute negative paths}\\
	$R_j \leftarrow$ G.get\_records($r_i$,$R_i$) : $\forall r_j \in R_j: r_j \notin R_i \wedge p_{ij}>0$\;\label{alg:line:validneg}
	\ForEach{$r_j \in R_j; $} {
		$R_n \leftarrow$ compute\_negative\_paths($r_i$,$r_j$)\;\label{alg:line:candidateneg}
		\lForEach{$r_n \in R_n$} {{\sc compute}($r_n$,$R_i \cup r_i$,false)}
	}
}
\Else {
	\nonl\hspace{.02in}\textcolor{gray}{// compute negative and positive paths}\\
	$R_j \leftarrow$ G.get\_records($r_i$,$R_j$) : $\forall r_j \in R_j: r_j \notin R_i$\;\label{alg:line:validall}
	\ForEach{$r_j \in R_j$} {
		$R_n \leftarrow$ compute\_new\_paths($r_i$,$r_j$)\;\label{alg:line:candidateall}
		\ForEach{$r_n \in R_n$} {
			\lIf{$p_{ij} > 0$}{\sc compute}($r_n$,$R_i \cup r_i$,true)\;\label{alg:line:candidateallpos}
			\lIf{$n_{ij} > 0$}{\sc compute}($r_n$,$R_i \cup r_i$,false))\label{alg:line:candidateallneg}
		}
	}
}\label{alg:line:computeend}

{\bf return} $R_n$\;
\end{algorithm2e}
\begin{example}[{\sc update} Algorithm]\label{ex:inccmm}
Going back to the initial example (Figure~\ref{fig:MMexample}), imagine that the edge between $r_3$ and $r_4$ is newly inserted.
The algorithm then first explores the positive path between $r_3$ and $r_2$ but as $p^*_{(2,3)}$ is already 1, it is not further pursued.
In contrast, a negative path connecting the records did not previously exist, thus $n^*_{(2,3)}$ is set to 3.
Following the connecting positive edge to $r_1$, the algorithm then sets $n^*_{(1,3)}$ to 3.
As no path value connecting $r_4$ to any other record via $r_3$ is changed, the update process finishes.
\end{example}
\begin{theorem}[Complexity of {\sc update}]
Updating the {\sc MinMax} matrix has a complexity of O($m^2$) in the worst case where $m$ represents the number of records in the update component.
\end{theorem}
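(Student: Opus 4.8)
The plan is to bound the total work of {\sc update} by first bounding how many matrix entries can possibly change, and then arguing that each such entry is (re)computed only a constant number of times. By the Update Component theorem, all propagation is confined to the $m$ records of the update component: {\sc compute} traverses an edge only when doing so alters some $p^*$ or $n^*$, and every record reachable through such a score change lies in the component by definition. Restricting attention to these $m$ records, the {\sc MinMax} matrix has at most $\binom{m}{2}$ pairs, each carrying one positive and one negative score, so at most $O(m^2)$ entries are eligible to change. If I can show each eligible entry is touched $O(1)$ times, the claim follows immediately.

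Next I would interpret the positive-score recomputation as a single-source bottleneck (widest-path) problem. Since $p^*_{ij}$ is the maximum over positive paths of the minimum positive edge weight on the path, it enjoys the optimal-substructure property that makes a Dijkstra-style greedy correct: extending a path can never raise its bottleneck, so the record with the currently largest tentative score may be finalized at once. Running this greedy from the endpoints of the modified edge over the dense subgraph on $m$ nodes finalizes each record once and relaxes its $O(m)$ incident edges once, giving $O(m^2)$ work. The algorithm's choice to store only the single best path $path_{(r_i,r_j)}$ per pair (ties broken by shortest length) is exactly what sustains this bound: it prevents the recursion in {\sc compute} from enumerating alternative paths and guarantees each pair stabilizes once finalized.

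I would then handle the negative scores. A negative path is a positive path with exactly one negative edge inserted, which is precisely what the marker $\gamma$ tracks in {\sc compute}: with $\gamma = \text{true}$ the traversal may still take one negative edge, after which ($\gamma = \text{false}$) only positive edges extend it. Because at most one negative edge is allowed, the negative-score computation reduces to combining the already-computed single-source positive bottleneck values with each candidate negative edge, adding only a constant factor of bookkeeping per edge and thus preserving the $O(m^2)$ bound. The three update scenarios in {\sc update} (paths $r_i$--$r_j$--$r_k$, $r_j$--$r_i$--$r_k$, and $r_k$--$r_i$--$r_j$--$r_l$) each invoke {\sc compute} a bounded number of times, so their sum remains $O(m^2)$.

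The main obstacle is the implicit claim that {\sc compute}, written as a recursion over paths, does not revisit a pair more than $O(1)$ times; a naive path-enumerating DFS would be exponential. The crux is therefore to show that propagating only along score-changing edges (the Update Component theorem) together with retaining a single best path per pair turns the recursion into a node-finalizing search equivalent to dense Dijkstra, so the number of recomputations per entry is constant. Making this monotonicity-and-finalization argument precise — in particular ruling out repeated improvements to an already-finalized entry within a single update — is the delicate step; everything else is the routine $m \cdot m$ edge count.
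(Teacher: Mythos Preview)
Your proposal is correct in outline but takes a substantially more analytical route than the paper. The paper's proof is a two-line counting argument built directly on the three propagation phases of {\sc update}: the first two phases (propagating from $r_i$ through $r_j$, and from $r_j$ through $r_i$) each touch at most $m/2$ records of the update component, and the third phase (linking the two sides through the modified edge) then updates at most $(m/2)\cdot(m/2)$ pairwise scores, which is already $O(m^2)$. There is no Dijkstra analogy, no per-entry finalization argument---just a split of the update component into the two halves on either side of the new edge and a product count.

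What your approach buys is an explicit justification for the claim that each matrix entry is touched only $O(1)$ times. You correctly flag this as the ``delicate step,'' and your widest-path/bottleneck interpretation together with the single stored best path per pair is the right mechanism to make it rigorous. The paper, by contrast, simply \emph{counts entries that can change} rather than \emph{bounding visits per entry}, implicitly assuming the recursion does constant work per pair; that assumption is precisely what you set out to substantiate. So your plan is more careful on the point the paper glosses over, at the cost of considerably more machinery than the intended proof uses.
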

\begin{proof}
Updates connecting records $r_i$ and $r_j$ are propagated in three different ways as described previously.
The number of paths in the update component is maximized if the update affects $\frac{m}{2}$ in the first and second propagation phase (Lines \ref{alg:line:compute1} and \ref{alg:line:compute2}).
In the third phase, at most $\frac{m}{2} ^* \frac{m}{2}$ pair-wise scores need to be updated which results in a quadratic worst-case execution time. 
\end{proof}
Given the cost-optimized incremental structure of the votes graph, the average execution time is significantly lower in practice.

\subsection{Clustering Records}\label{subsec:er}
Using up-to-date positive and negative path scores, an ER solution based on the current state of the votes graph and {\sc MinMax} matrix can be computed.
For that purpose, we employ a variant of correlation clustering which adheres to the concepts of intra-cluster densitiy and inter-cluster sparsity.
To evaluate whether two records $r_i$ and $r_j$ refer to the same entity, the benefit and penalty of having them in the same cluster are weighed off according to their $p^*_{ij}$ and $n^*_{ij}$ scores:
If $p^*_{ij}>n^*_{ij}$ then it is more beneficiary to assign both to the same entity, otherwise they are assigned to different entities.
We use a variation of an established approximation for clustering records called {\it cautious} correlation clustering \cite{DBLP:journals/ml/BansalBC04} (Algorithm \ref{alg:cluster}).
The {\sc Resolve} algorithm proceeds as follows:
It first randomly selects a record $r_i$ that has not been assigned to a cluster.
It is then added to a new cluster $c_i$ together with all those records $r_j$ that it is positively connected to, i.e.,~$p^*_{ii}>n^*_{ij}$.
Record $r_j$ can only become a part of $c_i$ if it is not a part of a cluster yet.
It next initiates a vertex removal phase in which all those records are removed from $c_i$ where the penalties of keeping them in $c_i$ outweigh the benefits (Line \ref{alg:cluster:remove}).
For example if a $r_j$ has a positive relationship with weight 1 to $r_i$ but a negative relationship with weight 2 with $r_k \in c_i$, it is better to remove $r_j$ from $r_k$ if the score of fulfills $[r_i,r_k]>1$.
The `goodness' of record $r_j$ is examined in function {\sc isGood} which is a weighted validity computation analogous to the computation of $\delta$-goodness in \cite{DBLP:journals/ml/BansalBC04}.
After the vertex removal phase has been completed, a record addition phase is initiated.
Here, any records are added to $c_i$ if the benefit of adding that record outweighs the penalty (Line \ref{alg:cluster:add}).

\begin{algorithm2e}[t]
\caption{Weighted path ER algorithm {\sc resolve} and supporting function {\sc isGood}.}\label{alg:cluster}
\small
{\bf Algorithm: } {\sc resolve}$(G = (R,E))$ : ER solution $C$

{\ForEach{$r_i \in R; r_i \notin C$\label{alg:cluster:iterate}} {
	$c_i.add(r_i)$\;
	$c_i.addAll(r_j)$ : $r_j \in Q$; $p^*_{ij}>n^*_{ij}$\;
	\nonl\hspace{.02in}\textcolor{gray}{// record removal phase}\\
	\ForEach{$r_k \in c_i$} {\label{alg:cluster:removeloop}
		\If{$\neg$ {\sc isGood}($r_k,c_i$)}{
			$c_i$.remove($r_k$)\;\label{alg:cluster:remove}
			reset loop in Line \ref{alg:cluster:removeloop} and jump there\;
		}
	}
	\nonl\hspace{.02in}\textcolor{gray}{// record addition phase}\\
	\ForEach{$r_k \in R; r_k \notin c_i \wedge \exists r_i \in c_i: p^*_{ik}>n^*_{ik}$} {\label{alg:cluster:addloop}
		\If{{\sc isGood}($r_k,c_i$)}{
			$c_i$.add($r_k$)\;\label{alg:cluster:add}
			reset loop in Line \ref{alg:cluster:addloop} and jump there\;
		}
	}
	\If{$c_i = \emptyset$}{
		$c_i.add(r_i)$\;
	}
	$C.add(c_i)$\;\label{alg:cluster:addcluster}
}}
{\bf return} C;

\nonl\hrulefill

{\bf Function: }{\sc isGood}$(r_i,c_i)$ : boolean

{score, penalty = 0 \;}
\ForEach{$r_j \in c_i$;  $r_i\neq r_j$} {
	\lIf{$p^*_{kl}>n^*_{kl}$} {
		score += $p^*_{kl} - n^*_{kl}$\;
	}
	\lElse {
		penalty += $n^*_{kl} - p^*_{kl}$\;
	}
}
{\bf return} score$>$penalty\label{alg:cluster:good}
\end{algorithm2e}

\begin{example}[{\sc resolve} Algorithm]\label{ex:recClust}
In \Autoref{fig:MMexample}, the votes graph contains records $R = \{r_1,r_2,r_3,r_4\}$.
Picking $r_1$ randomly, $r_1$ is added in addition to $r_2$, with $p^*_{(1,2)}>n^*_{(1,2)}=3>1$, to cluster $c_1$.
Removing either $r_1$ or $r_2$ from the cluster does not provide a better clustering score, thus the cluster remains unchanged.
Neither $r_3$ nor $r_4$ are added to $c_1$ as for both the negative path values to any record in $c_1$ is at least as high as the respective positive scores.
Similarily, the algorithm constructs cluster $c_2$ containing both $r_3$ and $r_4$.
Thus, the final clustering is [\{$r_1,r_2$\},\{$r_3,r_4$\}].
\end{example} 
Instead of running {\sc resolve} on all records after an update, it is only run on those records that are in any cluster that has been touched by an update (Algorithm \ref{alg:recUpdate}, Line \ref{alg:line:cluster}).
Since these records do not necessarily represent whole clusters, the algorithm first expands the update component into a {\it transitive} update component, i.e.,~the original update component and all records that are in any of the clusters contained in the update component.
Partial ER clustering for the transitive update component is triggered after each update to the votes graph to provide a consistent ER solution.
\begin{theorem}[Complexity of {\sc resolve}]
Running entity resolution on the updated parts of the {\sc MinMax} matrix has a worst case complexity in $O(|R|m^*)$.
\end{theorem}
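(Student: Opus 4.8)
The plan is to separate the two factors in the bound and show their product dominates the total work that {\sc resolve} performs when it is restricted to the transitive update component (which, following the notation of the {\sc update} complexity proof, I take to have $m^*$ records). The key structural fact I would exploit is that every record is placed into exactly one cluster, so the outer loop of {\sc resolve} produces a collection of clusters $c_1, c_2, \ldots$ whose sizes satisfy $\sum_i |c_i| = m^*$. I would therefore bound the cost of building a single cluster $c_i$ in terms of $|R|$ and $|c_i|$, and then sum over all clusters, letting the size constraint telescope into the factor $m^*$.

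First I would analyse the three phases that build one cluster $c_i$. The initial population step and the record-removal phase range only over members of $c_i$ and over candidates directly connected to $r_i$, so they are subsumed by the heavier record-addition phase. The addition phase is the source of the $|R|$ factor: it ranges over every $r_k \in R$ as a potential new member and tests each with {\sc isGood}, whose cost is proportional to the current cluster size. Thus a single pass of the addition phase for $c_i$ costs $O(|R| \, |c_i|)$. Summing a single pass over all clusters gives $\sum_i O(|R| \, |c_i|) = O(|R|) \sum_i |c_i| = O(|R| \, m^*)$, which already matches the claimed bound and absorbs the {\sc isGood} cost precisely because the cluster sizes add up to $m^*$.

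The step I expect to be the main obstacle is the ``reset loop and jump there'' behaviour in both the removal and the addition phases, since a naive reading lets each phase restart and re-scan its candidate set every time a record enters or leaves $c_i$, which threatens to inflate the per-cluster cost well beyond a single pass. The remedy is an amortised-counting argument that charges each restart to an actual membership change: a restart in the removal phase to a record leaving $c_i$ and a restart in the addition phase to a record joining $c_i$. Because the number of such membership changes across the whole execution is controlled by the $m^*$ records being clustered, the repeated scans can be folded back into the $O(|R| \, m^*)$ total rather than multiplying it by a further size factor. I would make this precise by bounding the number of restarts and assigning $O(|R|)$ amortised work to each of the $O(m^*)$ membership events, mirroring the informal worst-case reasoning already used for the complexity of {\sc update}.
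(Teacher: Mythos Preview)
Your approach differs substantially from the paper's. The paper's argument is a three-line bound at a much coarser granularity: it simply asserts that, per outer-loop iteration, the removal phase takes at most $m^*$ repetitions and the addition phase is bounded by $|R|$; since both phases are initiated for each of the $m^*$ records and $m^* < |R|$, the product $O(|R|\,m^*)$ follows directly. There is no partition-and-telescope over cluster sizes, no accounting for the cost of {\sc isGood}, and no amortisation over restarts --- those concerns are simply not raised.

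Your proposal is more careful, and precisely because of that it surfaces two complications that you do not fully close. First, the $O(|R|)$ charge you assign per membership event in the amortised step is inconsistent with your own earlier observation that a single pass of the addition phase already costs $O(|R|\,|c_i|)$ once {\sc isGood} is included; each restart therefore costs $O(|R|\,|c_i|)$, not $O(|R|)$, and with up to $|c_i|$ additions the per-cluster work can reach $O(|R|\,|c_i|^2)$, summing to $O(|R|\,(m^*)^2)$ in the one-big-cluster case. Second, the claim that membership events total $O(m^*)$ needs an argument: a record that is pulled into $c_i$ at the initial-population step and then removed remains unclustered and can be pulled into a later $c_{i'}$, so removals may recur across outer iterations. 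The paper sidesteps both issues by not counting at this level of detail; if you want to stay at your level of rigour you either need a sharper amortisation that absorbs the {\sc isGood} factor, or you should treat {\sc isGood} as unit cost the way the paper implicitly does and then your restart-counting becomes essentially the paper's bound.
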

\begin{proof}
The vertex removal phase requires $m^*$ repetitions at most where $m^*$ is the number of records in the transitive update component.
The vertex addition phase is bounded by the number of overall records in the graph, $|R|$.
As both phases are initiated for all $m^*$ records and $m^*<|R|$, {\sc resolve} runs in $O(|R| m^*)$.
\end{proof}

\eat{
\begin{figure}[t]
\centering
\pgfplotstableread{graphdata/synthetic_time.dat}{\synthtime}
\hspace*{-.5in}
\begin{subfigure}[b]{.7\columnwidth}
\begin{tikzpicture}[font=\small]
	\small
	\begin{axis}
	[xlabel=Crowd Accesses (in K),
	ylabel= Update Time (in ms),
	legend style={at={(axis cs:130,5)},anchor=south west, nodes=right, draw=none, outer sep = 0pt, inner sep = 0pt, font=\tiny},
	xticklabel style={/pgf/number format/.cd,fixed,precision=2},
    xtick={0,40,80,120,160,200},
    ytick={0,50,100,150,200},
    width=\columnwidth,
    height=.6\columnwidth
    ]
	\addplot [green2, very thick, mark options=solid] table [x={cost100}, y={time100}] {\synthtime};
	\addlegendentry{$n$=100}
	\addplot [blue1, dotted,very thick, mark options=solid] table [x={cost250}, y={time250}] {\synthtime};
	\addlegendentry{$n$=250}
	\addplot [green1, dashed, very thick, mark options=solid] table [x={cost500}, y={time500}] {\synthtime};
	\addlegendentry{$n$=500}
	\addplot [red1, densely dotted,very thick, mark options=solid] table [x={cost1000}, y={time1000}] {\synthtime};
	\addlegendentry{$n$=1000}

	\end{axis}%
\end{tikzpicture}
\end{subfigure}
\caption{Time measurements for synthetic datasets with varying number of records $n$ and a perfect crowd.}\label{fig:synthtime}
\end{figure}}

\begin{table}
\caption{Time measurements for synthetic datasets with varying number of records $n$ and a perfect crowd.}\label{tab:synthtime}
\centering
\begin{tabular}{c|cccc}
\hline
$n$ & 100 & 250 & 500 & 1000\\
\hline
{\bf Average} & 5.81ms & 14.01ms & 40.3ms & 130.4ms\\
{\bf Max} & 36.33ms & 24.4ms & 58.7ms & 172.5ms\\
{\bf Min} & 4.4ms & 11.55ms & 34.88ms & 118.18ms\\
\hline
\end{tabular}
\end{table}

\subsection{Complexity}
In practice, the {\sc Update} as well as {\sc Resolve} algorithm perform near linear when scaling up the number of nodes in the votes graph.
To demonstrate the scaling behavior, we examined the processing time, i.e.,~the end-to-end time spent on integrating an update, for a synthetic dataset as shown in \Autoref{tab:synthtime}.
The dataset has a maximum entity size of 50 and the distribution of entities within the ground truth follows a Zipf distribution.
Varying the number of records, we can see that the processing time increases as expected due to the bigger votes graph which entails more propagation of updates.
The increase is not quadratic but approximately linear:
For $n=100$ the time spent on average per update is 5.81ms, for $n=250$ 14.01ms, for $n=500$ 40.3ms, and for $n=1000$ 130.4ms.  
The reason is that the graph size is not the only influence on execution time.
Entity size as well as the amount of noise in the votes graph play a decisive role for our algorithm's performance.
The higher the number of positive votes in the graph, the higher the number of positive or negative paths in the votes graph.
As a result, processing time per update will increase.

In different synthetic experiments we observe that the computational effort per update is not the bottleneck for crowdsourced ER.
Instead, experiments where clusters are very small usually take more time to execute end to end.
The reason is that even though the time spent on one update is small, a lot more updates are required to complete the votes graph and to find a path between all records.
The processing time of our real-world datasets is discussed in \Autoref{sec:experiments}.

\subsection{Path Optimization}\label{sec:faultpathopt}
While constructing the {\sc MinMax} matrix, the {\sc Update} algorithm described above maintains the current path and its score.
To enable task requests efficiently, this knowledge can be used to define the investment points for further requests:
If the score of a path needs to be modified (i.e.,~to increase certainty in the result), the minimal edge(s) in that path are those record pairs that have to be answered by the crowd.
We observe that a minimal edge in a path between records $r_i$ and $r_j$ can most likely be improved if it satisfies either of the following requirements:
\begin{itemize}
  \item If the path that is currently inspected is positive, then $p_{ij}\geq n_{ij}$ must hold.
  \item If the path is negative, then $n_{ij}\geq p_{ij}$ must hold.
\end{itemize}
The intuition here is that an edge can only get stronger if it gets more votes in its favor.
This is more likely if it is already a dominating edge because if it is dominated, the crowd so far thought this decision to be the wrong one for this record relationship.
Asking more questions for that record pair will thus likely result in the strengthening of the opposite decision which will not enhance the certainty of the candidate pair in question.
\begin{example}[Edge Selection]
Observing the paths connecting records $r_2$ and $r_3$ in \Autoref{fig:MMexample}, there are two possible minimal paths, $\{r_3,r_1,r_2\}$ and $\{r_3,r_4,r_2\}$.
The first path has its minimal edge between records $r_1$ and $r_3$ and since this edge is not dominated, improving the path at minimal cost would result in a request for more information on that record pair.
The improvement of the second path on the other is not feasible as the minimal edge between $r_2$ and $r_4$ is dominated by its counterpart.
\end{example}

\section{Next-Crowdsource Problem}\label{sec:next}
As explained in Section \ref{sec:sub:problem}, the next-crowdsource problem is to decide which record pairs should be resolved next to improve result quality for the lowest possible additional budget. 
Whichever task, i.e.,~comparison between records, is asked next is dependent on the current state of the votes graph.
If there is no knowledge on how any records are related, there cannot be an informed decision on which task to issue to the crowd.
On the other hand, if there is some pairwise information available it can be leveraged to refine the current ER solution.

In our work, we assume independence of data preprocessing but can leverage a preprocessing step analogous to the steps described in \cite{DBLP:conf/sigmod/WangLKFF13,DBLP:journals/pvldb/WhangLG13}.
Please refer to the publications dataset in \Autoref{sec:experiments} as an example.
Note however that preprocessing steps such as hint generation through automatic similarity computation or blocking mechanisms that reduce the search space are an orthogonal problem to this line of work.
The reason is that automatic similarity metrics commonly rely on some kind of syntactic similarity that is not necessarily equivalent to what humans perceive.
For example if pictures show a well-known landmark from different angles, humans can use knowledge about the building to answer the task based on semantic knowledge.
Thus, we focus on enhancing the already existing votes graph through human knowledge in this section.

\smallsection{Solution Space}
Two alternative approaches to solving the next-crowdsource problem through a queuing system are discussed in \Autoref{sec:sub:problem}:
The first one generates all candidate pairs, monotonically requesting pairs, and terminating when all pairs have been asked or inferred.
The second approach generates candidate pairs, requesting them iteratively while regularly checking whether other record pairs should be reinserted.
Thus, it proceeds non-monotonically.
For each of these approaches, a specified decision function provides information about record relationships.
If there exist uncertain record pairs, the next-crowdsource component needs to decide a) in which order uncertain pairs should be addressed and b) how much budget should be allocated per candidate pair.
Furthermore, when working with the crowd execution time is always a factor as outsourcing tasks to humans lowers the overall end-to-end time especially if the applied algorithm proceeds sequentially.
This makes parallelization techniques and how they impact result quality and execution time an important alley to explore.
To address both of these challenges, the next section details three querying strategies that can be used for record pair ordering.
We introduce two types of parallelism for incremental ER and discuss their benefits and disadvantages in comparison to sequential task execution.

%
%
%
%
%
%
%

\subsection{Querying Strategies}
We call {\it querying strategies} those budget allocation strategies that determine the ordering of unresolved candidate pairs in the priority queue.
To determine the ordering of candidate pairs, they use the consensus measure $\varphi_{ij}$ of two records $r_i$ and $r_j$.
The consensus of a record pair [$r_i$,$r_j$] describes how a decision function $f$ estimates the relationship of $r_i$ and $r_j$ numerically.
More specifically, it is a potentially enriched output of $f$.
In \Autoref{sec:faulttolerant}, we discussed decision funtions that map the relationship of $r_i$ and $r_j$ to `Yes', `No', and `Unknown' decisions.
If $\varphi_{ij} \in [-1,1]$, these correspond to 1 (`Yes'), -1 (`No'), and 0 (`Unknown').
Given more expressive decision functions such as $f_M$ or majority-based decision mechanisms, the consensus measure can be adapted:
For example given a quorum of 3, $p^*_{ij}=3$, and $n^*_{ij}=1$, $\varphi_{ij}$ is 0.67.
Other decision functions that can be employed here are normalized distance-based metrics or adjacency-based mechanisms that decide on the consensus based on the relationship of a record to its neighbors such as \cite{DBLP:journals/pvldb/VesdapuntBD14}.
Querying strategies then use the consensus of record pairs to prioritize record pairs over others.
\begin{definition}[Querying Strategy]
The querying strategy $\omega$ of an ER process determines the order in which candidate pairs are evaluated.
For that purpose, it uses the consensus measure $\varphi_{ij}$ to determine the relative position of a record pair [$r_i$,$r_j$] in $Q$.
\end{definition}
Any querying strategy applied in the context of ER realizes some trade-off between the budget spent and the quality that it wants to achieve.
However, strategies may vary on the exact trade-off and whether they favor quality over cost or vice versa.
In general, querying strategies are similar to active learning \cite{DBLP:series/synthesis/2012Settles} because we try to learn the results by actively asking more questions.

The first strategy that we present next aims to maintain consistency within the part of the dataset that the algorithm has already looked at.
It is therefore designed to verify that the decisions made so far are in fact correct.
We refer to this querying strategy as the {\it error reduction} strategy ({\sc ErS}).
\begin{definition}[ErS]
Given record pairs [$r_i$,$r_j$] and [$r_k$,$r_l$], query strategy $\omega_{\sc ErS}$ prioritizes [$r_i$,$r_j$] over [$r_k$,$r_l$] if
$$|\varphi(r_i,r_j)| > |\varphi(r_k,r_l)|$$
and if for both pairs of records $|\varphi(r_i,r_j)|\neq~1$ respectively $\varphi(r_k,r_l)|\neq~1$ holds.
\end{definition}
This strategy is able to provide high precision ER solutions:
Instead of investing the budget into solving a lot of different pairs only partially, it focuses on resolving edges before moving on to the next recor pair.
As a result, this strategy provides introduces votes into the graph that it is certain about.
In contrast, the second strategy that we introduce promotes completeness over correctness and distributes the budget in a breadth-first manner.
It allows for a fast (but possibly imprecise) initial representation of record relationships within the dataset.
We will refer to this strategy as the {\it uncertainty reduction} strategy ({\sc UrS}) as it tries to obtain some information for any record pair first before asking more questions in depth to become more certain the record relationships.
\begin{definition}[UrS]
Given record pairs [$r_i$,$r_j$] and [$r_k$,$r_l$], query strategy $\omega_{\sc UrS}$ prioritizes [$r_i$,$r_j$] over [$r_k$,$r_l$] if
$$|\varphi(r_i,r_j)| < |\varphi(r_k,r_l)|$$
and if for both pairs of records $|\varphi(r_i,r_j)|\neq~1$ respectively $\varphi(r_k,r_l)|\neq~1$ holds.
\end{definition}
As we assume uncertainty to be at its highest when $\varphi(r_i,r_j)=0$, the absolute distance of the consensus measure to 0 is the certainty of [$r_i$,$r_j$] for $f$.
With no prior knowledge of record pairs, [$r_i$,$r_j$]$=0$ holds.
Record pair [$r_i$,$r_j$] would therefore be prioritized in comparison to any other record pair [$r_k$,$r_l$] for which $0<|\varphi(r_k,r_l)|<1$ holds.
The reason is that such a score would indicate some knowledge of [$r_k$,$r_l$].

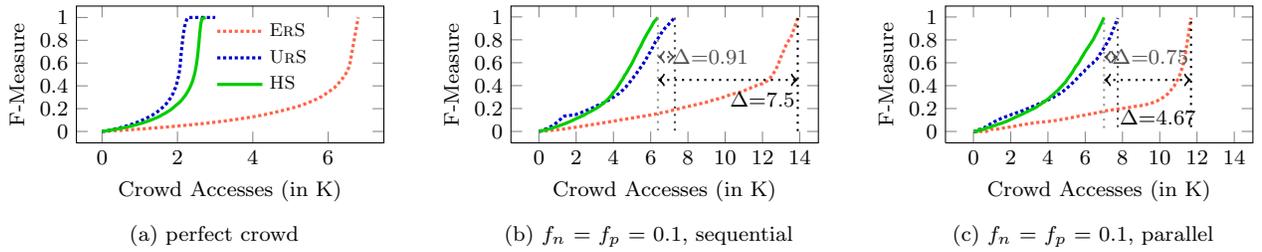
\begin{figure*}[t]
\centering
\pgfplotstableread{graphdata/landmarks_strategycomparison_sequential_fp0fn0.dat}{\strategiesperfect}
\pgfplotstableread{graphdata/landmarks_strategycomparison_sequential_fp01fn01.dat}{\strategiesnoise}
\pgfplotstableread{graphdata/landmarks_strategycomparison_parallel_fp01fn01.dat}{\strategiesnoiseparallel}
\begin{subfigure}[b]{.32\textwidth}
\begin{tikzpicture}[font=\small]
	\small
	\begin{axis}
	[xlabel=Crowd Accesses (in K),
	ylabel=F-Measure,
	legend style={at={(axis cs:3,1)},anchor=north west, nodes=right, draw=none, inner sep = 0pt, font=\scriptsize},
	xticklabel style={/pgf/number format/.cd,fixed,precision=2},
    xtick={0,2,4,6,8},
    ytick={0,0.2,0.4,0.6,0.8,1.0},
    width=\columnwidth,
    height=.6\columnwidth
    ]
	\addplot [red1, densely dotted,very thick, mark options=solid] table [x={cost_cau}, y={cau}] {\strategiesperfect};
	\addlegendentry{{\sc ErS}}
	\addplot [blue1, densely dotted,very thick, mark options=solid] table [x={cost_opt}, y={opt}] {\strategiesperfect};
	\addlegendentry{{\sc UrS}}
	\addplot [green2,very thick, mark options=solid] table [x={cost_hyb}, y={hyb}] {\strategiesperfect};
	\addlegendentry{{\sc HS}}
	\end{axis}%
\end{tikzpicture}
\caption{perfect crowd}\label{fig:landmarkssynth0}
\end{subfigure}
\begin{subfigure}[b]{.32\textwidth}
\begin{tikzpicture}[font=\small]
	\small
	\begin{axis}
	[xlabel=Crowd Accesses (in K),
	ylabel=F-Measure,
	xticklabel style={/pgf/number format/.cd,fixed,precision=2},
    xtick={0,2,4,6,8,10,12,14},
    xmax=15,
    ytick={0,0.2,0.4,0.6,0.8,1.0},
    width=\columnwidth,
    height=.6\columnwidth
    ]
	\addplot [red1, densely dotted,very thick, mark options=solid] table [x={cost_cau}, y={cau}] {\strategiesnoise};
	\addplot [blue1, densely dotted,very thick, mark options=solid] table [x={cost_opt}, y={opt}] {\strategiesnoise};
	\addplot [green2,very thick, mark options=solid] table [x={cost_hyb}, y={hyb}] {\strategiesnoise};

	\draw[black!40, thick, dotted] (axis cs:6.389,0.995) -- (axis cs:6.389,0);
	
	\draw[black, thick, dotted] (axis cs:13.892,0.990) -- (axis cs:13.892,0);
	\draw[<->, black, thick, dotted] (axis cs:6.389,0.45) -- (axis cs:13.892,0.45);
	\node[black] (d) at (axis cs: 12,0.28) {\small$\Delta$=7.5};
	
	\draw[black!70, thick, dotted] (axis cs:7.297,0.994) -- (axis cs:7.297,0);
	\draw[<->, black!70, thick, dotted] (axis cs:6.389,0.65) -- (axis cs:7.297,0.65);
	\node[black!70] (d) at (axis cs: 9.2,.65) {\small$\Delta$=0.91};
	
	\end{axis}%
\end{tikzpicture}
\caption{$f_n$ = $f_p$ = 0.1, sequential}\label{fig:landmarkssynth01}
\end{subfigure}
\begin{subfigure}[b]{.32\textwidth}
\begin{tikzpicture}[font=\small]
	\small
	\begin{axis}
	[xlabel=Crowd Accesses (in K),
	ylabel=F-Measure,
	legend style={at={(axis cs:16,.8)},anchor=north west, nodes=right, draw=none, inner sep = 0pt, font=\scriptsize},
	xticklabel style={/pgf/number format/.cd,fixed,precision=2},
    xtick={0,2,4,6,8,10,12,14},
    xmax=15,
    ytick={0,0.2,0.4,0.6,0.8,1.0},
    width=\columnwidth,
    height=.6\columnwidth
    ]
	\addplot [red1, densely dotted,very thick, mark options=solid] table [x={cost_cau}, y={cau}] {\strategiesnoiseparallel};
	\addplot [blue1, densely dotted,very thick, mark options=solid] table [x={cost_opt}, y={opt}] {\strategiesnoiseparallel};
	\addplot [green2,very thick, mark options=solid] table [x={cost_hyb}, y={hyb}] {\strategiesnoiseparallel};
	
	\draw[black!40, thick, dotted] (axis cs:7.01,0.9529) -- (axis cs:7.01,0);
	
	\draw[black, thick, dotted] (axis cs:11.678,0.9606) -- (axis cs:11.678,0);
	\draw[<->, black, thick, dotted] (axis cs:7.01,0.45) -- (axis cs:11.678,0.45);
	\node[black] (d) at (axis cs: 9.9,0.12) {\small$\Delta$=4.67};
	
	\draw[black!70, thick, dotted] (axis cs:7.748,0.9606) -- (axis cs:7.748,0);
	\draw[<->, black!70, thick, dotted] (axis cs:7.01,0.65) -- (axis cs:7.748,0.65);
	\node[black!70] (d) at (axis cs: 9.5,.65) {\small$\Delta$=0.75};
	\end{axis}%
\end{tikzpicture}
\caption{$f_n$ = $f_p$ = 0.1, parallel}\label{fig:landmarkssynth01par}
\end{subfigure}
\caption{Landmarks dataset, with noise through varying synthetic crowds for varying querying strategies.}\label{fig:perfectcrowd}
\end{figure*}

\smallsection{Hybrid strategy ({\sc HS})}
Under the assumption that the crowd fulfills their tasks perfectly, {\sc UrS} will always dominate {\sc ErS} in terms of its quality gain per cost unit as shown in \Autoref{fig:landmarkssynth0} on a dataset consisting of landmarks in Paris, France, and Barcelona, Spain with a perfect synthetic oracle.
Quality is captured here through the f-measure of the ER solution while one cost unit corresponds to one evaluated record pair, i.e.,~one crowdsoured task.
Details about the dataset and the experimental setup can be found in \Autoref{sec:experiments}.

In real-world use cases, the assumption of a perfect oracle does not hold which causes the dynamics of the two querying strategies to shift:
The quality improvement of {\sc ErS} is predictable albeit at a higher overall cost while {\sc UrS} makes an increased number of false local decisions especially if only little budget has been invested.
These issues can be addressed through a hybrid querying strategy that combines the best of both.
As positive task decisions have more impact on the ER solution due to the way (anti-)transitivity is defined, the hybrid strategy resolves positive decision candidates cautiously, following {\sc ErS}.
Candidate pairs that potentially lead to negative decisions are processed according to {\sc UrS} to avoid spending budget on a request that potentially has not a lot of impact.
This differentiation of decision strategies is analogous to the idea of reward functions in active learning:
A positive vote indicates that the task is an important one and should be prioritized.
In contrast, a negative vote would not trigger a reward and the record pair would be crowdsourced again at a later point again.
\begin{definition}[Hybrid Strategy]
Given record pairs [$r_i$,$r_j$] and [$r_k$,$r_l$], query strategy $\omega_{\sc HS}$ prioritizes [$r_i$,$r_j$] over [$r_k$,$r_l$] if
$$\varphi(r_i,r_j) > \varphi(r_k,r_l)$$
and if for both pairs of records $|\varphi(r_i,r_j)|\neq~1$ respectively $\varphi(r_k,r_l)|\neq~1$ holds.
\end{definition}
If $\varphi(r_i,r_j)$ and $\varphi(r_k,r_l)$ are positive, the strategy will thus choose the record pair that has less uncertainty but is not yet certain following $\omega_{ErS}$.
In contrast, if $\varphi(r_i,r_j)$ is below 0, it will only be chosen over $\varphi(r_k,r_l)$ if its distance to 0 is lower which is the case if its consensus measure is bigger.

To enable a seamless hybrid strategy, it can be implemented through a double queue system:
One of the queues contains positive candidate pairs and applies {\sc ErS} while {\sc UrS} is used for the negative queue.
Whenever a record pair $[r_i,r_j]$ is inserted into this hybrid queuing system, it is inserted into either the positive or negative queue based on the current state of $p^*_{ij}$ and $n^*_{ij}$.
For record pair extraction, the positive queue is accessed first and a record pair is pulled from the negative queue only if the positive queue does not contain any pairs.

\begin{figure}[t]
\begin{minipage}{.5\columnwidth}
	\centering
	\includegraphics[width=.52\columnwidth]{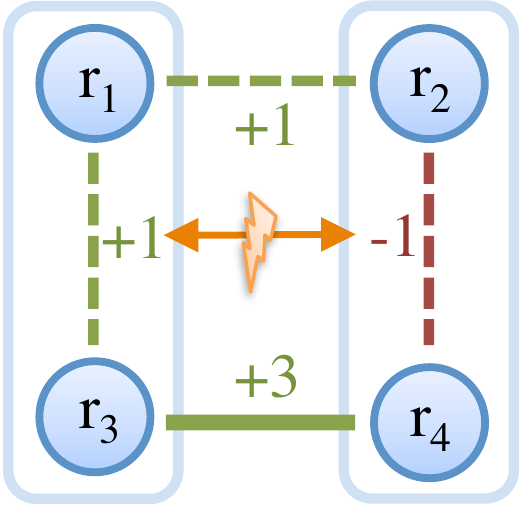}
\end{minipage}
\hfill
\begin{minipage}[t]{.48\columnwidth}
\begin{tabular}{c|c c c c}
\multicolumn{1}{c|}{}& {$r_1$} & {$r_2$} & {$r_3$} & {$r_4$}\\
\hline
$r_1$ & - & \color{ForestGreen}{+1} & \color{ForestGreen}{+1} & \color{ForestGreen}{+1}\\
$r_2$ & \color{BrickRed}{-1} & - & \color{ForestGreen}{+1} & \color{ForestGreen}{+1}\\
$r_3$ & \color{BrickRed}{-1} & \color{BrickRed}{-1} & - & \color{ForestGreen}{+3}\\
$r_4$ & \color{BrickRed}{-1} & \color{BrickRed}{-1} & \color{BrickRed}{-1} & -\\
\end{tabular}
\end{minipage}

\vspace{.07in}
\hspace{.55in}\small{Votes Graph \hspace{.95in} {\sc MinMax} Matrix}

\caption{Parallelization problem.}\label{fig:parallelproblem}
\end{figure}

\subsection{Workload Parallelization}\label{subsec:parallel}
Integrating the crowd into the information collection process comes not only at a monetary but also temporal cost.
In fact, it increases response time drastically.
This is why task parallelization has become an established technique to limit the increase in execution time.
Generally, there are two types of parallelization that are applicable for this kind of pair-wise crowdsourcing, {\it intra-task} and {\it inter-task} parallelization.

\smallsection{Intra-task parallelization}
Instead of asking for a record pair once, the same pair is issued multiple times to different workers.
For example for quorum or majority-based techniques, it is possible to compute the required number of tasks to achieve certainty on the fly:
It depends on how many answers are required and how many tasks have been returned for this record pair previously.

\smallsection{Inter-task parallelization}
Instead of asking for a series of distinct record pairs sequentially, the required pairs are analyzed and if they found independent of each other, they are released onto the crowdsourcing platform in parallel within a batch.
This technique has been discussed in prior work, \cite{DBLP:conf/sigmod/WangLKFF13}, and is commonly realized by generating a spanning tree over all entities.
In practice, we point out that uncertainty of the crowd hinders inter-task parallelization as shown with the following example.
\begin{example}[Inter-task Parallelization]
In \Autoref{fig:parallelproblem}, [$r_3$,$r_4$] is already known after which the algorithm decides to parallelize tasks [$r_1$,$r_2$], [$r_1$,$r_3$], and [$r_2$,$r_4$].
The answers to these tasks causes uncertainty to arise between $r_1$ and $r_3$ resp. $r_2$ and $r_4$.
\end{example}
When comparing sequential and parallel task execution for the different querying strategies with noisy answers occuring with a likelihood of 10\%, (\Autoref{fig:landmarkssynth01} respectively \Autoref{fig:landmarkssynth01par}), the following three observations can be made:
First, the absolute cost for {\sc ErS} decreases when parallelizing task execution.
The reason herefore is that inter-task parallelization requires connecting previously unconnected entities in a manner aimed to provide maximal coverage.
To generate the spanning tree, the algorithm iterates over all candidate pairs in the queue in descending order of their objective.
As a result, it is more likely that positive rather than negative candidate pairs are part of the spanning tree which in return improves the performance of {\sc ErS}.
The second observation is that the cost for both {\sc UrS} and {\sc HS} increases.
Here, the opposite effect as for {\sc ErS} takes hold:
Parallelizing them forces both strategies to execute tasks with lower priority within the current batch if other (higher priority) tasks are not consistent with the existing spanning tree.
Additionally, the resolution of contradictions in the result set incurs higher cost because of the necessity for more information but at the same time it lowers result quality due to noisier intermediate results.
Third, {\sc HS} maintains a better result cost and quality trade-off than either alternative approach because it is still able to leverage the benefits of both {\sc ErS} and {\sc UrS}.

\section{Experimental Evaluation}\label{sec:experiments}

In this section, we compare fault-tolerant data interpretation mechanisms with consensus-based approaches and discuss the different querying strategies and parallelization techniques introduced in the previous section.

\subsection{Experimental Setup}
Two different real-world datasets that contain pictures resp. publication records are used in this evaluation.
We evaluated these datasets both with synthetic and real-world crowds.
To provide an extensive experimental evaluation on crowdsourced data, the datasets were crowdsourced completely on the Amazon Mechanical Turk platform, i.e.,~ten crowd workers solved each possible record pair.
Each of these workers has a 90\% acceptance rate to avoid malicious workers.
For all the experiments shown here, the results of the respective setup are averaged over at least 100 different single experiments.
Note that this methodology allows us to compare the algorithms in a robust setting where all of them have the same chance to succeed.

\smallsection{Landmarks dataset}
This dataset consists of 266 pictures of landmarks in two European cities, namely Paris, France, and Barcelona, Spain.
It is based on a picture classification dataset for visual object identification algorithms, \cite{Avrithis2010}, and contains 13 entities that are landmarks such as the Arc de Triomphe.
In each task, a crowd worker is shown a pair of pictures and decides whether they show the same landmark.
As shown in Table \ref{tab:datasets}, there exist a total of 352,450 record pairs out of which about 7.8\% are correctly identified as belonging to the same landmark. 
The crowd also identified another 4.6\% as positive matches even though the ground truth was negative.
As a result, 37.1\% of all positive answers should be in fact negative.
In total, about 2.8\% of all extracted answers are false negative answers and 84.8\% of all answers were true negatives.

\smallsection{Publications dataset}
This dataset is a compressed version of the {\sc Cora} dataset which is commonly used for string similarity evaluation \footnote{http://secondstring.sourceforge.net/}.
It is derived from the original dataset through relative compression, i.e.,~if an entity contained 5\% of the records in the original dataset, it will contain approximately the same percentage of randomly selected records in the smaller dataset.
Overall, this dataset has 3.47 records per entity on average and thus results in a higher amount of candidate negative decisions than the landmarks dataset.
In fact, only 2.42\% of all decision pairs are positive, out of which 64.04\% are identified as such.
We observe a high ratio of false positive decisions to the overall number of positive decisions:
Here, 36.2\% of all positive decisions are erroneous.
We also observe that the absolute number of false positive and false negative decisions in the answer set are about equal and overall lower than for the landmarks dataset.

{
\small
\begin{table}[t]
\centering
\resizebox{\columnwidth}{!}{
\begin{tabular}{|c|l|c|c|c|c|}
\cline{3-6}
\multicolumn{2}{c|}{}& \multicolumn{2}{c|}{\bf Landmarks} & \multicolumn{2}{c|}{\bf Publications}\\
\hline
 & \#records & \multicolumn{2}{c|}{266} & \multicolumn{2}{c|}{198} \\
{\sc Data} & \#rec. pairs & \multicolumn{2}{c|}{35,245} & \multicolumn{2}{c|}{19,503} \\
{\sc Statis-} & \#entities & \multicolumn{2}{c|}{13} & \multicolumn{2}{c|}{57} \\
{\sc tics} & avg \#rec./ent. & \multicolumn{2}{c|}{20.46} & \multicolumn{2}{c|}{3.47}\\
 & max \#rec./ent. & \multicolumn{2}{c|}{43} & \multicolumn{2}{c|}{14} \\
 & min \#rec./ent. & \multicolumn{2}{c|}{7} & \multicolumn{2}{c|}{1} \\
\hline
\multicolumn{2}{c|}{}& absolute & {$\%$} & absolute & {$\%$}\\
\hline
{\sc Crowd} & \#true positives & 27,512 & 7.8 & 3,023 & 1.6  \\
{\sc Statis-} & \#true negatives & 298,822 & 84.8 & 188,592 & 96.7 \\
{\sc tics} & \#false positives & 16,248 & 4.6 & 1,718 & 0.9\\
 & \#false negatives & 9,868 & 2.8 & 1,697 & 0.9\\
\hline
\end{tabular}
}
\caption{Datasets Overview}\label{tab:datasets}
\end{table}
}

\begin{figure*}[t]
\centering
\pgfplotstableread{graphdata/landmarks_approaches_sequential_fp03fn0.dat}{\approachesfp}
\pgfplotstableread{graphdata/landmarks_approaches_sequential_fp0fn03.dat}{\approachesfn}
\pgfplotstableread{graphdata/landmarks_approaches_sequential_fp03fn03.dat}{\approachesfnfp}
\begin{subfigure}[b]{.32\textwidth}
\begin{tikzpicture}[font=\small]
	\small
	\begin{axis}
	[xlabel=Crowd Accesses (in K),
	ylabel=F-Measure,
	legend style={at={(axis cs:25,.75)},anchor=north west, nodes=right, draw=none, inner sep = 0pt, outer sep = 0pt, font=\tiny},
	xticklabel style={/pgf/number format/.cd,fixed,precision=2},
    xtick={0,10,20,30,40,50,60},
    ytick={0,0.2,0.4,0.6,0.8,1.0},
    width=\columnwidth,
    height=.6\columnwidth
    ]
	\addplot [green2,very thick, mark options=solid] table [x={cost_feer}, y={feer}] {\approachesfp};
	\addplot [green1,very thick, mark options=solid] table [x={cost_fer}, y={fer}] {\approachesfp};
	\addplot [red1, very thick, mark options=solid] table [x={cost_cer5}, y={cer5}] {\approachesfp};
	\addplot [red1, densely dotted,very thick, mark options=solid] table [x={cost_cer9}, y={cer9}] {\approachesfp};
	\end{axis}%
\end{tikzpicture}
\caption{$f_p$ = 0.3, $f_n$ = 0}\label{fig:landmarkssynthfp03fn0}
\end{subfigure}
\begin{subfigure}[b]{.32\textwidth}
\begin{tikzpicture}[font=\small]
	\small
	\begin{axis}
	[xlabel=Crowd Accesses (in K),
	ylabel=F-Measure,
	xticklabel style={/pgf/number format/.cd,fixed,precision=2},
    xtick={0,10,20,30,40,50},
    ytick={0,0.2,0.4,0.6,0.8,1.0},
    width=\columnwidth,
    height=.6\columnwidth
    ]
	\addplot [green2,very thick, mark options=solid] table [x={cost_feer}, y={feer}] {\approachesfn};
	\addplot [green1,very thick, mark options=solid] table [x={cost_fer}, y={fer}] {\approachesfn};
	\addplot [red1, very thick, mark options=solid] table [x={cost_cer5}, y={cer5}] {\approachesfn};
	\addplot [red1, densely dotted,very thick, mark options=solid] table [x={cost_cer9}, y={cer9}] {\approachesfn};
	
	\end{axis}%
\end{tikzpicture}
\caption{$f_p$ = 0, $f_n$ = 0.3}\label{fig:landmarkssynthfp0fn03}
\end{subfigure}
\begin{subfigure}[b]{.32\textwidth}
\begin{tikzpicture}[font=\small]
	\small
	\begin{axis}
	[xlabel=Crowd Accesses (in K),
	ylabel=F-Measure,
	legend style={at={(axis cs:88,.65)},anchor=north west, nodes=right, draw=none, inner sep = 0pt, font=\tiny},
	xticklabel style={/pgf/number format/.cd,fixed,precision=2},
    xtick={0,40,80,120,160},
    xmax = 190,
    ytick={0,0.2,0.4,0.6,0.8,1.0},
    width=\columnwidth,
    height=.6\columnwidth
    ]
	\addplot [green2,very thick, mark options=solid] table [x={cost_feer}, y={feer}] {\approachesfnfp};
	\addlegendentry{{\sc Feer}}
	\addplot [green1,very thick, mark options=solid] table [x={cost_fer}, y={fer}] {\approachesfnfp};
	\addlegendentry{{\sc Fer}}
	\addplot [red1, very thick, mark options=solid] table [x={cost_cer5}, y={cer5}] {\approachesfnfp};
	\addlegendentry{{\sc Cer},$|v|=5$}
	\addplot [red1, densely dotted,very thick, mark options=solid] table [x={cost_cer9}, y={cer9}] {\approachesfnfp};
	\addlegendentry{{\sc Cer},$|v|=9$}
	
	\end{axis}%
\end{tikzpicture}
\caption{$f_p$ = $f_n$ = 0.3}\label{fig:landmarkssynthfp30fn03}
\end{subfigure}
\caption{Landmarks dataset, with noise through varying synthetic crowds for varying data interpretation models.}\label{fig:synthlandmarks}
\end{figure*}
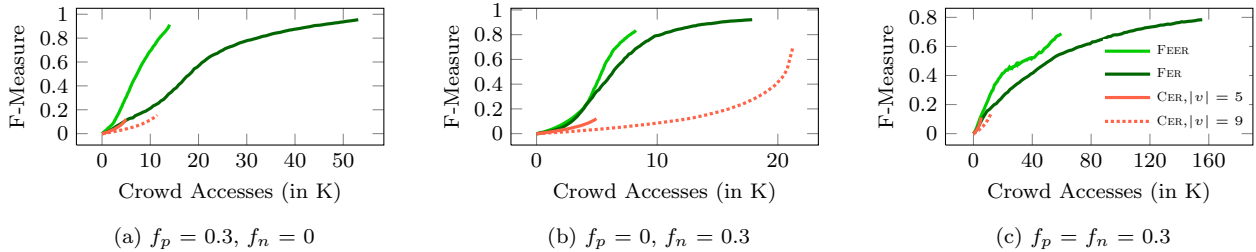

\begin{figure}[t]
\centering
\pgfplotstableread{graphdata/landmarks_approaches_sequential_fp0fn0.dat}{\approachesperf}
\begin{subfigure}[b]{.7\columnwidth}
\begin{tikzpicture}[font=\small]
	\small
	\begin{axis}
	[xlabel=Crowd Accesses (in K),
	ylabel=F-Measure,
	legend style={at={(axis cs:22,.8)},anchor=north west, nodes=right, draw=none, outer sep = 0pt, inner sep = 0pt, font=\tiny},
	xticklabel style={/pgf/number format/.cd,fixed,precision=2},
    xtick={0,10,20,30,40},
    ytick={0,0.2,0.4,0.6,0.8,1.0},
    width=\columnwidth,
    height=.6\columnwidth
    ]
	\addplot [green2,very thick, mark options=solid] table [x={cost_feer}, y={feer}] {\approachesperf};
	\addlegendentry{{\sc Feer}}
	\addplot [green1,very thick, mark options=solid] table [x={cost_fer}, y={fer}] {\approachesperf};
	\addlegendentry{{\sc Fer}}
	\addplot [red1, very thick, mark options=solid] table [x={cost_cer5}, y={cer5}] {\approachesperf};
	\addlegendentry{{\sc Cer}, $|v|=5$}
	\addplot [red1, densely dotted,very thick, mark options=solid] table [x={cost_cer9}, y={cer9}] {\approachesperf};
	\addlegendentry{{\sc Cer}, $|v|=9$}
	
	\end{axis}%
\end{tikzpicture}
\end{subfigure}
\caption{Landmarks dataset, with perfect synthetic crowd for varying data interpretation models.}\label{fig:landmarksperf}
\end{figure}
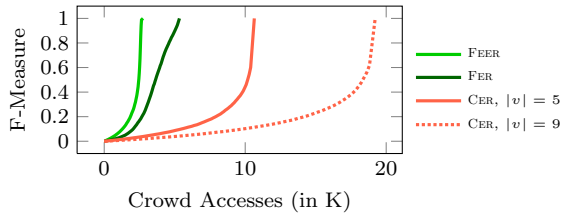

\smallsection{Algorithms}
We evaluate three different data interpretation algorithms from the core classes identified in Section \ref{sec:problem}.

\begin{itemize}
  \item {\sc Cer}. As state-of-the-art consensus-based strategy we choose CrowdER, \cite{DBLP:conf/sigmod/WangLKFF13}, which uses a majority-based decision strategy.
  It monotonically requests all candidate pairs from the crowd, merging those that are connected by a positive decision and can thus be classified as a consensus-based decision strategy.
  \item {\sc Fer}. This algorithm represents the class of fault-tolerant strategies.
  It incorporates the same non-repetitive queuing mechanism as {\sc Cer} but uses {\sc MinMax} as decision function with a quorum $q=3$ and repeats each task until $q$ or the edge budget ($b_E$=10) is reached.
  Experiments with other quorum values are omitted here but show similar trends.
  \item {\sc Feer}. This is an exhaustive fault-tolerant strategy which implements {\sc Fer} but maintains a non-monotonic queuing system:
  If an update to the internal {\sc MinMax} strategy causes a pair to become uncertain, it is inserted back into the queue.
\end{itemize}

This evaluation will also focus on the two task ordering techniques discussed in \Autoref{sec:next}: queuing strategies and parallelization.
We evaluate the three queuing strategies presented previously, {\sc ErS} which is a cautious mechanism that aims to optimize intermediate results, {\sc UrS} which internally orders its candidate pairs according to their level of uncertainty, and their hybrid {\sc HS} which cautiously processes positive and optimistically processes negative candidates.
The hybrid strategy will serve as default strategy if not otherwise declared.
To assess the impact of parallelization techniques, we implement as baseline a sequential process that iteratively asks for exactly one record pair.
We compare it to a combination of inter- and intra-task parallelization, i.e.,~if parallelized, the algorithm automatically computes a spanning tree over all entities and within an entity as well.
For all candidate pairs, it then computes the minimal necessary investment to reach quorum or consensus and issues the resulting record pairs as a batch.

\smallsection{Metrics \& Implementation}
This evaluation uses mainly two standardized metrics, quality and cost.
The cost of any experiment is simply measured as the number crowd accesses which is equivalent to the amount of requested record pairs.
To measure quality, we use precision and recall where precision is the percentage of record pairs that are correctly associated with the same entity and recall is the percentage of record pairs that we correctly assign to the same entity.
To provide a unified quality metric, the f-measure of these values is used as standard quality metric, defined as $\frac{P*R*2}{P+R}$.
We implemented all of the presented algorithms and strategies in Java, and experimented on a Linux machine with eight Intel Xeon L5520 cores (2.26GHz, cache 24MB).

\subsection{The Impact of Crowd Error}\label{subsec:crowderror}
To exemplify the impact of errors made by crowd workers, we synthetically generated two types of noise for the landmarks dataset.
The first noise is false positive noise, $f_p$, which describes the percentage of decisions where crowd workers wrongly classify records $r_i$ and $r_j$ to belong to the same entity when in fact they belong to different entities according to the ground truth.
Analogously, the second type of noise is false negative noise, $f_n$, where $r_i$ and $r_j$ are falsely assigned to the same entity.
\Autoref{fig:synthlandmarks} and \Autoref{fig:landmarksperf} show the results of this set of experiments, comparing different data interpretation models under uniform synthetic noise varying both $f_n$ and $f_p$.
We observe that if the crowd answers perfectly (\Autoref{fig:landmarksperf}), {\sc Fer} and {\sc Feer} show rapid quality improvement per cost unit.
{\sc Feer} outperforms {\sc Fer} because it dynamically adjust the priority queue, pushing potentially positive candidate pairs to the front of the queue and therefore maximizing the information gain per crowd access.
The cost/quality trade-off for {\sc Cer} is predictable and is correlated to the number of votes $v$ requested:
As the crowd answers perfectly, a smaller $v$ means having a comparabily better quality result for a lower budget.
When introducing noise into the experimental setup, we make the following three observations.

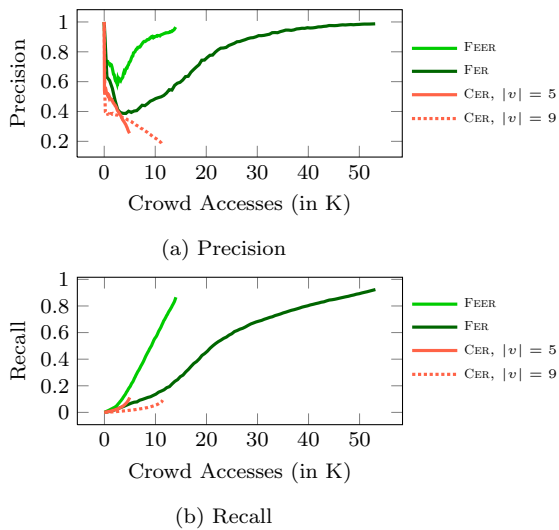
\begin{figure}[t]
\centering
\pgfplotstableread{graphdata/landmarks_approaches_sequential_fp03fn0_precision.dat}{\approachesfpprec}
\pgfplotstableread{graphdata/landmarks_approaches_sequential_fp03fn0_recall.dat}{\approachesfprec}
\begin{subfigure}[b]{.7\columnwidth}
\begin{tikzpicture}[font=\small]
	\small
	\begin{axis}
	[xlabel=Crowd Accesses (in K),
	ylabel=Precision,
	legend style={at={(axis cs:60,.9)},anchor=north west, nodes=right, draw=none, outer sep = 0pt, inner sep = 0pt, font=\tiny},
	xticklabel style={/pgf/number format/.cd,fixed,precision=2},
    xtick={0,10,20,30,40,50,60},
    ytick={0,0.2,0.4,0.6,0.8,1.0},
    width=\columnwidth,
    height=.6\columnwidth
    ]
	\addplot [green2,very thick, mark options=solid] table [x={cost_feer}, y={feer}] {\approachesfpprec};
	\addlegendentry{{\sc Feer}}
	\addplot [green1,very thick, mark options=solid] table [x={cost_fer}, y={fer}] {\approachesfpprec};
	\addlegendentry{{\sc Fer}}
	\addplot [red1, very thick, mark options=solid] table [x={cost_cer5}, y={cer5}] {\approachesfpprec};
	\addlegendentry{{\sc Cer}, $|v|=5$}
	\addplot [red1, densely dotted,very thick, mark options=solid] table [x={cost_cer9}, y={cer9}] {\approachesfpprec};
	\addlegendentry{{\sc Cer}, $|v|=9$}
	
	\end{axis}%
	
\end{tikzpicture}

\caption{Precision}\label{fig:landmarksfpprec}
\end{subfigure}

\begin{subfigure}[b]{.7\columnwidth}
\begin{tikzpicture}[font=\small]
	\small
	\begin{axis}
	[xlabel=Crowd Accesses (in K),
	ylabel=Recall,
	legend style={at={(axis cs:60,.9)},anchor=north west, nodes=right, draw=none, outer sep = 0pt, inner sep = 0pt, font=\tiny},
	xticklabel style={/pgf/number format/.cd,fixed,precision=2},
    xtick={0,10,20,30,40,50},
    ytick={0,0.2,0.4,0.6,0.8,1.0},
    width=\columnwidth,
    height=.6\columnwidth
    ]
	\addplot [green2,very thick, mark options=solid] table [x={cost_feer}, y={feer}] {\approachesfprec};
	\addlegendentry{{\sc Feer}}
	\addplot [green1,very thick, mark options=solid] table [x={cost_fer}, y={fer}] {\approachesfprec};
	\addlegendentry{{\sc Fer}}
	\addplot [red1, very thick, mark options=solid] table [x={cost_cer5}, y={cer5}] {\approachesfprec};
	\addlegendentry{{\sc Cer}, $|v|=5$}
	\addplot [red1, densely dotted,very thick, mark options=solid] table [x={cost_cer9}, y={cer9}] {\approachesfprec};
	\addlegendentry{{\sc Cer}, $|v|=9$}
	
	\end{axis}%
\end{tikzpicture}
\caption{Recall}\label{fig:landmarksfprec}
\end{subfigure}
\caption{Landmarks dataset, with noisy $f_p$=0.3 and $f_n$=0 crowd for varying data interpretation models.}\label{fig:landmarksfpprecrec}
\end{figure}

\smallsection{False Positive Information}
This is the type of crowd error that has the most impact on output quality (\Autoref{fig:landmarkssynthfp03fn0}):
In {\sc Cer}, transitive closure is applied for entity resolution which means that a positive decision leads to a merge of records into one entity.
As these decisions are never questioned, the recall and precision of {\sc Cer} (\Autoref{fig:landmarksfpprecrec}) soon decrease as records are merged that belong to different entities.
Generally, we observe that recall increases over time while precision (i.e.,~records correctly belong to the same entity) decreases over time when generating false positive noise.
{\sc Feer} and {\sc Fer} recover from an initial drop of precision because they question their resolutions whenever they encounter a contradiction.
This can be observed in \Autoref{fig:landmarksfpprec} through the increase in the respective precision curve for {\sc Feer} ({\sc Fer}) after approximately 3k (4k) crowd accesses.

\smallsection{False Negative Information}
If the crowd answers with false negative answers (\Autoref{fig:landmarkssynthfp0fn03}), the workers decide to keep records apart which should be in the same entity with a likelihood of $f_n$ $=$ 0.3.
We observe in this set of experiments that an increase in the number of questions asked ($v$) leads to a provable increase in quality for {\sc Cer}.
Here, with $v=9$ instead of $5$, the final ER solution reaches a f-measure of 0.69 instead of 0.12.
Compared to {\sc Feer} and {\sc Fer}, {\sc Cer} still provides a lower quality improvement over crowd accesses because in order to reach better quality, more budget has to be invested into finding an ER solution.
We notice also in this experiment that varying $f_n$ changes the recall of the results but never the precision.
This behavior is different to varying false positive information which influences both.
To understand the different behaviorisms, remember the definition of precision and recall:
Precision is negatively influenced through records being falsely assigned to the same entity, recall is negatively influenced by records in different entities that should be in the same entity according to the ground truth.
As records can never falsely belong to the same entity through false negative votes, precision is not influenced by this type of crowd error.

\smallsection{Noisy Crowd Information}
Combining both types of crowd error leads to a visible quality decrease for both {\sc Cer} variations and larger budget requirements for both {\sc Feer} and {\sc Fer} (\Autoref{fig:synthlandmarks}).
We again observe that increasing the budget for {\sc Cer} in fact positively influences the result quality but as we have shown for our real-world experiments (\Autoref{sec:experiments}), it never reaches the same level of quality as both fault-tolerant approaches.
The additionally required budget investment for these can be explained through the decision behavior of {\sc MinMax}:
If uncertain, it requests more information from the crowd, thus allocating more budget.
If the budget is granted, it leads to a steady improvement in result quality which itself is better than the {\sc Cer} result quality at any point in time.

\subsection{Landmarks Results}
The landmarks dataset is an interesting use case as it provides an environment where no similarity metric enhances the performance of the ER strategies.
Here, every candidate pair is initially equally likely.
To contrast our techniques, we now compare the fault-tolerant mechanisms with the traditional consensus-based approach using the same crowdsourced results obtained by posting tasks on Amazon Mechanical Turk for all approaches.
Additionally, we evaluate the different queuing strategies after which we highlight the advantages and disadvantages of parallelization.

\subsubsection{Data interpretation}
To compare the quality of fault-tolerant mechanisms to algorithms assuming a perfect world, the first set of experiments compares the cost/quality trade-off for {\sc Cer}, {\sc Fer}, and {\sc Feer} (\Autoref{fig:landmarksseq}).

\smallsection{Performance of {\sc Cer}}
\Autoref{fig:landmarksseqdata} shows the development of all data interpretation mechanisms in terms of quality over crowd access.
The low performance of {\sc Cer} in comparison to {\sc Feer} and {\sc Fer} (see difference in quality $\Delta_q$ and cost $\Delta_c$ in \Autoref{fig:landmarksseqdata}) is due to its sensitivity to false negative decisions which form 26.4\% of all candidate positive decisions in the landmarks dataset.
While {\sc Cer} maintains a minimum of 0.96 precision over time, its recall is at most 0.21 even if the number of votes considered $|v|$ is increased from 5 to 9 which only results in a decrease in cost/quality gain.
At this point, recall that a) positive decisions reduce the search space of the algorithm and b) the landmarks dataset has only 13 big entities.
Uncontested false negative decisions therefore have significant impact on result quality.

\begin{figure}[t]
\centering
\pgfplotstableread{graphdata/landmarks_approaches_sequential.dat}{\approachsequential}
\pgfplotstableread{graphdata/landmarks_connectratio_sequential.dat}{\connectsequential}
\pgfplotstableread{graphdata/landmarks_strategycomparison_sequential.dat}{\strategysequential}
\begin{subfigure}[b]{.7\columnwidth}
\begin{tikzpicture}[font=\small]
	\small
	\begin{axis}
	[xlabel=Crowd Accesses (in K),
	ylabel=F-Measure,
	legend style={at={(axis cs:40,.8)},anchor=north west, nodes=right, draw=none, outer sep = 0pt, inner sep = 0pt, font=\tiny, scale=.5},
	xticklabel style={/pgf/number format/.cd,fixed,precision=2},
    xtick={0,10,20,30,40,50},
    ytick={0,0.2,0.4,0.6,0.8,1.0},
    width=\columnwidth,
    height=.6\columnwidth
    ]
	\addplot [green2,very thick, mark options=solid] table [x={cost_feer}, y={feer}] {\approachsequential};
	\addlegendentry{{\sc Feer}}
	\addplot [green1,very thick, mark options=solid] table [x={cost_fer}, y={fer}] {\approachsequential};
	\addlegendentry{{\sc Fer}}
	\addplot [red1, very thick, mark options=solid] table [x={cost_cer5}, y={cer5}] {\approachsequential};
	\addlegendentry{{\sc Cer},$|v|=5$}
	\addplot [red1, densely dotted,very thick, mark options=solid] table [x={cost_cer9}, y={cer9}] {\approachsequential};
	\addlegendentry{{\sc Cer},$|v|=9$}
	
	\draw[<-, black!70, thick, dotted] (axis cs:27.401,0.8559162808) -- (axis cs:19.535,0.3421579722);
	\draw[<-, black, thick, dotted] (axis cs:10.989,0.7381462161) -- (axis cs:19.535,0.3421579722);
	
	\node[black!70] (d) at (axis cs: 28.5,.58) {\scriptsize$\Delta_c$=7.9};
	\node[black!70] (d) at (axis cs: 29.1,.43) {\scriptsize $\Delta_q$=0.51};
	\node[black] (d) at (axis cs: 3.9,.8) {\scriptsize$\Delta_c$=-8.6};
	\node[black] (d) at (axis cs: 3.4,.65) {\scriptsize $\Delta_q$=0.4};
	
	\end{axis}%
\end{tikzpicture}
\caption{Data Interpretation}\label{fig:landmarksseqdata}
\end{subfigure}

\begin{subfigure}[b]{.7\columnwidth}
\begin{tikzpicture}[font=\small]
	\small
	\begin{axis}
	[xlabel=Crowd Accesses (in K),
	ylabel=F-Measure,
	legend style={at={(axis cs:15.8,.61)},anchor=north west, nodes=right, draw=none, outer sep = -2pt, inner sep = -2pt, font=\tiny},
	xticklabel style={/pgf/number format/.cd,fixed,precision=2},
    xtick={0,10,20,30,40,50},
    ytick={0,0.2,0.4,0.6,0.8,1.0},
    width=\columnwidth,
    height=.6\columnwidth
    ]
    \addplot [green1,very thick, mark options=solid] table [x={cost_fer}, y={fer}] {\approachsequential};
	\addlegendentry{{\sc Fer}}
	\addplot [green2,very thick, mark options=solid] table [x={cost_cr0}, y={cr0}] {\connectsequential};
	\addlegendentry{{\sc Feer}, $\kappa$=0}
	\addplot [green2, densely dotted, very thick, mark options=solid] table [x={cost_cr01}, y={cr01}] {\connectsequential};
	\addlegendentry{{\sc Feer}, $\kappa$=0.1}
	\addplot [green2, dashed, very thick, mark options=solid] table [x={cost_cr02}, y={cr02}] {\connectsequential};
	\addlegendentry{{\sc Feer}, $\kappa$=0.2}
	\end{axis}%
\end{tikzpicture}
\caption{Connectivity Ratios}\label{fig:landmarksseqconnect}
\end{subfigure}

\begin{subfigure}[b]{.7\columnwidth}
\begin{tikzpicture}[font=\small]
	\small
	\begin{axis}
	[xlabel=Crowd Accesses (in K),
	ylabel=F-Measure,
	legend style={at={(axis cs:25,.7)},anchor=north west, nodes=right, draw=none, outer sep = 0pt, inner sep = 0pt, font=\tiny},
	xticklabel style={/pgf/number format/.cd,fixed,precision=2},
    xtick={0,5,10,15,20,25,30,35,40,45,50},
    ytick={0,0.2,0.4,0.6,0.8,1.0},
    width=\columnwidth,
    height=.6\columnwidth
    ]
	\addplot [red1, densely dotted,very thick, mark options=solid] table [x={cost_cau}, y={cau}] {\strategysequential};
	\addlegendentry{{\sc ErS}}
	\addplot [blue1, densely dotted,very thick, mark options=solid] table [x={cost_opt}, y={opt}] {\strategysequential};
	\addlegendentry{{\sc UrS}}
	\addplot [green2,very thick, mark options=solid] table [x={cost_hyb}, y={hyb}] {\strategysequential};
	\addlegendentry{{\sc HS}}
	\end{axis}%
\end{tikzpicture}
\caption{Querying Strategies}\label{fig:landmarksseqqueue}
\end{subfigure}
\caption{Landmarks dataset, sequential mode with varying data interpretation models, connectivity ratios, and querying strategies.}\label{fig:landmarksseq}
\end{figure}
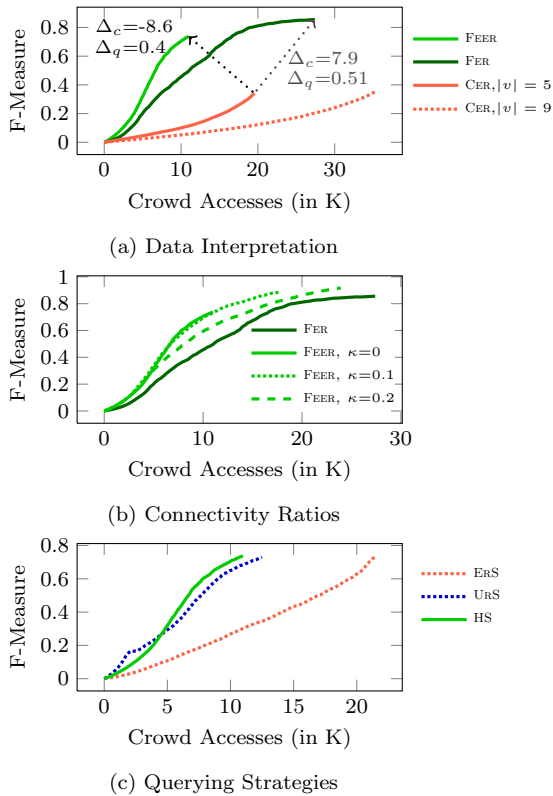

\smallsection{Performance of {\sc Fer} and {\sc Feer}}
As shown in \Autoref{fig:landmarksseqdata}, it is possible that {\sc Fer} outperforms {\sc Feer}.
While {\sc Feer} uses an adapting queuing system that obviously reduces the overall cost, {\sc Fer} requests candidate pairs as long as its queue is not empty.
Its monotonic queue is generated in the beginning and every candidate pair is polled from the queue in random order exactly once.
Given noisy answers from the crowd, this mechanism actually improves result quality intuitively because it is not aimed to minimize the cost but to ask every candidate pair at some point in time which is more costly but also more exhaustive than optimizing the queuing system. 
To verify this hypothesis, we implemented a modification of {\sc Feer} that allows us to vary the connectivity of the entities associated with each record $\kappa$ which varies the number of record pairs per entity combination.
A higher value of $\kappa$ resembles more record pairs that are requested to test the relationship of two entities.
For example, if the first entity contains four records and the second entity contains three records, there exist a total of 12 record pairs.
Instead of selecting one at random, it will select two random pairs if $\kappa$ is set to 0.2.
Increasing the connectivity of the entities has immediate consequences:
First, the cost of {\sc Feer} increases and second, the quality of {\sc Feer} improves significantly for this dataset as shown in \Autoref{fig:landmarksseqconnect}.
With $\kappa$ set to 0.2, we now observe a better quality to cost ratio for {\sc Feer} (result quality of 0.916 with a total cost of 23,918 cost units) than even {\sc Fer} can offer (result quality of 0.856 with a total cost of 27,197 cost units).

Finally, \Autoref{fig:landmarkstime} shows the performance comparison of our algorithms for this dataset in sequential mode without any modifications to the connectivity.
We observe that in this setup {\sc Fer} takes at most 8.12ms per update while the update cost of {\sc Feer} is 51.93ms.
The reason why {\sc Feer} is slower that {\sc Fer} lies in the readjustment of the queuing system:
As the dataset contains noise, record pairs get occasionally reinserted into the queue.
Generating these new record pairs and adjusting their position in the queue incurs computational overhead.
This overhead is then reflected in the execution time.

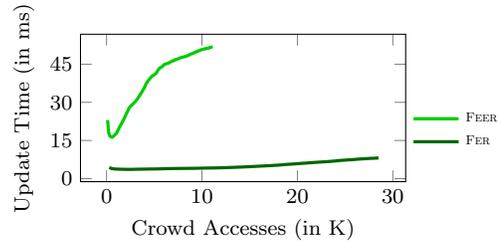
\begin{figure}[t]
\centering
\pgfplotstableread{graphdata/landmarks_time.dat}{\landmarkstime}
\begin{subfigure}[b]{.7\columnwidth}
\begin{tikzpicture}[font=\small]
	\small
	\begin{axis}
	[xlabel=Crowd Accesses (in K),
	ylabel= Update Time (in ms),
	legend style={at={(axis cs:32,10)},anchor=south west, nodes=right, draw=none, outer sep = 0pt, inner sep = 0pt, font=\tiny},
	xticklabel style={/pgf/number format/.cd,fixed,precision=2},
    xtick={0,10,20,30,40,50},
    ytick={0,15,30,45},
    width=\columnwidth,
    height=.6\columnwidth
    ]
	\addplot [green2,very thick, mark options=solid] table [x={cost_feer}, y={time_feer}] {\landmarkstime};
	\addlegendentry{{\sc Feer}}
	\addplot [green1,very thick, mark options=solid] table [x={cost_fer}, y={time_fer}] {\landmarkstime};
	\addlegendentry{{\sc Fer}}
	\end{axis}%
\end{tikzpicture}
\end{subfigure}
\caption{Time measurements for {\sc Fer} and {\sc Feer} in sequential mode for the landmarks dataset.}\label{fig:landmarkstime}
\end{figure}

\smallsection{Summary}
\Autoref{fig:landmarksseqdata} and \Autoref{fig:landmarksseqconnect} show that for this dataset {\sc Feer} reaches a better cost/quality trade-off faster than any other approach if the connectivity parameter $\kappa$ is adjusted.
Furthermore, all fault-tolerant strategies significantly outperform {\sc Cer} in terms of quality as we observe a minimal difference of at least 0.4 on the f-measure.
This improvement can be reached at lower cost than needed for {\sc Cer} for {\sc Feer}.
We also show that the low quality performance of {\sc Cer} does not depend on its available budget as the output quality only minimally increases with a higher budget (\Autoref{fig:landmarksseqdata}).

\begin{figure}[t]
\centering
\pgfplotstableread{graphdata/landmarks_approaches_parallel.dat}{\approachparallel}
\pgfplotstableread{graphdata/landmarks_strategycomparison_parallel.dat}{\strategyparallel}
\begin{subfigure}[b]{.7\columnwidth}
\begin{tikzpicture}[font=\small]
	\small
	\begin{axis}
	[xlabel=Crowd Accesses (in K),
	ylabel=F-Measure,
	legend style={at={(axis cs:43,0.8)},anchor=north west, nodes=right, draw=none, outer sep = 0pt, inner sep = 0pt, font=\tiny},
	xticklabel style={/pgf/number format/.cd,fixed,precision=2},
    xtick={0,10,20,30,40,50},
    ytick={0,0.2,0.4,0.6,0.8,1.0},
    width=\columnwidth,
    height=.6\columnwidth
    ]
	\addplot [green2,very thick, mark options=solid] table [x={cost_feer}, y={feer}] {\approachparallel};
	\addlegendentry{{\sc Feer}}
	\addplot [green2, dashed, very thick, mark options=solid] table [x={cost_feer02}, y={feer02}] {\approachparallel};
	\addlegendentry{{\sc Feer}, $\kappa=0.2$}
	\addplot [green1,very thick, mark options=solid] table [x={cost_fer}, y={fer}] {\approachparallel};
	\addlegendentry{{\sc Fer}}
	\addplot [red1, very thick, mark options=solid] table [x={cost_cer5}, y={cer5}] {\approachparallel};
	\addlegendentry{{\sc Cer}, $|v|=5$}
	\addplot [red1, densely dotted,very thick, mark options=solid] table [x={cost_cer9}, y={cer9}] {\approachparallel};
	\addlegendentry{{\sc Cer}, $|v|=9$}

	\draw[<-, black!70, thick, dotted] (axis cs:27.362,0.8624133374) -- (axis cs:19.493,0.3896961529);
	\draw[<-, black, thick, dotted] (axis cs:13.075,0.704404749) -- (axis cs:19.493,0.3896961529);
	
	\node[black!70] (d) at (axis cs: 27.6,.58) {\scriptsize$\Delta_c$=7.9};
	\node[black!70] (d) at (axis cs: 28.2,.43) {\scriptsize$\Delta_q$=0.47};
	\node[black] (d) at (axis cs: 10,.53) {\scriptsize$\Delta_c$=-6.4};
	\node[black] (d) at (axis cs: 10.2,.38) {\scriptsize$\Delta_q$=0.32};
	\end{axis}%
\end{tikzpicture}
\caption{Data Interpretation}\label{fig:landmarkspardata}
\end{subfigure}

\begin{subfigure}[b]{.7\columnwidth}
\begin{tikzpicture}[font=\small]
	\small
	\begin{axis}
	[xlabel=Crowd Accesses (in K),
	ylabel=F-Measure,
	legend style={at={(axis cs:20,.65)},anchor=north west, nodes=right, draw=none, outer sep = 0pt, inner sep = 0pt, font=\tiny},
	xticklabel style={/pgf/number format/.cd,fixed,precision=2},
    xtick={0,5,10,15,20,25,30,35,40,45,50},
    ytick={0,0.2,0.4,0.6,0.8,1.0},
    width=\columnwidth,
    height=.6\columnwidth
    ]
	\addplot [red1, densely dotted,very thick, mark options=solid] table [x={cost_cau}, y={cau}] {\strategyparallel};
	\addlegendentry{{\sc ErS}}
	\addplot [blue1, densely dotted,very thick, mark options=solid] table [x={cost_opt}, y={opt}] {\strategyparallel};
	\addlegendentry{{\sc UrS}}
	\addplot [green2, very thick, mark options=solid] table [x={cost_hyb}, y={hyb}] {\strategyparallel};
	\addlegendentry{{\sc HS}}
	\end{axis}%
\end{tikzpicture}
\caption{Querying Strategies}\label{fig:landmarksparqueue}
\end{subfigure}
\caption{Landmarks dataset, parallel execution mode with varying data interpretation models and querying strategies.}
\end{figure}
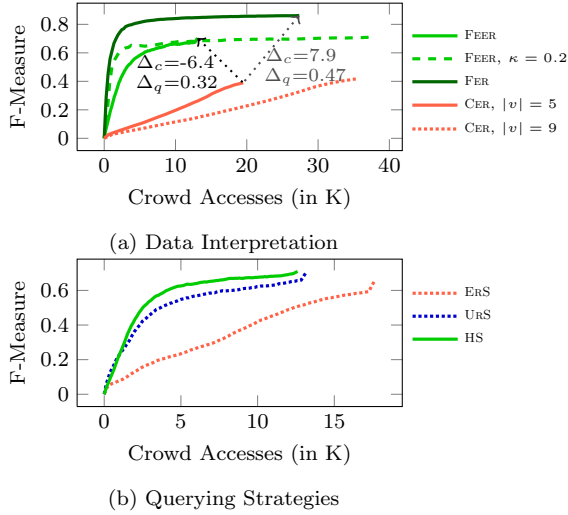

\subsubsection{Querying strategies}
When comparing the three proposed querying strategies for {\sc Feer} in \Autoref{fig:landmarksseqqueue}, similar cost/quality trade-offs in the real-world dataset than estimated with the synthetic crowd (\Autoref{fig:perfectcrowd}) can be observed:
Here, {\sc HS} clearly outperforms both alternative strategies as it provides better result quality at lower cost.
The exact improvement that {\sc HS} may offer depends on the underlying structure of the dataset and the crowd:
More noise decreases the performance of {\sc UrS} and highlights the robustness of {\sc HS}.
In comparison to {\sc ErS}, {\sc HS} provides better cost/quality trade-off if the number of entities in the result is large which highlights the exploitive nature of its processing of negative decision candidates.
In contrast to the slight cost increase in \Autoref{fig:perfectcrowd}, all queuing strategies are 40-60\% more expensive in the real-world experiment due to non-uniform error distribution.
More specifically, we observe for the landmarks dataset that 5.2\% of all pair-wise decisions are contested, i.e.,~there exist at least two crowd workers who have a different opinion than the other workers for the same pair of records.

\subsubsection{Parallelization}
Parallelizing the different data interpretation strategies (\Autoref{fig:landmarkspardata}) with intra and inter-cluster parallelization methods results in a worse quality/cost development for {\sc Feer} than {\sc Fer} though both are substantially better than {\sc Cer}.
The reason for the efficiency decrease of {\sc Feer} lies in the noisy information that parallelization induces:
It artificially creates race conditions, i.e.,~edges compete with each other for the highest score which results in them being marked as noisy and removed from the graph.
This behavior also explains the high cost of {\sc Feer} with $\kappa=0.2$:
It generates multiple connections per entity comparison thus further increasing the degree of parallelism and implicitly the noise.
Comparing these results to the results of the synthetic experiments previously obtained in \Autoref{fig:landmarkssynth01par}, we make similar observations for {\sc ErS} and {\sc UrS} than before while {\sc HS} increases its advantage over both strategies due to the non-uniform noise distribution in this dataset.

\begin{figure}[t]
\centering
\pgfplotstableread{graphdata/publications_approaches_sequential.dat}{\approachsequential}
\pgfplotstableread{graphdata/publications_approaches_sequential_s0.dat}{\approachsequentialszero}
\begin{subfigure}[b]{.7\columnwidth}
\begin{tikzpicture}[font=\small]
	\small
	\begin{axis}
	[xlabel=Crowd Accesses (in K),
	ylabel=F-Measure,
	legend style={at={(axis cs:52,.75)},anchor=north west, nodes=right, draw=none, outer sep = 0pt, inner sep = 0pt, font=\tiny},
	xticklabel style={/pgf/number format/.cd,fixed,precision=2},
    xtick={0,10,20,30,40,50,60},
    ytick={0,0.2,0.4,0.6,0.8,1.0},
    width=\columnwidth,
    height=.6\columnwidth
    ]
	\addplot [green2,very thick, mark options=solid] table [x={cost_feer}, y={feer}] {\approachsequentialszero};
	\addlegendentry{{\sc Feer}}
	\addplot [green1,very thick, mark options=solid] table [x={cost_fer}, y={fer}] {\approachsequentialszero};
	\addlegendentry{{\sc Fer}}
	\addplot [red1, very thick, mark options=solid] table [x={cost_cer5}, y={cer5}] {\approachsequentialszero};
	\addlegendentry{{\sc Cer}, $|v|=5$}
	\addplot [red1, densely dotted,very thick, mark options=solid] table [x={cost_cer9}, y={cer9}] {\approachsequentialszero};
	\addlegendentry{{\sc Cer}, $|v|=9$}
	
	\draw[<-, black, thick, dotted] (axis cs:15.077,0.7980440001) -- (axis cs:24.892,0.4401726128);
	\draw[<-, black!70, thick, dotted] (axis cs:14.202,0.7849376964) -- (axis cs:24.892,0.4401726128);
	
	\node[black] (d) at (axis cs: 27.5,.72) {\scriptsize$\Delta_c$=-9.8};
	\node[black] (d) at (axis cs: 27.8,.59) {\scriptsize$\Delta_q$=0.36};
	\node[black!70] (d) at (axis cs: 9,.65) {\scriptsize$\Delta_c$=-10.7};
	\node[black!70] (d) at (axis cs: 8.7,.52) {\scriptsize$\Delta_q$=0.35};
	
	\end{axis}%
\end{tikzpicture}
\caption{$0<s<1$}\label{fig:publicationssequapps0}
\end{subfigure}

\begin{subfigure}[b]{.7\columnwidth}
\begin{tikzpicture}[font=\small]
	\small
	\begin{axis}
	[xlabel=Crowd Accesses (in K),
	ylabel=F-Measure,
	legend style={at={(axis cs:2.5,.7)},anchor=north west, nodes=right, draw=none, outer sep = 0pt, inner sep = 0pt, font=\tiny},
	xticklabel style={/pgf/number format/.cd,fixed,precision=2},
    xtick={0,1,2,3,4,5,6},
    ytick={0,0.2,0.4,0.6,0.8,1.0},
    width=\columnwidth,
    height=.6\columnwidth
    ]
	\addplot [green2,very thick, mark options=solid] table [x={cost_feer}, y={feer}] {\approachsequential};
	\addlegendentry{{\sc Feer}}
	\addplot [green1,very thick, mark options=solid] table [x={cost_fer}, y={fer}] {\approachsequential};
	\addlegendentry{{\sc Fer}}
	\addplot [red1, very thick, mark options=solid] table [x={cost_cer5}, y={cer5}] {\approachsequential};
	\addlegendentry{{\sc Cer}, $|v|=5$}
	\addplot [red1, densely dotted,very thick, mark options=solid] table [x={cost_cer9}, y={cer9}] {\approachsequential};
	\addlegendentry{{\sc Cer}, $|v|=9$}
	
	\draw[<-, black!70, thick, dotted] (axis cs:1.637,0.7807331504) -- (axis cs:1.213,0.4432895355);
	\draw[<-, black, thick, dotted] (axis cs:0.947,0.7510319664) -- (axis cs:1.213,0.4432895355);
	
	\node[black!70] (d) at (axis cs: 1.85,.68) {\scriptsize$\Delta_c$=0.4};
	\node[black!70] (d) at (axis cs: 1.89,.53) {\scriptsize$\Delta_q$=0.34};
	\node[black] (d) at (axis cs: 1.2,.3) {\scriptsize$\Delta_c$=-0.3};
	\node[black] (d) at (axis cs: 1.21,.15) {\scriptsize$\Delta_q$=0.31};
	
	\end{axis}%
\end{tikzpicture}
\caption{$0.3<s<1$}\label{fig:publicationssequapp}
\end{subfigure}
\caption{Publications dataset, sequ. execution mode with varying data interpretation models and similarity thresholds.}
\end{figure}
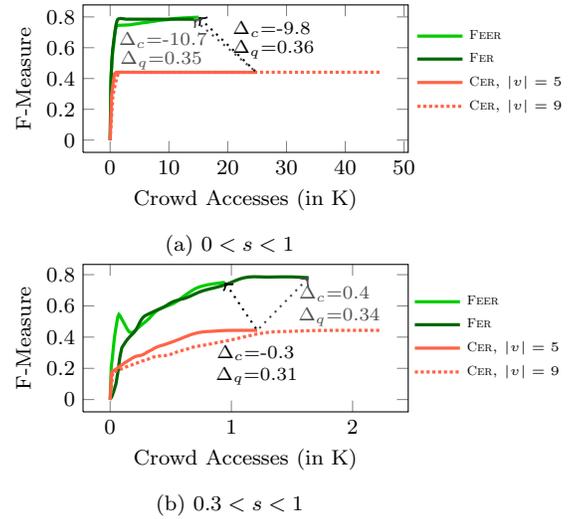

\subsection{Publications Results}\label{subsec:publications}
In contrast to the landmarks dataset, the publications dataset consists of records with string-based attributes which allow for an easy application of string similarity metrics to prune the search space and reduce information access cost.
That is, in this set of experiments, we first apply automated ER to obtain those candidate pairs that are uncertain which are then verified through crowdsourcing.
In that context, the similarity metric $s$ determines the level of matching with the Jaccard similarity as follows:
First, if $s$ exceeds an upper threshold, a positive decision with the maximum number of crowd votes is added to the votes graph.
Record pairs that are below the lower threshold of $s$ trigger a negative decision in the same manner.
Furthermore, $s$ is used to initiate all queues according to the current similarity belief.

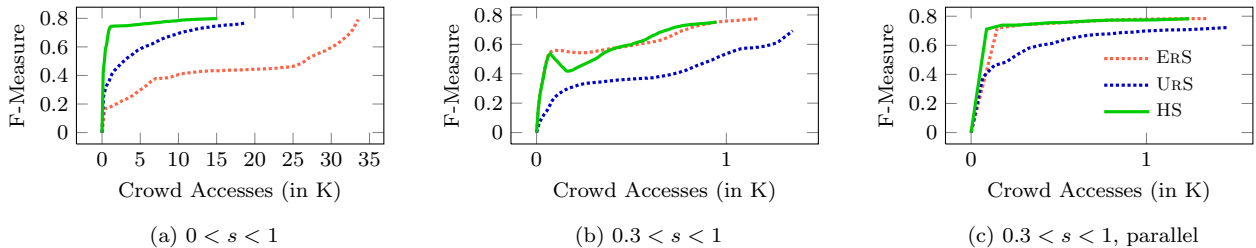
\begin{figure*}[t]
\centering
\pgfplotstableread{graphdata/publications_strategycomparison_sequential.dat}{\strategysequential}
\pgfplotstableread{graphdata/publications_strategycomparison_sequential_s0.dat}{\strategysequentialszero}
\pgfplotstableread{graphdata/publications_strategycomparison_parallel.dat}{\strategyparallel}
\begin{subfigure}[b]{.32\textwidth}
\begin{tikzpicture}[font=\small]
\definecolor{green1}{RGB}{0, 102, 0}
\definecolor{green2}{RGB}{0, 204, 0}
\definecolor{red1}{RGB}{255, 99, 71}
	\small
	\begin{axis}
	[xlabel=Crowd Accesses (in K),
	ylabel=F-Measure,
	xticklabel style={/pgf/number format/.cd,fixed,precision=2},
    xtick={0,5,10,15,20,25,30,35,40,45,50},
    ytick={0,0.2,0.4,0.6,0.8,1.0},
    width=\columnwidth,
    height=.6\columnwidth
    ]
	\addplot [red1, densely dotted,very thick, mark options=solid] table [x={cost_cau}, y={cau}] {\strategysequentialszero};
	\addplot [blue1, densely dotted,very thick, mark options=solid] table [x={cost_opt}, y={opt}] {\strategysequentialszero};
	\addplot [green2,very thick, mark options=solid] table [x={cost_hyb}, y={hyb}] {\strategysequentialszero};
	\end{axis}%
\end{tikzpicture}
\caption{$0<s<1$}\label{fig:publicationssequqss0}
\end{subfigure}
\begin{subfigure}[b]{.32\textwidth}
\begin{tikzpicture}[font=\small]
	\small
	\begin{axis}
	[xlabel=Crowd Accesses (in K),
	ylabel=F-Measure,
	xticklabel style={/pgf/number format/.cd,fixed,precision=2},
    xtick={0,1,2,3,4,5,6},
    ytick={0,0.2,0.4,0.6,0.8,1.0},
    width=\columnwidth,
    height=.6\columnwidth
    ]
	\addplot [red1, densely dotted,very thick, mark options=solid] table [x={cost_cau}, y={cau}] {\strategysequential};
	\addplot [blue1, densely dotted,very thick, mark options=solid] table [x={cost_opt}, y={opt}] {\strategysequential};
	\addplot [green2,very thick, mark options=solid] table [x={cost_hyb}, y={hyb}] {\strategysequential};
	\end{axis}%
\end{tikzpicture}
\caption{$0.3<s<1$}\label{fig:publicationssequqs}
\end{subfigure}
\begin{subfigure}[b]{.32\textwidth}
\begin{tikzpicture}[font=\small]
	\small
	\begin{axis}
	[xlabel=Crowd Accesses (in K),
	ylabel=F-Measure,
	legend style={at={(axis cs:.75,.6)},anchor=north west, nodes=right, draw=none, outer sep = 0pt, inner sep = 0pt, font=\scriptsize},
	xticklabel style={/pgf/number format/.cd,fixed,precision=2},
    xtick={0,1,2,3,4,5,6},
    ytick={0,0.2,0.4,0.6,0.8,1.0},
    width=\columnwidth,
    height=.6\columnwidth
    ]
	\addplot [red1, densely dotted,very thick, mark options=solid] table [x={cost_cau}, y={cau}] {\strategyparallel};
	\addlegendentry{{\sc ErS}}
	\addplot [blue1, densely dotted,very thick, mark options=solid] table [x={cost_opt}, y={opt}] {\strategyparallel};
	\addlegendentry{{\sc UrS}}
	\addplot [green2,very thick, mark options=solid] table [x={cost_hyb}, y={hyb}] {\strategyparallel};
	\addlegendentry{{\sc HS}}
	\end{axis}%
\end{tikzpicture}
\caption{$0.3<s<1$, parallel}\label{fig:publicationsparqs}
\end{subfigure}
\caption{Publications dataset with varying execution modes, querying strategies, and similarity thresholds.}
\end{figure*}

\subsubsection{Data interpretation}
We make three observations when comparing the different data interpretation methodologies (\Autoref{fig:publicationssequapp} and \Autoref{fig:publicationssequapps0}).
First, fault-tolerant mechanisms clearly outperform consensus-based mechanisms regardless of the similarity threshold $s$ that is applied to the record pairs.
Independent of the number of votes spent by {\sc Cer}, we observe a maximum precision of 0.92 and a maximum recall of 0.29, indicating that, again, {\sc Cer} keeps records in different entities that should belong to the same entity.
The second observation is that spent budget can be signifantly reduced if automatic similarity measures are used before applying the data interpretation mechanisms but it does not influence the quality outcome of either technique.
This suggests that the decisions made by the similarity metric are in fact decisions that the crowd workers make as well.
As a result, tightening the thresholds only prunes undisputed decisions and thus reduces crowdsourcing costs without modifying the result quality.
Third, we observe that for all approaches the cost/quality gain first increases and then stagnates, which is especially evident in \Autoref{fig:publicationssequapps0}.
The average entity size in the publications dataset is only 3.47 which results in a large number of negative decisions between entity pairs overall.
As cost is not bound here, all undetermined pairs have to be resolved until the algorithms are finished.
The significant number of negative decisions at the end of the execution (which is visibly more skewed than in the previously examined landmarks dataset) is due to the application of the similarity metric a priori, which boosts decisions that are more likely to be positive to the front of the priority queue.

\smallsection{Entity connectivity}
The concept of connectivity ratios that was introduced for the landmarks dataset can also be applied to the publications dataset.
Similarily to the previous experiment, we observe an increase in cost and quality when $\kappa$ is increased.
The changes to the resulting ER are not as signficant as observed for the landmarks dataset but are within 2\% of its quality while the cost increases by approximately 10\%.

\smallsection{Summary}
With these experiments, we show that fault-tolerant ER and automatic similarity metrics can be tightly integrated and result in a good quality solution for a smaller overall budget.
If the similarity metric suits the dataset, it can efficiently decrease the search space for crowdsourced ER without a loss in quality and an improvement in cost (here 90\% of the budget).
Furthermore, these experiments show that reducing the search space does not necessarily benefit the decision function, as the output quality is still dependent on how noisy votes are handled.
Here, both {\sc Feer} and {\sc Fer} significantly outperform {\sc Cer}.

\subsubsection{Querying strategies}
With a prepruned search space, querying strategies behave differently than in an unbiased decision space.
We observe that for an increased lower bound of $s$, {\sc ErS} performs better than {\sc UrS} (\Autoref{fig:publicationssequqs}) while {\sc UrS} dominates {\sc ErS} in an uninformed setup (\Autoref{fig:publicationssequqss0}).
Obviously, {\sc ErS} performs well if the candidate space is comprised of mostly positive candidate pairs.
On the other hand, it incurs higher cost if it likely asks negative decisions which occur more often if the lower threshold for $s$ is decreased.
The development of both {\sc HS} and {\sc UrS} are more straightforward as they rely on an exploratory way of evaluating the search space which makes them less dependent on prefiltering.

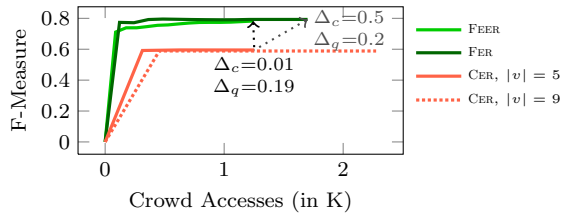
\begin{figure}[t]
\centering
\pgfplotstableread{graphdata/publications_approaches_parallel.dat}{\approachparallel}
\begin{subfigure}[b]{.7\columnwidth}
\begin{tikzpicture}[font=\small]
	\small
	\begin{axis}
	[xlabel=Crowd Accesses (in K),
	ylabel=F-Measure,
	legend style={at={(axis cs:2.6,.8)},anchor=north west, nodes=right, draw=none, outer sep = 0pt, inner sep = 0pt, font=\tiny},
	xticklabel style={/pgf/number format/.cd,fixed,precision=2},
    xtick={0,1,2,3,4,5,6},
    ytick={0,0.2,0.4,0.6,0.8,1.0},
    width=\columnwidth,
    height=.6\columnwidth
    ]
	\addplot [green2,very thick, mark options=solid] table [x={cost_feer}, y={feer}] {\approachparallel};
	\addlegendentry{{\sc Feer}}
	\addplot [green1,very thick, mark options=solid] table [x={cost_fer}, y={fer}] {\approachparallel};
	\addlegendentry{{\sc Fer}}
	\addplot [red1, very thick, mark options=solid] table [x={cost_cer5}, y={cer5}] {\approachparallel};
	\addlegendentry{{\sc Cer}, $|v|=5$}
	\addplot [red1, densely dotted,very thick, mark options=solid] table [x={cost_cer9}, y={cer9}] {\approachparallel};
	\addlegendentry{{\sc Cer}, $|v|=9$}
	
	\draw[<-, black!70, thick, dotted] (axis cs:1.701,0.7928202442) -- (axis cs:1.253,0.5950340054);
	\draw[<-, black, thick, dotted] (axis cs:1.244,0.7824859491) -- (axis cs:1.253,0.5950340054);
	
	\node[black!70] (d) at (axis cs: 2.05,.8) {\scriptsize$\Delta_c$=0.5};
	\node[black!70] (d) at (axis cs: 2.05,.65) {\scriptsize$\Delta_q$=0.2};
	\node[black] (d) at (axis cs: 1.25,.5) {\scriptsize$\Delta_c$=0.01};
	\node[black] (d) at (axis cs: 1.25,.35) {\scriptsize$\Delta_q$=0.19};

	\end{axis}%
\end{tikzpicture}
\end{subfigure}
\caption{Publications dataset ($0.3<s<1$), in parallel execution mode with varying data interpretation models.}\label{fig:publicationsparapp}
\end{figure}

\subsubsection{Parallelization}
Surprisingly, parallelizing the candidate pairs for the publications dataset has significant impact on the performance of {\sc Cer} (\Autoref{fig:publicationsparapp}).
Here, the observed f-measure changes because the recall of both {\sc Cer} variations increases to a maximum of .44 compared to the .29 achieved during sequential execution.
The reason for this behavior attests to the instability of consensus-based approaches:
It can be found in the order in which the record pairs that are extracted from the priority queue.
For this dataset specifically, the ordering of the candidate pairs removes a candidate pair from the initial batch of tasks that has been falsely identified by the crowd and as a result, the output quality improves.
In contrast, both {\sc Fer} and {\sc Feer} have consistent quality when executed sequentially or in parallel.
\Autoref{fig:publicationsparqs} shows the performance of the different querying strategies in parallel execution mode.
Similarily to \Autoref{fig:publicationssequqs}, we note that {\sc ErS} is a valid alternative queuing strategy to {\sc HS} given a prepruned search space, even if its cost is slightly higher than for {\sc HS}.

\subsection{Discussion}
Fault-tolerance is a requirement for entity resolution when handling unreliable data sources such as the crowd as shown in this section for two different datasets.
Noisy information has direct impact on result quality  and cannot be recovered from if the applied ER mechanism is not aware of these imperfections.
Furthermore, we show that choices concerning task ordering are essential to the success of any ER mechanism:
Queuing strategies as well as parallelization techniques impact the result quality and cost and cause different results in a noisy as in a perfect execution environment.
Specifically, we observe that parallelization may improve execution time, i.e.,~the end-to-end time spent on the ER process, but it increases the allocated budget and often does not achieve the same ER quality level as a sequential execution could provide.
The relative success of the presented queuing strategies then is dependent on the dataset itself, the size of the entities it contains, and also whether pre-pruning is an option.
We have shown that a hybrid queuing strategy is a robust mechanism to order tasks consciously of the current state of the votes graph as alternative to pure error or uncertainty reduction strategies.

\smallsection{Scalability}
For the sake of completeness, these experiments have been executed on datasets that were collected through a crowdsourcing platform.
As a result, these datasets are limited in size though experiments with synthetically generated larger datasets show the same tendencies for all algorithms and strategies (see \Autoref{subsec:crowderror} for an example of how synthetic experiments are conducted).
We argue that as shown in the above real-world experiments, the state-of-the-art mechanisms incur comparable cost in terms of crowd accesses.
Thus, fault-tolerant mechanisms on average provide higher quality results at the same cost because they are able to prioritize and question important record pairs that are central to the entity resolution solution.
For larger datasets than examined in this set of experiments, we can imagine techniques such as automatic similarity metrics to minimize the candidate space for record pairs similar to those applied on the publications dataset here.
Note that this does not diminish the impact of fault-tolerant decision mechanisms on the output quality as shown in our experiments (\Autoref{subsec:publications}).

In terms of computational performance, our experiments show that the entity size as well as noise level in the answer set are correlated to the update propagation performance of {\sc update} (Algorithm~\ref{alg:recUpdate}).
That is, large clusters cause a large amount of positive decisions in the votes graph which in return means more update propagation as positive edges are always traversed independent of whether the candidate path is positive or negative.
Furthermore, noisy edges obviously require more computation because every updated edge needs to be propagated into the votes graph.
Here, we argue that computational scalability is often not a problem for the execution engine because
\begin{itemize}
  \item the crowd is slower than the time taken to update the votes graph especially if it is constructed as a chain similar to what all presented approaches here try to achieve by leveraging transitivity.
  \item parallelization as described previously allows a variety of workers to respond to several tasks at the same time thus decreasing the end-to-end time spent on an ER problem.
\end{itemize}

\smallsection{(Crowdsourced) Entity Resolution}
Automatic ER algorithms and their corresponding approximations, \cite{DBLP:journals/ml/BansalBC04,DBLP:journals/pvldb/HassanzadehCML09}, are useful alternatives to crowdsourced ER algorithms in a variety of use cases:
For example for the Cora dataset which contains text-based content, string-based similarity metrics have been shown to provide high quality output.
In this paper, the purpose of using this dataset is to be comparable to previously done research in the same area.
The use cases that we target with crowdsourced ER for real-world use cases are exemplified better through the landmarks dataset where pair-wise similarities are not straightforward to compute.
In fact, identification of objects in pictures is still a hard task for computers, \cite{DBLP:journals/corr/WeyandL14}.
In these cases, crowdsourced ER can be used to enhance and complement object identification either as a standalone solution or in collaboration with automatic similarity measurements that have been developed for visual computing.
Analogous, we imagine crowdsourced ER to be used in other domains where information is not of the same data type or cannot be well correlated with automatic measures.

\section{Related Work}\label{sec:related}

\smallsection{Entity Resolution}
Entity resolution (also known as entity reconciliation, duplicate detection, or record linkage) is a critical task for data integration and cleaning. 
It has been studied extensively for several decades (see \cite{DBLP:journals/tkde/ElmagarmidIV07} for a survey). 
There are a variety of approaches to ER ranging from local decision functions such as transitive closure, \cite{DBLP:journals/pvldb/HassanzadehCML09}, to global objective functions such as cut or correlation clustering, \cite{DBLP:journals/ml/BansalBC04}.
In this work, we combine the idea of having cohesive clusters of records with the quality guarantees that local decisions can provide, i.e.,~if the crowd argues for one decision over another, our algorithms ensure consistency with that decision.
For automatic ER algorithms, cost optimization is often not as essential as in crowdsourced ER because it is not directly correlated with monetary cost.
Nevertheless, approaches such as progressive ER, \cite{DBLP:journals/pvldb/AltowimKM14}, and incremental ER, \cite{DBLP:journals/pvldb/GruenheidDS14}, work under the assumption that the information gain per update or new element to resolve should be maximized.
This idea is similar to what we want to achieve with the presented queuing strategies, although our mechanisms differ in that the votes of the crowd are iteratively collected and decisions may change over time, i.e.,~they become invalidated, which is not the case for automatic ER approaches.
Another interesting line of work if adjusted to the crowdsourcing context is scalable entity resolution \cite{DBLP:conf/kdd/GetoorM13,DBLP:journals/pvldb/RastogiDG11}.
Here, the premise is to break down large entity resolution problems into smaller parts to enhance performance.
It is well applicable in the context of crowdsourced entity resolution although it still assumes perfect knowledge about record relationships which is different than from the underlying assumptions of fault-tolerant entity resolution.

\smallsection{Crowdsourced Entity Resolution}
Recently, hybrid human/machine entity resolution algorithms have attempted to automatically integrate humans as part of the ER process to increase the reconciliation quality \cite{lee2013hybrid,DBLP:conf/nips/YiJJJY12}.
In \cite{Wang:2012} the authors combine automatic machine learning techniques with crowdsourcing, whereas \cite{DBLP:conf/sigmod/WangLKFF13} extends the work to further reduce the cost of crowdsourcing by taking transitive relationships into account and also adjusting the crowdsourcing process according to automated similarity measurements.
One of the assumptions that is commonly made in previous work is to consider worker quality as an orthogonal problem as well as not taking negative feedback from humans (i.e.,~that two entities do not match) into account when creating an ER solution \cite{DBLP:conf/sigmod/WangLKFF13,DBLP:journals/pvldb/WhangLG13}. 
Thus, eventual conflicts in the crowdsourced comparisons are discarded or ignored.
In contrast, our approach uses all available information without filtering any of the retrieved data even if contradicting.
Additionally, we closely examine queuing strategies that make a decision based on all available information rather than using a subset of information to determine the next record pair that is requested from the crowd \cite{DBLP:journals/pvldb/VesdapuntBD14}.
The strategies that we evaluate in this context allow users to determine whether they want to put emphasis on high precision or high recall in their result set to allow for flexibility when constructing the ER solution.
Furthermore, there has been work on probabilistic crowdsourced ER \cite{ilprints1097} which proposes a maximum likelihood approach but the proposed strategy is NP-hard and thus infeasible to compute in an online setting (see \Autoref{subsec:strategies} for details).
The techniques shown in this work are specifically designed for efficient online computation.
Additionally, we provide quality guarantees and explore different priorization techniques for record pairs which has not been discussed in any of the related crowdsourced ER works.

\smallsection{Crowd Quality}
Since defining and computing good similarity measures is not always possible, there has been further work \cite{Bellare:2012:ASE:2339530.2339707,DBLP:conf/sigmod/Gokhale2014,DBLP:conf/nips/GomesWKP11,yi_crowdclustering_2012} that minimize the use of distance functions for record comparisons.
These methods either rely on Bayesian modelling or similar mechanisms to approximate the answers of their data sources after extensive data collection or apply knowledge specific to a certain platform and its characteristics \cite{Bozzon:2013:RC:2488388.2488403,Ipeirotis:2010:QMA:1837885.1837906,Kosinski:2012:CIM:2380718.2380739}.
Error intervals for crowd workers or error estimates per worker group are alternative ways to model worker quality \cite{DBLP:conf/kdd/JoglekarGP13}.
In contrast, our method provides good ER results in the absence of similarity distances and is able to provide an ER result without any prior training of our mechanism. 
Information on the quality of crowd workers can be leveraged with our approach by requiring high quality workers, for example determined through their behavioral patterns \cite{Kazai:2011:WTP:2063576.2063860}, to answer the current top candidate pair.
Thus, this type of research can be used to enhance the computed ER result.
Nevertheless, noise in the answer set cannot be excluded categorically even if worker quality is high as they may make mistakes and provide erroneous information.

\smallsection{Crowdsourced Database Operators}
There has been a lot of research on crowdsourced operators (filtering, top-k, and entity resolution) under the assumption of predefined error behavior of the crowd workers.
This research can be divided into two categories:
Approaches that rely on the crowd to give answers that can be monotonously aggregated \cite{Wang:2012,Whang:2012} and those techniques that take specific error behavior of the crowd into consideration \cite{Davidson:2013:UCT:2448496.2448524,DBLP:conf/sigmod/GuoPG12,Parameswaran:2012:CAF:2213836.2213878}.
Our approaches vary from the first group of algorithms as we tolerate and in fact embrace noisy behavior of the crowd.
To the best of our knowledge, none of the techniques developed in the second group of algorithms can be used to resolve the entity resolution problem efficiently.

\smallsection{Performance Optimizations}
Orthogonal, as it can be easily employed on top of our fault-tolerant entity resolution framework, but relevant to our results is research that focuses on finding methods to improve the information gain from the crowd.
That is, instead of looking at binary record comparisons as we do here, there have been alternative methods to batch tasks or to define interfaces specifically for certain crowdsourcing tasks that enable crowd workers to be more precise in their answers \cite{komarov13:crowdsourcing} and to convey more information in a single task \cite{DBLP:journals/pvldb/0002KMMO12}.
Additionally, machine learning has been used to characterize the crowd as well as to optimize budget spending patterns \cite{DBLP:journals/ior/KargerOS14,Lease11}.
These spending patterns are similar to the queuing strategies that we present in this work although the idea in this type of work is to determine and leverage these patterns per worker.
In contrast, we make more general observations on the applicability of recall resp. precision-oriented queuing strategies and how varying these strategies modifies output quality.

\section{Conclusion}{\label{sec:conclusion}}

In this work, we discussed the problem of entity resolution with unreliable data collected through crowdsourcing.
To handle this type of noisy data, we defined fault-tolerant mechanisms that interpret the pair-wise decisions on record relationships made by the crowd.
We then discussed how these mechanisms can be integrated into an incremental ER framework that computes a consistent ER solution on the fly.
In the second part of this work, we showed how uncertainty affects the next-crowdsource problem in terms of result quality and budget allocation.
In that context, we evaluated different queuing strategies that order tasks according to different objectives for example minimizing the error or uncertainty first.
Furthermore, intra- and inter-task parallelization mechanisms were added as execution modes to all techniques in order to evaluate the impact of task parallelization on the result quality and cost of the ER solutions.
The need for fault-tolerant mechanisms was clearly shown in the experimental evaluation where the devised fault-tolerant mechanisms outperform consensus-based strategies significantly for two real-world datasets.
Additionally, the evaluation clearly shows the impact of task ordering and parallelization mechanisms on crowdsourced entity resolution.
Note that the techniques presented in this paper are not restricted to being used in the context of crowdsourcing only but can be used for any unreliable data source.

\balance

\bibliographystyle{abbrv}
\bibliography{library}

\begin{thebibliography}{10}

\bibitem{amt}
{Amazon Mechanical Turk}.
\newblock {\em {www.mturk.com}}.

\bibitem{DBLP:journals/pvldb/AltowimKM14}
Y.~Altowim, D.~V. Kalashnikov, and S.~Mehrotra.
\newblock {Progressive Approach to Relational Entity Resolution}.
\newblock {\em {PVLDB}}, 7(11):999--1010, 2014.

\bibitem{Avrithis2010}
Y.~Avrithis, Y.~Kalantidis, G.~Tolias, and E.~Spyrou.
\newblock {Retrieving Landmark and Non-landmark Images from Community Photo
  Collections}.
\newblock {\em MM}, pages 153--162, 2010.

\bibitem{DBLP:journals/ml/BansalBC04}
N.~Bansal, A.~Blum, and S.~Chawla.
\newblock {Correlation Clustering}.
\newblock {\em Machine Learning}, 56(1-3):238--247, 2004.

\bibitem{Bellare:2012:ASE:2339530.2339707}
K.~Bellare, S.~Iyengar, A.~G. Parameswaran, and V.~Rastogi.
\newblock {Active Sampling for Entity Matching}.
\newblock pages 1131--1139, 2012.

\bibitem{Bozzon:2013:RC:2488388.2488403}
A.~Bozzon, M.~Brambilla, S.~Ceri, and A.~Mauri.
\newblock {Reactive Crowdsourcing}.
\newblock {\em WWW}, pages 153--164, 2013.

\bibitem{Davidson:2013:UCT:2448496.2448524}
S.~B. Davidson, S.~Khanna, T.~Milo, and S.~Roy.
\newblock {Using the Crowd for Top-K and Group-By Queries}.
\newblock {\em ICDT}, pages 225--236, 2013.

\bibitem{DBLP:journals/tkde/ElmagarmidIV07}
A.~K. Elmagarmid, P.~G. Ipeirotis, and V.~S. Verykios.
\newblock {Duplicate Record Detection: A Survey}.
\newblock {\em IEEE TKDE}, 19(1):1--16, 2007.

\bibitem{Fagin}
R.~Fagin and E.~L. Wimmers.
\newblock {A Formula for Incorporating Weights into Scoring Rules}.
\newblock {\em Theoretical Computer Science}, 239(2):309--338, 2000.

\bibitem{DBLP:conf/kdd/GetoorM13}
L.~Getoor and A.~Machanavajjhala.
\newblock {Entity resolution for big data}.
\newblock {\em KDD}, page 1527, 2013.

\bibitem{DBLP:conf/sigmod/Gokhale2014}
C.~Gokhale, S.~Das, A.~Doan, J.~F. Naughton, N.~Rampalli, J.~Shavlik, and
  X.~Zhu.
\newblock {Corleone: Hands-off Crowdsourcing for Entity Matching}.
\newblock {\em SIGMOD}, pages 601--612, 2014.

\bibitem{DBLP:conf/nips/GomesWKP11}
R.~Gomes, P.~Welinder, A.~Krause, and P.~Perona.
\newblock Crowdclustering.
\newblock {\em NIPS}, pages 558--566, 2011.

\bibitem{DBLP:journals/pvldb/GruenheidDS14}
A.~Gruenheid, X.~L. Dong, and D.~Srivastava.
\newblock {Incremental Record Linkage}.
\newblock {\em {PVLDB}}, 7(9):697--708, 2014.

\bibitem{gruenheidER2012tech}
A.~Gruenheid, D.~Kossmann, S.~Ramesh, and F.~Widmer.
\newblock {Crowdsourcing Entity Resolution: When is A=B?}
\newblock Technical report, ETH Zurich, 2012.

\bibitem{DBLP:conf/sigmod/GuoPG12}
S.~Guo, A.~G. Parameswaran, and H.~Garcia{-}Molina.
\newblock {So Who Won?: Dynamic Max Discovery with the Crowd}.
\newblock {\em SIGMOD}, pages 385--396, 2012.

\bibitem{DBLP:journals/pvldb/HassanzadehCML09}
O.~Hassanzadeh, F.~Chiang, R.~J. Miller, and H.~C. Lee.
\newblock Framework for evaluating clustering algorithms in duplicate
  detection.
\newblock {\em PVLDB}, 2(1):1282--1293, 2009.

\bibitem{Ipeirotis:2010:QMA:1837885.1837906}
P.~G. Ipeirotis, F.~Provost, and J.~Wang.
\newblock {Quality Management on Amazon Mechanical Turk}.
\newblock {\em SIGKDD Workshop on Human Computation}, pages 64--67, 2010.

\bibitem{DBLP:conf/kdd/JoglekarGP13}
M.~Joglekar, H.~Garcia{-}Molina, and A.~G. Parameswaran.
\newblock {Evaluating the Crowd with Confidence}.
\newblock {\em {SIGKDD}}, pages 686--694, 2013.

\bibitem{DBLP:journals/ior/KargerOS14}
D.~R. Karger, S.~Oh, and D.~Shah.
\newblock {Budget-Optimal Task Allocation for Reliable Crowdsourcing Systems}.
\newblock {\em CoRR}, 62(1), 2014.

\bibitem{Kazai:2011:WTP:2063576.2063860}
G.~Kazai, J.~Kamps, and N.~Milic-Frayling.
\newblock {Worker Types and Personality Traits in Crowdsourcing Relevance
  Labels}.
\newblock {\em CIKM}, pages 1941--1944, 2011.

\bibitem{komarov13:crowdsourcing}
S.~Komarov, K.~Reinecke, and K.~Z. Gajos.
\newblock {Crowdsourcing Performance Evaluations of User Interfaces}.
\newblock {\em CHI}, pages 207--216, 2013.

\bibitem{Kosinski:2012:CIM:2380718.2380739}
M.~Kosinski, Y.~Bachrach, G.~Kasneci, J.~Van-Gael, and T.~Graepel.
\newblock {Crowd IQ: Measuring the Intelligence of Crowdsourcing Platforms}.
\newblock {\em Proceedings of the 4th Annual ACM Web Science Conference}, 2012.

\bibitem{Lease11}
M.~Lease.
\newblock {On Quality Control and Machine Learning in Crowdsourcing.}
\newblock {\em AAAI Workshop}, 11, 2011.

\bibitem{lee2013hybrid}
J.~Lee, H.~Cho, J.-W. Park, Y.-R. Cha, S.-W. Hwang, Z.~Nie, and J.-R. Wen.
\newblock {Hybrid Entity Clustering Using Crowds and Data}.
\newblock {\em PVLDB}, 22(5):711--726, Oct. 2013.

\bibitem{DBLP:journals/pvldb/0002KMMO12}
A.~Marcus, D.~Karger, S.~Madden, R.~Miller, and S.~Oh.
\newblock {Counting with the crowd}.
\newblock {\em PVLDB}, pages 109--120, 2013.

\bibitem{Nowak:2010:RAV:1743384.1743478}
S.~Nowak and S.~R\"{u}ger.
\newblock {How Reliable Are Annotations via Crowdsourcing: A Study About
  Inter-annotator Agreement for Multi-label Image Annotation}.
\newblock {\em MIR}, pages 557--566, 2010.

\bibitem{Parameswaran:2012:CAF:2213836.2213878}
A.~G. Parameswaran, H.~Garcia-Molina, H.~Park, N.~Polyzotis, A.~Ramesh, and
  J.~Widom.
\newblock {CrowdScreen: Algorithms for Filtering Data with Humans}.
\newblock {\em SIGMOD}, pages 361--372, 2012.

\bibitem{DBLP:journals/pvldb/RastogiDG11}
V.~Rastogi, N.~N. Dalvi, and M.~N. Garofalakis.
\newblock {Large-Scale Collective Entity Matching}.
\newblock {\em {PVLDB}}, 4(4):208--218, 2011.

\bibitem{schulze11}
M.~Schulze.
\newblock {A new Monotonic, Clone-Independent, Reversal Symmetric, and
  Condorcet-Consistent Single-Winner Election Method}.
\newblock {\em Social Choice and Welfare}, 36, 2011.

\bibitem{DBLP:series/synthesis/2012Settles}
B.~Settles.
\newblock {\em Active Learning}.
\newblock Synthesis Lectures on Artificial Intelligence and Machine Learning.
  Morgan {\&} Claypool Publishers, 2012.

\bibitem{singhal2012introducing}
A.~Singhal.
\newblock {Introducing the Knowledge Graph: Things, not Strings}.
\newblock {\em Official Google Blog}, 2012.

\bibitem{fbentitiesgraph}
E.~Sun and V.~Iyer.
\newblock Under the hood: The entities graph.
\newblock {\em Facebook Engineering Blog}, 2013.

\bibitem{ilprints1097}
V.~Verroios and H.~Garcia-Molina.
\newblock Entity resolution with crowd errors.
\newblock Technical report, Stanford University.

\bibitem{DBLP:journals/pvldb/VesdapuntBD14}
N.~Vesdapunt, K.~Bellare, and N.~N. Dalvi.
\newblock {Crowdsourcing Algorithms for Entity Resolution}.
\newblock {\em {PVLDB}}, 7(12):1071--1082, 2014.

\bibitem{Wang:2012}
J.~Wang, T.~Kraska, M.~J. Franklin, and J.~Feng.
\newblock {CrowdER: Crowdsourcing Entity Resolution}.
\newblock {\em PVLDB}, 5(11):1483--1494, July 2012.

\bibitem{DBLP:conf/sigmod/WangLKFF13}
J.~Wang, G.~Li, T.~Kraska, M.~J. Franklin, and J.~Feng.
\newblock {Leveraging Transitive Relations for Crowdsourced Joins}.
\newblock {\em SIGMOD}, pages 229--240, 2013.

\bibitem{DBLP:journals/corr/WeyandL14}
T.~Weyand and B.~Leibe.
\newblock {Visual Landmark Recognition from Internet Photo Collections: {A}
  Large-Scale Evaluation}.
\newblock {\em CoRR}, abs/1409.5400, 2014.

\bibitem{DBLP:journals/pvldb/WhangLG13}
S.~E. Whang, P.~Lofgren, and H.~Garcia-Molina.
\newblock {Question Selection for Crowd Entity Resolution}.
\newblock {\em PVLDB}, 6(6):349--360, Apr. 2013.

\bibitem{Whang:2012}
S.~E. Whang, D.~Marmaros, and H.~Garcia-Molina.
\newblock {Pay-As-You-Go Entity Resolution}.
\newblock {\em IEEE TKDE}, 25(5):1111--1124, May 2013.

\bibitem{yi_crowdclustering_2012}
J.~Yi, R.~Jin, A.~K. Jain, and S.~Jain.
\newblock {Crowdclustering with Sparse Pairwise Labels: A Matrix Completion
  Approach}.
\newblock {\em {AAAI} Workshop}, 2012.

\bibitem{DBLP:conf/nips/YiJJJY12}
J.~Yi, R.~Jin, A.~K. Jain, S.~Jain, and T.~Yang.
\newblock {Semi-Crowdsourced Clustering: Generalizing Crowd Labeling by Robust
  Distance Metric Learning}.
\newblock {\em NIPS}, pages 1772--1780, 2012.

\end{thebibliography}

\end{document}